\newcommand{\footremember}[2]{%
    \footnote{#2}
    \newcounter{#1}
    \setcounter{#1}{\value{footnote}}%
}
\newcommand{\footrecall}[1]{%
	    \footnotemark[\value{#1}]%
} 
\title{Characterization and Lower Bounds for Branching Program Size using Projective Dimension}
\author{Krishnamoorthy Dinesh\footremember{a}{Indian Institute of Technology Madras, Chennai, India.
(\texttt{\{kdinesh,sajin,jayalal\}@cse.iitm.ac.in})} \and Sajin Koroth\footrecall{a} \and  Jayalal Sarma\footrecall{a}
}
\theoremstyle{plain}
\newtheorem{theorem}{Theorem}[section]
\newtheorem{property}[theorem]{Property}
\newtheorem{lem}[theorem]{Lemma} 
\newtheorem{proposition}[theorem]{Proposition}
\newtheorem{claim}[theorem]{Claim}
\theoremstyle{remark}
\newtheorem{remark}[theorem]{Remark}
\theoremstyle{definition}
\newtheorem{definition}[theorem]{Definition}
\newcommand{\pd}{\mbox{{\sf pd}}} 
\newcommand{\pdim}{\pd}
\newcommand{\spd}{\mbox{{\sf spd}}}
\newcommand{\upd}{\mbox{{\sf upd}}}
\newcommand{\uspd}{\mbox{{\sf uspd}}}
\newcommand{\bpdim}{\mbox{{\sf bitpdim}}}
\newcommand{\DISJ}{\mbox{\sf DISJ}}
\newcommand{\INEQ}{\mbox{\sf INEQ}}
\newcommand{\EQ}{\mbox{\sf EQ}}
\newcommand{\ED}{\mbox{\sf ED}}
\newcommand{\PAL}{\mbox{\sf PAL}}
\newcommand{\SI}{{\sf SI}}
\newcommand{\gaussian}[3]{ {\genfrac{[}{]}{0pt}{}{#1}{#2}}_{#3}}
\newcommand{\field}{\mathbb{F}}
\newcommand{\F}{{\mathbb{F}}}
\newcommand{\N}{{\mathbb{N}}}
\renewcommand{\R}{{\mathbb{R}}}
\newcommand{\rank}[1]{\mathsf{rank} \left( #1 \right)}
\newcommand{\rowspan}{\mathsf{rowspan}}
\newcommand{\bpsize}{\mathsf{bpsize}}
\renewcommand{\dim}{\mathsf{dim}}
\newcommand{\set}[1]{\left\{ #1 \right\}}
\newcommand{\vsp}[1]{\mathsf{span} \set{#1}}
\newcommand{\vspan}[2]{\underset{#1}{\mathsf{span}}\{#2\} }
\newcommand{\zo}{\set{0,1}}
\newcommand{\zon}{\zo^n}
\newcommand{\pdrl}{Pudl\'ak-R\"odl }
\newcommand{\bp}{\mathsf{bp}}
\newcommand{\bc}{\mathsf{bc}}
\renewcommand{\D}{\mathsf{D}}
\newcounter{todocounter}
\newcounter{revcounter}
\newcommand{\revnum}[2][]{\stepcounter{revcounter}\todo[#1]{\therevcounter: #2}}
\newcommand{\revsay}[1]{\revnum[color=cyan!20]{\small #1}}
\newcommand{\family}[1]{\mathcal{#1}}
\newcommand{\calB}{{\cal B}}
\newcommand{\calC}{{\cal C}}
\newcommand{\calD}{{\cal D}}
\newcommand{\calF}{{\cal F}}
\newcommand{\calG}{{\cal G}}
\newcommand{\calH}{{\cal H}}
\newcommand{\calP}{{\cal P}}
\def\movetoappendix{0}
\newenvironment{aproof}[2]
  { \@nameuse{collect}{appendix}
  { \subsection{#1} \label{#2} \begin{proof} } {\end{proof}}
  }{\@nameuse{endcollect}}
        \renewenvironment{aproof}[2]{\begin{proof}} {\end{proof} }
\newenvironment{atheoremproof}[2]
  { \@nameuse{collect}{appendix}
  { \subsection{#1} \label{#2} }
  {}
  }{\@nameuse{endcollect}}
        \renewenvironment{atheoremproof}[2]{} {}
\begin{document}
\maketitle
\begin{abstract}
	We study projective dimension, a graph parameter (denoted by
	$\mathsf{pd}(G)$ for a graph $G$), introduced by Pudl\'ak and R\"odl
	(1992). For a Boolean function $f$(on $n$ bits), Pudl\'ak and R\"odl
	associated a bipartite graph $G_f$ and showed that size of the optimal
	branching program computing $f$ (denoted by $\mathsf{bpsize}(f)$) is
	at least $\mathsf{pd}(G_f)$ (also denoted by $\mathsf{pd}(f)$). Hence,
	proving lower bounds for $\mathsf{pd}(f)$ imply lower bounds for
	$\mathsf{bpsize}(f)$. Despite several attempts (Pudl\'ak and R\"odl
	(1992),  R\'onyai et.al, (2000)), proving super-linear lower bounds
	for projective dimension of explicit families of graphs has remained
	elusive. 

	We observe that there exist a Boolean function $f$ for which the gap
	between the $\mathsf{pd}(f)$ and {\sf bpsize}$(f)$) is
	$2^{\Omega(n)}$.  Motivated by the argument in Pudl\'ak and R\"odl
	(1992), we define two variants of projective dimension -
	\textit{projective dimension with intersection dimension 1} (denoted
	by $\mathsf{upd}(f)$) and \textit{bitwise decomposable projective
	dimension} (denoted by $\mathsf{bpdim}(f)$). We show the following
	results : 
\begin{enumerate}
	\item[(a)] We observe that there exist a Boolean function $f$ for which the gap
	between $\mathsf{upd}(f)$ and {\sf bpsize}$(f)$ is $2^{\Omega(n)}$.  In
	contrast, we also show that the bitwise decomposable projective
	dimension characterizes size of the branching program up to a
	polynomial factor. That is, there exists a constant $c>0$ and
	for any function $f$, \revsay{removed "large" from "large constant"}
		\begin{center}$\mathsf{bpdim}(f)/6 \le \textrm{\sf bpsize}(f)
		\le (\mathsf{bpdim}(f))^c$\end{center}
	\item[(b)] We introduce a new candidate function family $f$ for showing
	super-polynomial lower bounds for $\mathsf{bpdim}(f)$. As our main
	result, we demonstrate gaps between $\mathsf{pd}(f)$ and the
	above two new measures for $f$ : 
		\begin{center}
		$
		\textrm{$\mathsf{pd}(f) = O(\sqrt{n})$ \hspace{5mm}
		$\mathsf{upd}(f) = \Omega(n)$ \hspace{5mm} $\mathsf{bpdim}(f) 
		= \Omega\left(\frac{n^{1.5}}{\log n}\right)$}
		$
		\end{center}
\item[(c)] Although not related to branching program lower bounds, we derive
	exponential lower bounds for two restricted variants of
	$\mathsf{pd}(f)$ and $\mathsf{upd}(f)$ respectively by observing that
	they are exactly equal to well-studied graph parameters - bipartite
	clique cover number and bipartite partition number respectively.
\end{enumerate}
\end{abstract}

\section{Introduction}
A central question in complexity theory - the $\P$ vs $\L$ problem - asks if a deterministic Turing machine that runs in polynomial
time can accept any language that cannot be accepted by deterministic Turing machines with logarithmic space bound. A stronger
version of the problem asks if $\P$ is separate from $\L/\poly$ (deterministic logarithmic space given polynomial sized advice). 
The latter, recast in the language of circuit complexity theory, asks if there exists an \textit{explicit} family of functions ($f:\{0,1\}^n \to \{0,1\}$) computable in polynomial time (in terms of $n$), such that any family of deterministic branching programs computing them has to be of size $2^{\Omega(n)}$. 
However, the best known non-trivial size lower bound against deterministic branching programs, due to Nechiporuk
\cite{Nec66} in 1970s, is $\Omega(\frac{n^2}{\log^2 n})$. 

Pudl\'ak and R\"odl \cite{PR92} described a linear algebraic approach to show size lower bounds against deterministic branching programs. They introduced a linear algebraic parameter called \emph{projective dimension} (denoted by $\pd_{\F}(f)$, over a field $\F$) defined on a natural graph associated with the Boolean function $f$. 
For a Boolean function $f:\{0,1\}^{2n} \to \{0,1\}$, fix a partition of the input bits into two parts of size $n$ each, and consider the bipartite graph $G_f(U,V,E)$ defined on vertex sets $U = \{0,1\}^n$ and $V = \{0,1\}^n$, as $(u,v) \in E$ if and only if $f(uv) = 1$. We call $G_f$ as the \emph{bipartite realization} of $f$. For a bipartite graph $G(U,V,E)$, the projective dimension of $G$ over a field $\F$, denoted by $\pd_{\field}(G)$, is defined as the smallest $d$ for which there is a vector space $W$ of dimension $d$ (over $\F$) and a function $\phi$ mapping vertices
in $U, V$ to linear subspaces of $W$ such that for all $(u,v) \in U \times V$,
$(u,v) \in E$ if and only if $\phi(u) \cap \phi(v) \ne \{0\}$. We say that $\phi$ {\em realizes} the graph $G$.

Pudl\'ak and R\"odl \cite{PR92} showed that if $f$  can be computed by a
deterministic branching program of size $s$, then $\pd_\F(f) \le s$ over any field $\F$.
Thus, in order to establish size lower bounds against branching
programs, it suffices to prove lower bounds for projective dimension
of explicit family of Boolean functions. 

\revsay {The lower bound over reals is using a bound of Warren. So removed "By a counting argument"}
Pudl\'ak and R\"odl in \cite{PR92} showed that for most Boolean functions $f:\{0,1\}^n \times \{0,1\}^n \to \{0,1\}$, $\pd_{\mathbb{R}}(f)$ is $\Omega(\sqrt{\frac{2^n}{n}})$. In a subsequent work, the same authors~\cite{PR94} also established an upper bound $\pd_{\mathbb{R}}(f) =  O(\frac{2^n}{n})$ for all functions. More recently, R\'onyai, Babai and Ganapathy~\cite{BRG00}
established the same lower bound over all fields. 
Over finite fields $\F$, Pudl\'ak and R\"odl \cite{PR92} also showed 
(by a counting argument) that there exists a Boolean function $f:\{0,1\}^n \times \{0,1\}^n \to \{0,1\}$ such that $\pd_{\F}(f)$ is $\Omega(\sqrt{2^n})$. 
However, till date, obtaining an explicit family of Boolean functions (equivalently graphs) achieving such 
lower bounds remain elusive. 
The best lower bound for projective dimension for an explicit family of functions is for the inequality function (on $2n$ bits, the graph is the bipartite complement of the perfect matching) where a lower bound of $\epsilon n$ for an absolute constant $\epsilon > 0$ is known~\cite{PR92} over $\R$. For a survey on
projective dimension and related linear algebraic techniques,
refer~\cite{PR94,lokam2009complexity}. 
Thus, the best known size lower bound that was
achieved using this framework is only $\Omega(n)$ which is not better than trivial lower bounds.

\noindent {\bf Our Results :}
The starting point of our investigation is the observation that projective assignment appearing 
in the proof of \cite{PR92} also has the property that the dimension of the intersection of 
two subspaces assigned to the vertices is exactly $1$, whenever they intersect (See Proposition~\ref{prop:pdrl-props}(2)). We denote, for 
a function $f$, the variant of projective dimension defined by this property as 
$\upd(f)$ (See Section~\ref{sec:updim}).
From the above discussion, for any Boolean function $f$, $\pd(f) \le \upd(f) \le \bpsize(f)$.  
A natural question is whether this restriction helps in proving better lower bounds 
for the branching programs. By observing properties about \revsay{Changed "the measure" to "this measure"} projective dimension and choosing a new candidate function\footnote{the candidate function is in $\NC^2$ but unlikely to be in $\NL$. See~\cref{th:pd-hardness}.}, we demonstrate that the above restriction can help by proving the following quadratic gap between the two measures.
\begin{theorem}\label{thm:pd-lb-main}
For any $d \ge 0$, for the function $\SI_d$ (on $2d^2$ variables, See Definition \ref{def:sid}), the projective dimension is exactly equal to $d$, while the projective dimension with intersection dimension $1$ is $\Omega(d^2)$.
\end{theorem}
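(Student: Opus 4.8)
The plan is to read $\SI_d$ as the subspace-intersection function over $\mathbb{F}_2$: identify the $d^2$ input bits on each side with a $d \times d$ matrix, so that a left matrix $A$ and a right matrix $B$ are adjacent in $G_{\SI_d}$ exactly when $\rowspace{A} \cap \rowspace{B} \neq \set{0}$. With this reading the upper bound $\pd(\SI_d) \le d$ is immediate from the canonical realization in the ambient space $W = \mathbb{F}_2^{\,d}$: assign $\phi(A) = \rowspace{A}$ and $\phi(B) = \rowspace{B}$. By the very definition of the edge relation this $\phi$ realizes $G_{\SI_d}$, and $\dim W = d$, so $\pd(\SI_d) \le d$.

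For the matching lower bound $\pd(\SI_d) \ge d$, first I would pass to an induced subgraph. For each of the $2^d - 1$ one-dimensional subspaces (lines) $\ell \subseteq \mathbb{F}_2^{\,d}$, pick a rank-one left matrix $A_\ell$ and a rank-one right matrix $B_\ell$, each with row space $\ell$. Since two distinct lines meet only in $\set{0}$, in the induced subgraph $A_\ell$ is adjacent to $B_{\ell'}$ iff $\ell = \ell'$; this is a perfect matching on $2^d - 1$ vertices per side. Because projective dimension does not increase under taking induced subgraphs (restrict any realizing $\phi$ to the chosen vertices), it suffices to show this matching has projective dimension at least $d$. Given any realization in an ambient space of dimension $k$, choose a nonzero $w_\ell \in \phi(A_\ell) \cap \phi(B_\ell)$ for each $\ell$. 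If $\mathsf{span}\set{w_\ell} = \mathsf{span}\set{w_{\ell'}}$ for $\ell \neq \ell'$, then $w_\ell$ lies in both $\phi(A_\ell)$ and $\phi(B_{\ell'})$, forcing a non-existent edge; hence the $w_\ell$ span $2^d - 1$ distinct lines, giving $2^k - 1 \ge 2^d - 1$ and $k \ge d$. Together with the upper bound this proves $\pd(\SI_d) = d$ exactly.

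The crux is the quadratic lower bound $\upd(\SI_d) = \Omega(d^2)$, where every edge's intersection is forced to be exactly one-dimensional. Here the canonical realization is inadmissible: two full-rank (or merely high-overlap) matrices have row spaces meeting in dimension far larger than one, so an intersection-dimension-one realization must blow up the ambient space. My plan is to exhibit a family of $\Omega(d^2)$ edges $(A_t, B_t)$ whose intersection lines $w_t = \phi(A_t) \cap \phi(B_t)$ are forced to be \emph{linearly independent}, which immediately yields ambient dimension $\ge \Omega(d^2)$. The natural place to look is among higher-rank matrices — for instance those whose row spaces are coordinate subspaces $\mathsf{span}\set{e_i : i \in S}$, on which $\SI_d$ restricts to the set-intersection relation $S \cap T \neq \emptyset$ and the canonical intersection $\mathsf{span}\set{e_i : i \in S \cap T}$ has dimension $|S \cap T|$. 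It is precisely the edges with $|S \cap T| \ge 2$ that the intersection-dimension-one constraint cannot represent canonically, and forcing a single line on each such edge is what drives the dimension up.

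The main obstacle will be upgrading \emph{distinctness} of the lines $w_t$ to genuine \emph{linear independence}: the distinct-lines argument used for $\pd$ only gives $2^k \ge \#\text{lines}$, a logarithmic bound, which is exactly why plain projective dimension stays at $d$ and the measures separate. To force independence I would, for each index $t$, locate a vertex whose subspace contains all the other lines $w_{t'}$ with $t' \neq t$ but meets $w_t$ trivially, using an absent edge (which certifies a trivial intersection in any realization) as the mechanism that separates $w_t$ from the span of the rest. Designing the $\Omega(d^2)$ edges together with these isolating witnesses inside $\SI_d$, and checking that the intersection-dimension-one hypothesis makes the witnesses behave as required, is the technical heart. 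A possibly cleaner alternative is to relate $\upd$ restricted to this substructure to a biclique partition/cover number of the associated set-intersection graph (as the companion characterization in part~(c) suggests) and to lower bound that covering number by $\Omega(d^2)$.
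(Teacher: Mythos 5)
Your first half is sound. The canonical assignment $\phi(A)=\rowspan(A)$ gives $\pd(\SI_d)\le d$, and your induced-matching argument gives the matching lower bound over $\F_2$: the intersection vectors of the $2^d-1$ matched pairs must span pairwise distinct lines (else a forbidden edge appears), and $\F_2^k$ has only $2^k-1$ lines, so $k\ge d$. This is a legitimate variant of the paper's own counting argument (the paper instead notes that all vertices of $\calP_d$ have distinct neighborhoods, hence must receive distinct subspaces, and the number of subspaces of a $(d-1)$-dimensional space is too small). Do note your argument is field-sensitive — over $\R$ a perfect matching has projective dimension $2$ — so the finiteness of the field is doing the work, as it must.

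The $\upd(\SI_d)=\Omega(d^2)$ half, however, is a plan whose technical heart is missing, and the plan as stated has two concrete defects. First, your mechanism for upgrading distinctness of the intersection lines to linear independence — an ``isolating witness'' vertex whose subspace \emph{contains} all the other lines $w_{t'}$ but meets $w_t$ trivially — is not enforceable: edges and non-edges in a projective realization only force nontrivial or trivial intersection, and nothing forces a third vertex's subspace to contain the specific intersection vectors of other edges. Second, your quantitative targets are off. The substructure you single out (coordinate row spaces, i.e., set-intersection) has a $2^d\times 2^d$ adjacency matrix of real rank only about $2^d$, which can certify at best $\Omega(d)$; and in your fallback via biclique partitions you would need the partition number to be $2^{\Omega(d^2)}$, not $\Omega(d^2)$, because an intersection-dimension-$1$ realization in ambient dimension $D$ partitions the $1$s of the adjacency matrix into at most $(q^D-1)/(q-1)$ rectangles — each line $v$ yields the rectangle of pairs whose intersection is exactly $\vsp{v}$, which is precisely where intersection dimension $1$ enters — so a partition-number bound of $\Omega(d^2)$ would give only $D=\Omega(\log d)$. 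The paper's proof is exactly this rectangle/rank route carried out with the right substructure and the right magnitude: it works inside $\calP_d$ (all subspaces, not just coordinate ones), proves $\rank{M} \ge \gaussian{d}{d/2}{q}\ge q^{d^2/4}$ for the adjacency matrix $M$ of $\calP_d$ using known eigenvalue computations for the subspace-disjointness matrices $J-M_i$, and then concludes $q^{d^2/4}\le \rank{M}\le 1+\gaussian{D}{1}{q}$, whence $D=\Omega(d^2)$. That exponential rank lower bound on the full subspace lattice is the essential ingredient your proposal neither supplies nor replaces.
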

However, this does not directly improve the known branching program size lower
bound for $\SI_d$, since it leads to only a linear lower bound on
$\upd(\SI_d)$. We demonstrate the weakness of this measure
by showing the existence of a function (although not explicit) for which there is an exponential gap between $\upd$ {\em over any partition} and the branching program size~(Proposition~\ref{thm:upd-failure}). 
This motivates us to look for variants of projective dimension of graphs, which is 
closer to the optimal branching program size of the corresponding Boolean 
function. We observe more properties (see 
Proposition~\ref{prop:pdrl-props}) about the subspace assignment 
from the proof of the upper bound from \cite{PR92}.
 We call the projective assignments with 
these properties {\em bitwise decomposable projective assignment} and 
denote the corresponding dimension\footnote{We do not use 
the property that intersection dimension is $1$ and hence is 
incomparable with $\upd$.} 
as $\bpdim(f)$~(See Definition~\ref{def:bitpdim}). Thus, for any Boolean 
function $f$, $\pd(f) \le \bpdim(f)$. We also show that $\bpdim(f) \le 6\cdot\bpsize(f)$ (Lemma \ref{bpdim-pdrl}). To demonstrate the 
tightness of the definition, we first argue a converse with 
respect to this new parameter.
\begin{theorem} \label{thm:bitpdim-converse}
There is an absolute constant $c>0$ such that if $\bpdim(f_n) \le d(n)$ for 
a function family $\{f_n\}_{n \ge 0}$ on $2n$ bits, then there 
is a deterministic branching program of size $(d(n))^c$ computing it.
\end{theorem}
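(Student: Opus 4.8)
The plan is to invert the construction behind Lemma~\ref{bpdim-pdrl}: starting from a bitwise decomposable projective assignment $\phi$ of dimension $d = d(n)$ realizing $G_{f_n}$ over a field $\F$, I would read off a deterministic branching program whose accepting computations correspond exactly to the event $\phi(u)\cap\phi(v)\neq\set{0}$. Concretely, write $\phi(u)$ as the row space of a matrix $A_u$ and $\phi(v)$ as the row space of $B_v$, where bitwise decomposability lets me group the generating rows by the input bit that activates them, so that fixing bit $i$ to $u_i$ (resp. $v_i$) selects a fixed pool of rows associated to position $i$. Then $\phi(u)\cap\phi(v)\neq\set{0}$ is equivalent to the rank-deficiency condition $\rank{\left[\begin{smallmatrix}A_u\\ B_v\end{smallmatrix}\right]} < \rank{A_u} + \rank{B_v}$, and it suffices to build a branching program that detects this linear dependence while reading the bits of $u$ and $v$.

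The key structural point I would exploit is that the ambient space carries a decomposition $\F^d = \bigoplus_i W_i$ indexed by the bit positions, with the contribution of each bit confined to a bounded band of consecutive blocks. This makes the combined matrix effectively block-banded once its rows are ordered by the decomposition, so the dependence test factorizes into a sequence of local tests passed between adjacent blocks. I would then realize the branching program as a left-to-right sweep over the blocks (equivalently, over the input bits in the decomposition order): the state after processing a prefix is the ``interface'' data, namely the projection of the partially accumulated span onto the current boundary block together with a running rank count, and reading the next input bit deterministically updates this interface according to the rows that bit activates. Acceptance is declared exactly at those interface states from which the global rank deficiency is already forced.

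Correctness then reduces to checking that a source-to-sink path exists on input $uv$ iff the sweep detects the rank deficiency, iff $\phi(u)\cap\phi(v)\neq\set{0}$, iff $f_n(uv)=1$; determinism is immediate because the interface update is a function of the current interface and the single bit just read. For the size bound, the number of reachable interface states at each layer is governed by the band width and the field, so the total number of states, and hence $\bpsize(f_n)$, is bounded by $(d(n))^c$ for an absolute constant $c$; this recovers the $\bpsize(f) \le \bpdim(f)^c$ side of the claimed characterization.

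The main obstacle I anticipate is precisely the size bound. A priori the interface between the processed prefix and the remaining suffix could be an arbitrary subspace of the full $d$-dimensional space, and naively tracking it would cost $2^{\Omega(d)}$ states; a direct appeal to general rank computation is unavailable, since matrix rank is not known to be computable by polynomial-size deterministic branching programs. The heart of the argument is therefore to use bitwise decomposability to show that only a polynomially bounded amount of information about the partial span is relevant to the outcome of the dependence test, so that the interface can be summarized by $\mathrm{poly}(d)$ states. Making this summarization precise, by identifying the minimal faithful interface and verifying that the sweep's transitions are well defined on it, is where the real work lies, and it is exactly what forces the polynomial, rather than linear, blow-up in the statement.
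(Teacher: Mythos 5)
Your proposal diverges from the paper at its load-bearing step, and the divergence is a genuine gap rather than an alternative route. The ``key structural point'' you posit --- a decomposition $\F^d = \bigoplus_i W_i$ indexed by bit positions, with each bit's contribution confined to a band of consecutive blocks, making the stacked matrix block-banded --- is not implied by Definition~\ref{def:bitpdim} and is false in general. Property~\ref{def:prop-disj} only says that spans of \emph{disjoint sub-collections within} $\calC$ (and, separately, within $\calD$) intersect trivially; it imposes no ordering on the coordinates, no locality of supports, and, crucially, no separation at all between $\calC$ and $\calD$. Indeed the supports of $\calC$ and $\calD$ must overlap heavily: an intersection vector in $\phi(x)\cap\phi(y)$ arises precisely by writing one vector in terms of generators from both sides (in the forward direction, Lemma~\ref{bpdim-pdrl}, the same branching-program vertices carry both $x$-edges and $y$-edges). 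Without bandedness your sweep's interface is an arbitrary subspace of $\F^d$, the state count is $2^{\Omega(d)}$, and --- as you yourself concede in the last paragraph --- the summarization to $\mathrm{poly}(d)$ states is exactly the unresolved heart of your argument. So the proposal identifies the obstacle correctly but does not overcome it, and the structural assumption offered to overcome it has no basis in the definition.

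The idea you are missing is that Property~\ref{prop:diff-std-basis} makes the whole assignment \emph{graphic}, so the rank-deficiency test never has to be performed by the branching program at all. Since every subspace in $\calC\cup\calD$ is spanned by vectors $e_u - e_v$, the paper builds, on input $(x,y)$, an undirected graph $G^*$ on at most $d$ vertices (one per standard basis vector) with an edge $\{u,v\}$ whenever $e_u - e_v$ lies in some $U_i^{x_i}$ or $V_j^{y_j}$, and shows $f(x,y)=1$ iff $G^*$ contains a cycle: a cycle confined to one side would force a linear dependence contradicting Property~\ref{def:prop-disj}, so any cycle splits as $C_1\cup C_2$ with the telescoping sums $\sum_{(u,v)\in C_1}(e_u-e_v) = -\sum_{(w,z)\in C_2}(e_w-e_z)$ exhibiting a nonzero vector in $\phi(x)\cap\phi(y)$, and conversely any intersection vector, expanded in the difference-vector generators, closes a cycle. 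Undirected cycle detection is in deterministic logspace by Reingold's algorithm~\cite{Rei08}, which converts to a branching program of size $d^{O(1)}$; the paper then sharpens the constant to $c \le 3+\epsilon$ via the random-walk $\RL$ algorithm with nonuniformly fixed random bits. This reformulation is what sidesteps the objection you raised against general rank computation: connectivity of an explicit $O(d)$-vertex graph, not rank over $\F$, is all the branching program needs to decide.
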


Thus, super-polynomial size lower bounds for branching programs imply
super-polynomial lower bounds for $\bpdim(f)$. The function $\SI_d$ (on $2d^2$
input bits - See Definition~\ref{def:sid}) is a natural candidate for proving
$\bpdim$ lower bounds as the corresponding language is hard\footnote{Assuming
$\CeL \not\subseteq \L/\poly$, $\SI_d$ cannot be
computed by deterministic branching programs of size $\poly(d)$.} for the
complexity class $\CeL$ under logspace Turing reductions.

However, the best known lower bound for branching program size for an explicit family of functions is $\Omega\left (\frac{n^2}{\log^2 n} \right )$ by Nechiporuk~\cite{Nec66} which uses a counting argument on the number of sub-functions. By Theorem~\ref{thm:bitpdim-converse} , $\bpdim(f)$ (for the same explicit function) is 
at least $\Omega\left (\frac{n^{2/c}}{\log^{2/c} n} \right )$. The constant 
$c$ is more\footnote{The value of $c$ can be shown to be at most $3+\epsilon$. See proof of Theorem~\ref{thm:bitpdim-converse} in Section~\ref{sec:bpsize-char}.}  \revsay{"large" to "more than 2"}  than $3$
and hence implies only weak lower bounds for $\bpdim$. 
Despite this weak connection, by combining the counting strategy with the linear algebraic structure of $\bpdim$, we show a super-linear lower bound for $\SI_d$ matching the branching program size lower bound\footnote{A lower bound of $\Omega\left (\frac{d^{3}}{\log d} \right )$ for the branching program size can also be obtained using Nechiporuk's method.}.

\begin{theorem}[Main Result]\label{thm:bitpdim-lb}
For any $d > 0$, $\bpdim(\SI_d)$ is at least $\Omega\left (\frac{d^{3}}{\log d} \right )$.
\end{theorem}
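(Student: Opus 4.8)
The plan is to prove a Nechiporuk-style lower bound tailored directly to $\bpdim$, rather than routing through the polynomial converse of Theorem~\ref{thm:bitpdim-converse}: that converse has exponent $c>3$, so transferring the classical branching program bound $\Omega(d^3/\log d)$ of Nechiporuk~\cite{Nec66} through it would only yield $\bpdim(\SI_d) \ge \Omega((d^3/\log d)^{1/c})$, which is far too weak. Concretely, I would partition the $2d^2$ input variables of $\SI_d$ into the $d$ blocks $Y_1,\dots,Y_d$, where $Y_j$ holds the $d$ bits encoding the $j$-th row $b_j$ of the second matrix $B$ (writing $A,B$ for the two $d\times d$ matrices encoded by the two halves, and recalling from Definition~\ref{def:sid} that $\SI_d(A,B)=1$ iff $\rspace{A}\cap\rspace{B}\ne\set{0}$). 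The two quantities to control are (i) the number $r_j$ of distinct subfunctions of $\SI_d$ on $Y_j$ (fixing all other variables and varying $b_j$), and (ii) how the bitwise decomposable structure of an optimal $\bpdim$-assignment forces the ambient dimension to grow with the $r_j$.

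For (i), I would exhibit $2^{\Omega(d^2)}$ distinct subfunctions on each block. Setting every row of $B$ other than $b_j$ to $0$, we get $\rspace{B}=\vsp{b_j}$, so the subfunction is $b_j \mapsto \SI_d(A,B)$, which is $1$ exactly when $b_j \in \rspace{A}\setminus\set{0}$; that is, the subfunction is the membership indicator of the fixed subspace $\rspace{A}$. As $A$ ranges over all $d\times d$ matrices, $\rspace{A}$ ranges over all subspaces of $\F_2^{\,d}$, of which there are $2^{\Theta(d^2)}$, and distinct subspaces give distinct indicators. Hence $\log r_j = \Omega(d^2)$ (and $\log\log r_j = O(\log d)$) for every $j$; note this count is purely combinatorial and independent of the field over which $\bpdim$ is computed.

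The heart of the argument is (ii): a counting lemma asserting that a bitwise decomposable projective assignment realizing $G_{\SI_d}$, with ambient dimension $D$, can realize at most $2^{O(D_j \log D_j)}$ distinct subfunctions on block $Y_j$, where $D_j$ is a dimension ``charged'' to $Y_j$ with $\sum_j D_j = O(D)$. This is where the specific properties isolated in Proposition~\ref{prop:pdrl-props} and Definition~\ref{def:bitpdim} must be used: because the subspace assigned to a $V$-vertex is assembled coordinate-by-coordinate from a bounded palette of building blocks, the dependence of the assignment on $Y_j$ is captured by a dimension-$D_j$ piece whose distinct realizations are counted \emph{combinatorially} as $2^{O(D_j \log D_j)}$, rather than like the $q^{\Theta(D_j^2)}$ subspaces of a generic $D_j$-dimensional space. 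It is exactly this $\log/\log\log$ saving, unavailable to plain $\pd$ or $\upd$, that lets us beat $\Omega(d^2)$. Combining (i) and (ii) (and noting we may assume $D \le d^3/\log d$, so that $\log D_j = O(\log d)$, since otherwise there is nothing to prove), the relation $\log r_j \le O(D_j \log D_j)$ inverts to $D_j = \Omega(d^2/\log d)$, and summing over the $d$ blocks gives $\bpdim(\SI_d) \ge \sum_j D_j = \Omega(d^3/\log d)$.

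I expect the main obstacle to be simultaneously establishing the additive charging $\sum_j D_j = O(\bpdim)$ and the per-block bound of (ii) with the correct exponent. In the classical Nechiporuk method additivity is automatic, since distinct blocks are read at disjoint sets of branching program nodes; here the subspaces responsible for different blocks may overlap inside the ambient space, so I must use bitwise decomposability to extract an (at least approximately) block-aligned decomposition of the ambient space and to count, independently for each block, the subfunctions realizable by its piece. Securing that this count is singly exponential in $D_j \log D_j$ rather than in $D_j^2$ is the delicate point, and is precisely where the definition of $\bpdim$ — as opposed to $\pd$ or $\upd$ — is indispensable.
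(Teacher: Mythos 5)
Your proposal is correct and follows essentially the same route as the paper: its counting lemma (ii) is precisely the paper's Theorem~\ref{th:pdrl-count}, and its subfunction count (i) is Lemma~\ref{Pdcount} in mirrored form (the paper leaves one row of $A$ free and varies the rowspace of $B$; you do the transpose, which is immaterial by the symmetry of Definition~\ref{def:bitpdim}). The two delicate points you flag are resolved in the paper exactly as you anticipate: additivity $\sum_j d_j \le d$ comes from the disjointness Property~\ref{def:prop-disj}, and the per-block count comes from showing that the single right-vertex subspace of the restricted graph equals $\Pi_Z(R)$, which by Property~\ref{prop:diff-std-basis} is spanned by a linearly independent set of difference-of-standard-basis vectors inside the $d_j$-dimensional block space $Z$, of which there are at most $\binom{d_j}{2}$ candidates, yielding the $2^{O(d_j \log d_j)}$ bound.
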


Theorems \ref{thm:pd-lb-main} and \ref{thm:bitpdim-lb} demonstrate gaps between the $\pd$ and the new measures considered. In particular, for $n = d^2$,
$\pd(\SI_d) = O(\sqrt{n})$,
$\upd(\SI_d) = \Omega(n)$, and 
$\bpdim(\SI_d) = \Omega\left(\frac{n^{1.5}}{\log n}\right)$.
We remark that Theorem~\ref{thm:bitpdim-lb} implies a size lower bound of $\Omega(\frac{n^{1.5}}{\log n})$ for branching programs computing the function $\SI_d$ (where $n = d^2$). However, note that this can also be derived from Nechiporuk's method. For the Element Distinctness function, the above linear algebraic adaptation of Nechiporuk's method for $\bpdim$ gives $\Omega(\frac{n^2}{\log^2 n})$ lower bounds (for $\bpdim$ and hence for $\bpsize$) which matches with the best lower bound that Nechiporuk's method can derive. 
This shows that our modification of approach in \cite{PR92} can also achieve the best known lower bounds for branching program size.

Continuing the quest for better lower bounds for projective dimension, we study two further restrictions. In these variants of $\pd$ and $\upd$, the subspaces assigned to the vertices must be spanned by standard basis vectors. We denote the corresponding dimensions as $\spd(f)$ and $\uspd(f)$ respectively. It is easy to see that for any $2n$-bit function, both of these dimensions are upper bounded by $2^n$.

We connect these variants to some of the well-studied graph parameters. The \textit{bipartite clique cover number} (denoted by $\bc(G)$) is the smallest collection of complete bipartite subgraphs of $G$ such that every edge in $G$ is present in some graph in the collection. If we insist that the bipartite graphs in the collection be edge-disjoint, the measure is called \textit{bipartite partition number} denoted by $\bp(G)$. By definition, $\bc(G) \le \bp(G)$. These graph parameters are closely connected to communication complexity as well. More precisely, $\log(\bc(G_f))$ is exactly the non-deterministic communication complexity of the function $f$, and $\log(\bp(G_f))$ is a lower bound on the deterministic communication complexity of $f$~(See \cite{juknatext}). In this context, we show the following:
\begin{theorem}\label{th:spdEqbc}
For any Boolean function $f$,  $\spd(f) = \bc(G_f) $ and $\uspd(f) = \bp(G_f)$.
\end{theorem}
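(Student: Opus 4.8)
The plan is to show that a standard-basis subspace assignment is nothing but an encoding of a collection of bicliques, one per coordinate of the ambient space, and that the realization condition translates precisely into the covering (respectively partitioning) condition. Fix a realization of $G_f(U,V,E)$ of dimension $d$ in which every $\phi(u)$ and every $\phi(v)$ is spanned by standard basis vectors $e_1,\dots,e_d$ of $\F^d$. For each vertex $u\in U$ record the index set $S_u = \set{i : e_i \in \phi(u)}$, so that $\phi(u) = \vsp{e_i : i \in S_u}$, and similarly record $T_v$ for each $v \in V$. The one algebraic fact I would isolate first is that coordinate subspaces intersect coordinate-wise: $\phi(u)\cap\phi(v) = \vsp{e_i : i \in S_u \cap T_v}$, and hence $\dim(\phi(u)\cap\phi(v)) = |S_u \cap T_v|$. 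This is the step that makes the correspondence exact, and it is exactly where the standard-basis restriction is used, since for general subspaces the intersection dimension is not determined by the index sets.

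For the identity $\spd(f) = \bc(G_f)$, I would associate to coordinate $i \in [d]$ the biclique $B_i$ with parts $A_i = \set{u \in U : i \in S_u}$ and $C_i = \set{v \in V : i \in T_v}$. The realization condition $(u,v)\in E \iff \phi(u)\cap\phi(v)\ne\set{0}$ becomes, via the fact above, $(u,v)\in E \iff S_u \cap T_v \ne \emptyset \iff (u,v)\in B_i$ for some $i$. The forward reading shows each $B_i$ is a subgraph of $G_f$ (since $A_i \times C_i \subseteq E$, so it contains no non-edge) and that the $B_i$ together cover every edge; hence $\set{B_i}_{i \in [d]}$ is a biclique cover, giving $\bc(G_f) \le \spd(f)$. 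Conversely, given any biclique cover $\set{B_i}_{i=1}^k$ with parts $(A_i, C_i)$, setting $S_u = \set{i : u \in A_i}$, $T_v = \set{i : v \in C_i}$, and $\phi(u) = \vsp{e_i : i \in S_u}$, $\phi(v) = \vsp{e_i : i \in T_v}$ in $\F^k$ yields a standard-basis realization, because $S_u \cap T_v$ is nonempty exactly when $(u,v)$ lies in some biclique, which (the bicliques being subgraphs that cover all edges) happens exactly when $(u,v)\in E$. This gives $\spd(f) \le \bc(G_f)$ and hence equality.

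For $\uspd(f) = \bp(G_f)$, I would reuse the same correspondence and observe that the extra intersection-dimension-$1$ constraint of $\uspd$ says $\dim(\phi(u)\cap\phi(v)) = 1$ for every edge $(u,v)$, which by the algebraic fact is $|S_u \cap T_v| = 1$. Under the biclique translation, $|S_u \cap T_v|$ counts the number of bicliques containing the edge $(u,v)$, so the constraint says every edge lies in exactly one $B_i$, that is, the cover is edge-disjoint, i.e.\ a biclique partition. Both directions of the previous correspondence carry this condition across verbatim, so minimizing $d$ subject to it gives $\uspd(f) = \bp(G_f)$.

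The main thing to get right, rather than a genuine obstacle, is the exactness of the two directions: one must check that discarding coordinates $i$ with $A_i = \emptyset$ or $C_i = \emptyset$ (empty bicliques) does not change the count, that the bicliques produced are genuinely complete (which follows from $A_i \times C_i \subseteq E$), and, in the partition case, that $|S_u\cap T_v| = 1$ on every edge is equivalent to edge-disjointness \emph{together with} covering rather than just one of the two. Once the coordinate-wise intersection formula $\dim(\phi(u)\cap\phi(v)) = |S_u \cap T_v|$ is in place these are all immediate bookkeeping, so I expect the whole argument to be short.
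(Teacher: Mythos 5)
Your proposal is correct and follows essentially the same coordinate-to-biclique correspondence as the paper: index sets $S_u, T_v$ give the bicliques $A_i \times C_i$ in one direction, and a biclique cover (resp.\ partition) gives a standard assignment in the other, with the intersection-dimension-$1$ condition translating to ``each edge in exactly one biclique.'' The only cosmetic difference is that you isolate the coordinate-wise intersection formula $\dim\left(\phi(u) \cap \phi(v)\right) = |S_u \cap T_v|$ as an explicit lemma, whereas the paper invokes the same fact implicitly through the ``rectangle property'' of bicliques and the observation that intersections of standard subspaces are standard; this is a tidy packaging choice, not a different argument.
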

Thus, if for a function family, the non-deterministic communication complexity is $\Omega(n)$, then we will have $\spd(f) = 2^{\Omega(n)}$. Thus, both $\spd(\DISJ)$ and $\uspd(\DISJ)$ are $2^{\Omega(n)}$.

\section{Preliminaries}
\label{sec:prelims}
In this section, we introduce the notations used in the paper. 
For definitions of basic complexity classes and computational models, we refer the reader to standard textbooks~\cite{juknatext,vollmertext}. 

Unless otherwise stated, we work over the field $\F_2$. We remark that our
arguments do generalize to any finite field. All subspaces that
we talk about in this work are linear subspaces. Also $\vec{0}$ and $\set{0}$
denote the zero vector, and zero-dimensional space respectively.
For a subspace $U \subseteq \F^n$, we call the ambient dimension of $U$ as $n$. We denote $e_i \in \F^n$ as the $i^{th}$ standard basis vector with $i^{th}$ entry being $1$ and rest of the entires being zero.

For a graph $G(U,V,E)$, recall the definition of projective dimension of $G$ over a field $\F$($\pd_{\field}(G)$), defined in the introduction. 
For a Boolean function $f:\{0,1\}^{2n} \to \{0,1\}$, fix a partition of the input bits into two parts of size $n$ each, and consider the bipartite graph $G_f$ defined on vertex sets $U = \{0,1\}^n$ and $V = \{0,1\}^n$, as $(u,v) \in E$ if and only if $f(uv) = 1$.
A $\phi$ is said to \emph{realize} the function $f$ if it {\em realizes} $G_f$.
Unless otherwise mentioned, the partition is the one specified in the definition of the function.
We denote by $\bpsize(f)$ the number of 
vertices (including accept and reject nodes) in the optimal branching program 
computing $f$.

\begin{theorem}[\pdrl Theorem~\cite{PR92}]
For a Boolean function $f$ computed by a deterministic branching program
of size $s$ and $\F$ being any field, $\pd_\F(G_f) \le s$. 
\label{thm:pdrl}
\end{theorem}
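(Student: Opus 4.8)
The plan is to realize $G_f$ directly from a size-$s$ deterministic branching program $P$, taking the ambient space to be $W = \F^s$ with standard basis $\{e_w\}$ indexed by the $s$ nodes of $P$ (the source $s_0$, the accept node $\alpha$, the reject node, and the internal query nodes). Since $\dim W = s$, any subspace assignment $\phi$ realizing $G_f$ witnesses $\pd_\F(G_f) \le s$. Each internal node queries a variable from the $U$-block or the $V$-block; call these $U$-nodes and $V$-nodes. For a fixed assignment $u$ to the $U$-block I would define a contraction $\tau_u$ on nodes: $\tau_u(w)$ is the first $V$-node or terminal reached by starting at $w$ and, at each $U$-node, following the unique out-edge dictated by $u$; dually define $\tau_v$. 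Acyclicity of $P$ guarantees both are well defined, and the unique computation of $P$ on $uv$ projects under these contractions to an alternating chain $s_0 = c_0, c_1, \dots, c_\ell$ whose steps are $\tau_u$-steps and $\tau_v$-steps and which ends at whichever terminal $P$ reaches.

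I would then set
\[
\phi(u) = \mathsf{span}\big(\{e_{s_0}\}\cup\{e_w - e_{\tau_u(w)} : w \text{ is a } U\text{-node}\}\big), \quad \phi(v) = \mathsf{span}\big(\{e_\alpha\}\cup\{e_w - e_{\tau_v(w)} : w \text{ is a } V\text{-node}\}\big).
\]
Using differences $e_w - e_{\tau(w)}$ (which reduce to sums over $\F_2$) keeps the argument field-independent. Note that in each spanning set every $U$-node coordinate (resp.\ $V$-node coordinate) is the left endpoint of exactly one generator, since $\tau_u$ always lands on a $V$-node or terminal and never on a $U$-node; this one-to-one control is what drives the analysis.

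For the forward implication, suppose $f(uv) = 1$, so the chain ends at $c_\ell = \alpha$. Summing the $\tau_u$-generators used along the chain, and separately the $\tau_v$-generators, all interior nodes telescope and leave exactly $e_{s_0} - e_\alpha$. Re-associating, the vector obtained by subtracting $e_{s_0}$ from the $u$-side partial sum and the vector obtained by adding $e_\alpha$ to the $v$-side partial sum agree up to sign; since subspaces are closed under scalars, a common nonzero vector supported on the interface nodes of the chain lies in $\phi(u) \cap \phi(v)$, so the intersection is nontrivial (the degenerate direct-accept case yields the common vector $e_\alpha$).

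The main obstacle is the converse: showing $\phi(u) \cap \phi(v) = \{0\}$ whenever $f(uv) = 0$. Here a nonzero common vector $y$ has two representations, one as a combination of $\tau_u$-generators plus a possible $e_{s_0}$ term, the other as a combination of $\tau_v$-generators plus a possible $e_\alpha$ term, and I would reconcile these coordinate-by-coordinate while processing nodes in a topological order of the DAG. Acyclicity forbids spurious cyclic contributions, so the $e_{s_0}$ term forces any nonzero $y$ to initiate a consistent chain at the source; that chain is then pinned to the genuine $\tau_u/\tau_v$-computation on $uv$, which by determinism ends at the reject node, contradicting the presence of the $e_\alpha$ term. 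Making this reconciliation fully rigorous — in particular ruling out cancellations that imitate an accepting chain without one existing — is the delicate step, and the same analysis shows the intersection is one-dimensional in the accepting case, matching the intersection-dimension-$1$ property this construction is noted to possess.
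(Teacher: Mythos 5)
Your construction is sound and, in its mechanics, genuinely different from the paper's: the paper spans, for each input half, the difference vectors of \emph{all} closed edges of the branching program, and turns an accepting path into a cycle by merging the accept node into a new start node labeled from the opposite partition; you instead contract maximal same-side runs into the maps $\tau_u,\tau_v$ and close the path by adjoining the extra generators $e_{s_0}\in\phi(u)$ and $e_\alpha\in\phi(v)$. Your forward direction is complete (the coordinate-sum functional, which annihilates every difference generator but not $e_{s_0}$ or $e_\alpha$, certifies that the common vector is nonzero). However, the converse --- the only place where determinism and acyclicity are actually used, and the substantive half of the theorem --- is left as a sketch, and as written it is a genuine gap: ``processing nodes in topological order'' and ``acyclicity forbids spurious cyclic contributions'' do not by themselves rule out the cancellations you worry about, since over a field the in-contributions at a node can cancel one another without any directed cycle being present in the support.

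The missing idea that closes your argument is a flow-uniqueness lemma. Write a vector $y\in\phi(u)\cap\phi(v)$ in both representations, $y=\lambda e_{s_0}+\sum_w a_w\bigl(e_w-e_{\tau_u(w)}\bigr)=\mu e_\alpha+\sum_w b_w\bigl(e_w-e_{\tau_v(w)}\bigr)$, and subtract: this says that the edge-labeling of the contracted graph $H$ (edges $w\to\tau_u(w)$ with value $a_w$ and $w\to\tau_v(w)$ with value $-b_w$) is a flow with divergence $-\lambda$ at $s_0$, $\mu$ at $\alpha$, and conservation everywhere else. Now $H$ is acyclic (each contracted edge covers a directed path of the original program), and --- this is exactly the ``one-to-one control'' you observed but did not exploit --- every non-terminal node of $H$ has out-degree \emph{exactly one}. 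Hence the value on the unique out-edge of a node equals the sum of its in-values (plus $-\lambda$ at $s_0$), so by induction along a topological order the flow is uniquely determined: it carries $-\lambda$ along the genuine alternating $\tau_u/\tau_v$ computation chain of $uv$ and is zero on every other edge, with no room for imitation chains. When $f(uv)=0$ the chain ends at the reject node, where conservation forces $\lambda=0$; then $\mu=0$ (nothing flows into $\alpha$) and all $a_w,b_w$ vanish, so $y=0$. The same uniqueness gives intersection dimension exactly one in the accepting case, as you predicted. It is this forced propagation through out-degree-one nodes, not mere acyclicity, that your sketch lacks; the paper's corresponding step is the different (and complete) argument that any nonzero dependency yields a nonempty set of closed edges in which every vertex has degree at least two, hence an undirected cycle, which for the deterministic closed subgraph must pass through the merged accept/start vertex.
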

The proof of this result
proceeds by producing a subspace assignment for vertices of $G_f$ from a branching program computing $f$. 
We reproduce the proof of the above theorem in our notation, in Appendix~\ref{app:pdrl-proof} and derive the following proposition from the same.

\begin{proposition}\label{prop:pdrl-props}
For a Boolean function $f :\set{0,1}^n \times \set{0,1}^n \to \set{0,1}$ computed by a deterministic branching program of size $s$, there is a
collection of subspaces of $\F^s$ denoted $\calC = \{U_i^a\}_{i \in
	[n], a \in \{0,1\}}$ and $\calD = \{V_j^b\}_{j \in [n], b \in
	\{0,1\}}$, where we associate the subspace $U_i^a$ with a bit
	assignment $x_i = a$ and $V_j^b$ with $y_j = b$
such that if we define the 
 map $\phi$ assigning subspaces from $\F^s$ to vertices of $G_f(U,V,E)$ as $\phi(x) =
		\vspan{1\le i \le n}{U_i^{x_i}}$, $\phi(y)= \vspan{1\le j \le
		n}{V_j^{y_j}}$, for $x \in X, y\in Y$ then the following holds true. Let $S = \{e_i - e_j \mid i,j \in [s], i \ne j\}$. 
\begin{enumerate}
  \item for all $(u,v) \in U \times V$, $\phi(u) \cap \phi(v) \neq \set{0}$ if and only if $f(u,v)=1$.
  \item for all $(u,v) \in U \times V$, $\dim\left(\phi(u) \cap \phi(v)\right) \le 1$.
  \item For any $W \in \calC \cup \calD$, $\exists S' \subseteq S$ such that $W = \vsp{S'}$.
\end{enumerate}
\end{proposition}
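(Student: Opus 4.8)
The plan is to obtain all three properties directly from the explicit subspace assignment built in the appendix proof of \cref{thm:pdrl}. Recall that this proof fixes a deterministic branching program $B$ of size $s$ for $f$, identifies the ambient space $\F^s$ with the $s$ nodes of $B$ (so that $e_p$ is indexed by node $p$), and assigns to each vertex a subspace spanned by vectors of the form $e_p - e_q$. The first thing I would verify is that this assignment is \emph{literal-decomposable}: the generating set used for $\phi(u)$ is a union, over the indices $i \in [n]$, of a set of generators depending on $u$ only through the single literal $x_i = u_i$ (and symmetrically for $\phi(v)$ and the literals $y_j = v_j$). Collecting, for each $i$ and $a$, these generators into $U_i^a$ (and each $V_j^b$) exactly reproduces $\phi(u) = \vspan{1 \le i \le n}{U_i^{u_i}}$ and $\phi(v) = \vspan{1 \le j \le n}{V_j^{v_j}}$, which is the decomposition asserted in the statement.

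Given this, two of the three properties are immediate. Property (3) is simply the form of the generators: each $U_i^a$ and each $V_j^b$ is, by construction, the span of a set of vectors of the shape $e_p - e_q$ with $p \ne q$, i.e. $W = \vsp{S'}$ for the corresponding subset $S' \subseteq S$. Property (1) is nothing but the correctness guarantee of \cref{thm:pdrl}: the assignment $\phi$ realizes $G_f$, so $\phi(u) \cap \phi(v) \ne \set{0}$ if and only if $(u,v) \in E(G_f)$, which by definition of $G_f$ holds exactly when $f(u,v)=1$. Hence neither of these requires work beyond reading off the appendix construction.

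The real content is property (2), the bound $\dim(\phi(u) \cap \phi(v)) \le 1$, which is the new structural observation about the \pdrl assignment. The plan here is: fix $(u,v)$, let $p_0 \to p_1 \to \cdots \to p_k$ be the computation path of $B$ on $(u,v)$, and show that in the only case where the intersection can be nonzero (an accepting path, by property (1)) every $w \in \phi(u) \cap \phi(v)$ is a scalar multiple of the single telescoping vector $e_{p_0} - e_{p_k}$, the difference of the source and the accepting sink; this forces $\dim \le 1$. Concretely I would take an arbitrary $w$ in the intersection, expand it in the two generating systems, and trace it along the path: determinism of $B$ means there is a \emph{unique} consistent path, and I would argue node by node that the only way the generators available in $\phi(u)$ and those available in $\phi(v)$ can agree on a common vector is by telescoping along this path. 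The main obstacle is exactly this uniqueness step, namely ruling out a second, linearly independent common vector. This is where the handoff structure of the generators in the appendix construction is essential: control passes deterministically between the $X$-side and the $Y$-side precisely at the nodes of the unique path, so no alternative common direction can be assembled, and it is this case analysis, rather than the immediate properties (1) and (3), on which the argument must be concentrated.
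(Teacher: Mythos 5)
Your overall route is the paper's own: the paper likewise reads the literal-wise decomposition $\phi(x) = \vspan{1\le i\le n}{U_i^{x_i}}$ directly off the \pdrl construction (grouping the closed edges of the branching program by which literal $x_i=a$ labels them), observes that properties (1) and (3) are immediate from that construction, and derives property (2) from determinism via the uniqueness of the accepting path. On that last point the paper is scarcely more detailed than you are --- it says only that there is one accepting path, its vectors contribute the intersection, and hence the intersection dimension is at most $1$ --- so your identification of where the real work lies is accurate.

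However, one concrete detail in your plan for (2) is wrong. In the construction of \cref{thm:pdrl-appendix} the accept node is \emph{merged} with the new start node, so the source $p_0$ and the accepting sink $p_k$ of the computation path are the same vertex of the modified program $\calC$, and your proposed spanning vector $e_{p_0}-e_{p_k}$ is the zero vector. What actually spans the intersection is the one-sided partial sum along the unique closed cycle $C$ (the accepting path closed up through the merged node): $w_0=\sum_{(p,q)\in C,\, x\text{-labelled}}(e_p-e_q) = -\sum_{(p,q)\in C,\, y\text{-labelled}}(e_p-e_q)$. Since the path may alternate between $x$- and $y$-labelled edges several times, $w_0$ is in general a sum of several telescoped segments, not a single difference of two basis vectors; the $\mathsf{PARITY}_4$ example in the appendix illustrates exactly this. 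With that correction, the uniqueness step you flagged can be closed cleanly, and your ``node by node'' tracing would have to reconstruct precisely this: on input $(u,v)$ every non-sink node has exactly one closed outgoing edge, so in the closed subgraph every undirected cycle is in fact directed; any directed cycle must pass through the merged node (the rest of $\calC$ is acyclic), and determinism makes that cycle unique. Hence the space of linear dependencies among the closed-edge vectors has dimension at most one, and since any $w\in\phi(u)\cap\phi(v)$, expanded once over $x$-labelled generators and once over $y$-labelled generators, yields such a dependency, every intersection vector is a scalar multiple of $w_0$. This cycle-space argument is the substance behind the paper's one-line appeal to the unique accepting path; your sketch has the right skeleton but misnames the spanning vector and leaves this decisive step unexecuted.
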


\begin{aproof}{Proof of \cref{prop:pdrl-props}}{app:pdrl-props}
We reuse the notations introduced in proof of Theorem~\ref{thm:pdrl} which we have described in the Appendix~\ref{app:pdrl-proof}.
If $H_x$ denotes the set of edges that are closed on an input $a$, then the subspace assignment $\phi(a)$ is span of vectors associated with edges of $H_x$. Denote by $H_{x_i=a_i}$, the subgraph consisting of edges labeled $x_i=a_i$. Hence $H_a$ can be written as span of vectors associated with $H_{x_i=a_i}$. Hence $\phi(a)$ can be expressed as $span_{i=1}^n U_i$ where $U_i = span_{(u,v) \in H_{x_i=a_i}} (e_u-e_v)$. A similar argument shows that $\phi(y)$ also has such a decomposition. We now argue the properties of $\phi$.

Note that the first and third property directly follow from proof. To see second property, observe that the branching program is deterministic and hence there can be only one accepting path. Since we observed that the vectors in the accepting path contribute to the intersection space and since there is only one such path, dimension of the intersection  spaces is bound to be $1$.
\end{aproof}

We define the following family of functions and family of graphs based on
subspaces of a vector space and their intersections.
\begin{definition}[$\SI_d$, $\calP_d$]
\label{def:sid}
	Let $\F$ be a finite field. Denote by $\SI_d$, the Boolean \revsay{Do we want to call $SI$ Boolean ie is $\F$ fixed as $\F_2$ ?} function defined on $\F^{d \times d}
	\times \F^{d \times d} \to \{0,1\}$ as for any $A,B \in \F^{d
	\times d}$ $\SI_d(A,B) = 1$ if and only if $\rowspan(A) \cap
	\rowspan(B) \ne \set{0}$. Note that the row span is over the field 	$\F$ (which, in our case, is $\F_2$).
	Denote by $\calP_d$, the bipartite graph $(U,V,E)$ where $U$ and $V$
	are the set of all subspaces of $\F^d$. And for any $(I,J) \in U \times
	V$, $(I,J) \in E \iff I \cap J \ne \set{0}$
\end{definition}

We collect the definitions of Boolean functions which we deal with in this work.
For  $(x,y) \in \set{0,1}^n \times \set{0,1}^n$,
$\IP_n(x,y) = \sum_{i=1}^n x_iy_i \mod 2$, 
$\EQ_n(x,y)$ is $1$ if $\forall i \in [n]$ $x_i = y_i$ and is $0$ otherwise,  
$\INEQ_n(x,y) = \neg \EQ_n(x,y)$ and $\DISJ_n(x,y) = 1$ if $\forall i \in [n]$
$x_i \land y_i = 0$ and is $0$ otherwise. Note that all the functions 
discussed so far has branching programs of size $O(n)$ computing them and hence 
have projective dimension $O(n)$ by Theorem~\ref{thm:pdrl}.

Let $m \in \N$ and $n = 2m\log m$. The Boolean function, Element Distinctness, denoted $\ED_n$ is defined 
	on $2m$ blocks of $2\log m$ bits, $x_1,\ldots,x_{m}$ and
	$y_1,\ldots,y_{m}$ bits and it evaluates to $1$ if and only if all the $x_i$s
	and $y_i$s take distinct values when interpreted as integers in $[m^2]$.
Let $q$ be a power of prime congruent to 1 modulo 4. Identify elements in $\{0,1\}^n$ with elements of $\F_q^{*}$. For $x, y \in \F_q^{*}$, the Paley function $\PAL^q_n (x,y) = 1$ if $x-y$ is a quadratic residue in $\F_q^*$ and $0$ otherwise.

We observe for any induced subgraph $H$ of $G$, 
	if $G$ is realized in a space of dimension $d$, then
	$H$ can also be realized in a space of dimension $d$.
For any $d \in \N$, $\calP_d$ appears as an induced subgraph of the bipartite realization of $\SI_d$. Hence, $\pd(\SI_d) \ge \pd(\calP_d)$.

\begin{atheoremproof}{Linear Algebra Basics}{linalg}
We need the following definition of Gaussian coefficients. For non-negative integers $n,k$ and a prime power $q$, $\gaussian{n}{k}{q}$ is the expression,
$\frac{(q^n-1)(q^n-q) \ldots 	(q^n-q^{k-1})}{(q^k-1)(q^k-q) \ldots	(q^k-q^{k-1})} \text{ if } n \ge k,	k \ge 1$, $0$ if $n < k,  k \ge 1$, $1$ if $n \ge 0, k = 0$.

\noindent
\textbf{Linear Algebra :}
We recall some basic lemmas from linear algebra which we use later. Unless otherwise mentioned, all our algebraic formulations are over finite fields ($\mathbb{F}$ of size $q$). For vector spaces $V_1$, $V_2$ with dimensions $k_1$, $k_2$ respectively, the
\emph{direct sum} $V_1 \oplus V_2$ is the vector space formed by the column space of 
the matrix \( M = \begin{bmatrix}
                B_1 & 0  \\
                0 & B_2 
              \end{bmatrix}
        \)
where $B_1$ is a $k_1 \times k_1$ matrix whose column space forms $V_1$, $B_2$
is a $k_2 \times k_2$ matrix whose column space form $V_2$.
We now state a useful property of direct sum.
\begin{proposition}
\label{prop:dir-sum-prop}
For an arbitrary field $\F$, let $U_1$, $V_1$ be subspaces of $\F^{k_1}$ and
$U_2, V_2$ be subspaces of $\F^{k_2}$. Then,
$(U_1 \oplus U_2) \cap (V_1 \oplus V_2) \ne \{0\} \iff U_1 \cap V_1 \ne
\{0\} \text{ or } U_2 \cap V_2 \ne \{0\}$
\end{proposition}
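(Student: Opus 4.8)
The plan is to first establish an explicit coordinate description of the direct sum, and then check the two directions of the equivalence directly; with the right description in hand, the left-hand intersection decouples into the two coordinate blocks. Writing a vector of $\F^{k_1+k_2}$ as a pair $(a,b)$ with $a \in \F^{k_1}$ and $b \in \F^{k_2}$, the block-diagonal structure of the defining matrix shows that its column space is exactly $\set{(a,b) : a \in U_1, b \in U_2}$: the top block of columns spans $U_1$ in the first $k_1$ coordinates while keeping the last $k_2$ coordinates zero, and symmetrically the bottom block spans $U_2$ in the last $k_2$ coordinates, so linear combinations produce precisely the pairs whose first component lies in $U_1$ and second component in $U_2$. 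The same description holds for $V_1 \oplus V_2$. This observation is the heart of the argument.

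With this description, I would prove the forward implication by taking any nonzero $(a,b) \in (U_1 \oplus U_2) \cap (V_1 \oplus V_2)$. Membership in both sides gives $a \in U_1 \cap V_1$ and $b \in U_2 \cap V_2$, and since $(a,b) \ne \vec{0}$ at least one of $a,b$ is nonzero; this yields $U_1 \cap V_1 \ne \set{0}$ or $U_2 \cap V_2 \ne \set{0}$. For the converse, assuming without loss of generality a nonzero $a \in U_1 \cap V_1$, the vector $(a,\vec{0})$ lies in both $U_1 \oplus U_2$ and $V_1 \oplus V_2$ and is nonzero, so the intersection is nontrivial; the case $U_2 \cap V_2 \ne \set{0}$ is handled symmetrically by using $(\vec{0},b)$.

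I do not expect any genuine obstacle: the statement is elementary and field-independent, and no special structure of $\F$ (finiteness, characteristic, size $q$) is ever invoked. The only point that requires care is the coordinate characterization of the direct sum itself, since the proposition is phrased in terms of the explicit block-matrix construction rather than an abstract internal/external direct sum. Making that identification precise — that the column space of the block-diagonal matrix is exactly the set of block-split pairs with components in $U_1$ and $U_2$ — is what renders both implications immediate, so I would state it as the first sentence of the proof and let the rest follow.
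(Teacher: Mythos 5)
Your proof is correct and complete: the coordinate characterization of the column space of the block-diagonal matrix as $\set{(a,b) : a \in U_1,\, b \in U_2}$ is exactly right, and both directions (projecting a nonzero intersection vector to its blocks, and padding a nonzero common vector with $\vec{0}$ in the other block) go through without issue. The paper itself gives no proof of this proposition --- it declares it ``fairly elementary'' and only proves the tensor-product analogue (Proposition~\ref{prop:dir-prod-prop}) in Appendix~\ref{app:andlemma} --- and your argument is precisely the standard one the authors intend, with the one point deserving care (the identification of the explicit block-matrix construction with the set of block-split pairs) correctly isolated and justified.
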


Let $U, V$ be two vector spaces. Then the vector space formed by 
$\text{Span}\left (\{ uv^{\top}~|~u \in U, v \in V \} \right )$
is called the \emph{tensor product} of vector spaces $U,V$ denoted as $U \otimes V$. Here $u, v$ are column vectors.
A basic fact about tensor product that we need is the following. 
Let $U$ be a vector space having basis $u_1,u_2,
\ldots u_k$ and $V$ be a vector space having basis $v_1,v_2,\ldots,v_\ell$ 
over some field $\F$ then, vector space $U \otimes V$ has a basis $B 
=\{u_iv_j^{\top} ~|~i \in \{1,2,\ldots,k\}, j \in \{1,2,\ldots, \ell\}  \}$
where $u, v$ are column vectors.
Hence, for any two vector spaces $U,V$ $dim(U \otimes V) = dim(U) \cdot dim(V)$.

\begin{proposition}
\label{prop:dir-prod-prop}
For an arbitrary field $\F$, let $U_1$, $V_1$ be subspaces of
$\F^{k_1}$ and $U_2, V_2$ be subspaces of $\F^{k_2}$. Then,
$(U_1 \otimes U_2) \cap (V_1 \otimes V_2) \ne \{0\} \iff U_1 \cap V_1 \ne \{0\} \text{ and } U_2 \cap V_2 \ne \{0\}$
\end{proposition}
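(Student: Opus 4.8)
The plan is to reduce the proposition to the subspace identity
$(U_1 \otimes U_2) \cap (V_1 \otimes V_2) = (U_1 \cap V_1) \otimes (U_2 \cap V_2)$,
from which the statement follows immediately. Indeed, using the fact stated above that $\dim(A \otimes B) = \dim(A)\cdot\dim(B)$, the right-hand side is nonzero if and only if both $U_1 \cap V_1$ and $U_2 \cap V_2$ are nonzero, which is exactly the claimed equivalence. So it suffices to establish the identity.

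One inclusion is straightforward. Since $U_1 \cap V_1 \subseteq U_1$ and $U_2 \cap V_2 \subseteq U_2$, every generator $u v^{\top}$ with $u \in U_1 \cap V_1$ and $v \in U_2 \cap V_2$ lies in $U_1 \otimes U_2$; by symmetry it also lies in $V_1 \otimes V_2$. Hence $(U_1 \cap V_1)\otimes(U_2 \cap V_2) \subseteq (U_1 \otimes U_2)\cap(V_1 \otimes V_2)$.

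The reverse inclusion is the crux, and I would handle it by choosing adapted bases. For $i \in \{1,2\}$ set $W_i = U_i \cap V_i$; pick a basis of $W_i$, extend it to a basis of $U_i$ by adjoining vectors $b^{(i)}$, and separately extend the same $W_i$-basis to a basis of $V_i$ by adjoining vectors $c^{(i)}$. Since $\dim(U_i + V_i) = \dim U_i + \dim V_i - \dim W_i$, the combined family (the $W_i$-basis together with the $b^{(i)}$ and the $c^{(i)}$) is linearly independent, and I extend it further to a basis of $\F^{k_i}$. Tensoring the chosen $\F^{k_1}$-basis with the chosen $\F^{k_2}$-basis yields a basis of the ambient space $\F^{k_1 k_2}$. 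In this ambient basis, $U_1 \otimes U_2$ is spanned exactly by those basis tensors whose two factors come from the $U_1$- and $U_2$-basis vectors, while $V_1 \otimes V_2$ is spanned by those with factors from the $V_1$- and $V_2$-bases. The only ambient basis tensors common to both spanning sets are $w_1 w_2^{\top}$ with each $w_i$ ranging over the chosen basis of $W_i$. Writing any $z$ in the intersection in the ambient basis and comparing its expression as an element of $U_1 \otimes U_2$ with its expression as an element of $V_1 \otimes V_2$ forces every coefficient outside $W_1 \otimes W_2$ to vanish, so $z \in W_1 \otimes W_2 = (U_1 \cap V_1)\otimes(U_2 \cap V_2)$.

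The main obstacle is the bookkeeping in this reverse inclusion: first, ensuring that the three-way extension of the $W_i$-basis remains linearly independent, which is precisely where the dimension formula for $U_i + V_i$ is used; and second, arguing cleanly that no \emph{mixed} basis tensor, such as one carrying a $b^{(1)}$ factor, can survive in an element that also lies in $V_1 \otimes V_2$. Once the adapted bases are fixed and tensored into a common basis of the ambient space, the remaining argument is coefficient comparison, but formulating that comparison carefully is where the effort lies.
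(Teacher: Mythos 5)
Your proof is correct, and it takes a genuinely different --- and in fact more careful --- route than the paper's. Both arguments treat the easy direction identically: nonzero $w_1 \in U_1 \cap V_1$ and $w_2 \in U_2 \cap V_2$ yield the common element $w_1 w_2^{\top}$. For the hard direction, the paper expands a nonzero $w$ in the intersection over a fixed product basis $\{e_i \tilde e_j^{\top}\}$ of the ambient space, equates the two coefficient families arising from the two memberships, picks a nonzero coefficient $\lambda_{i_1 j_1}$, and then asserts that $e_{i_1}$ lies in $U_1$ and $V_1$ and $\tilde e_{j_1}$ in $U_2$ and $V_2$. As written, that last inference is only sound when the bases are adapted to the subspaces: with the standard basis and, say, $U_1 = V_1 = \vsp{e_1+e_2}$, $U_2 = V_2 = \vsp{\tilde e_1}$, the intersection element $(e_1+e_2)\tilde e_1^{\top}$ has a nonzero coefficient on $e_1 \tilde e_1^{\top}$ even though $e_1 \notin U_1$; the paper leaves the needed choice of basis implicit. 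Your adapted-basis construction --- basing $W_i = U_i \cap V_i$, extending separately to bases of $U_i$ and $V_i$, invoking $\dim(U_i + V_i) = \dim(U_i) + \dim(V_i) - \dim(W_i)$ to certify that the combined family is independent, and completing to a basis of $\F^{k_i}$ --- supplies exactly the hypothesis that makes the coefficient comparison valid, since the two spanning sets of ambient basis tensors then meet precisely in the tensors of $W_1$-basis vectors with $W_2$-basis vectors. What your route buys is both rigor and a strictly stronger conclusion, the identity $(U_1 \otimes U_2) \cap (V_1 \otimes V_2) = (U_1 \cap V_1) \otimes (U_2 \cap V_2)$, from which the proposition follows at once via $\dim(A \otimes B) = \dim(A)\cdot\dim(B)$; what the paper's shorter argument buys is brevity, at the cost of glossing the basis-adaptation step that your proof makes explicit.
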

The proofs of the two Propositions~\ref{prop:dir-sum-prop} and~\ref{prop:dir-prod-prop} are fairly elementary and follows from basic linear algebra. For example Proposition~\ref{prop:dir-prod-prop} follows as an easy corollary from an exercise from~\cite[Chapter 14, exercise 12]{R05}\footnote{There is a typo in the way the exercise is stated in~\cite[Chapter 14, exercise 12]{R05}. For this reason we give a proof of this result in Appendix~\ref{app:andlemma}.}.

	Let $V$ be a finite dimensional vector space. For any $U \subseteq_S
	V$, $V = U \oplus U^{\perp}$. Hence for any $v \in V$ there exists a
	unique $u \in U, w \in U^{\perp}$ such that $v = u+w$. A projection
	map $\Pi_U$ is a linear map defined as $\Pi_U(v) = u$
	where $u$ is the component of $v$ in $U$. For any $A,B \subseteq_S V$ with $A \cap B = \set{0}$, let $V = A+B$.
	Then any vector $w \in V$ can be uniquely expressed as $w =
	\Pi_A(w)+\Pi_B(w)$.
It is easy to see that, for any $A,B \subseteq_S \F^d$, with $A \cap B = \set{0}$, and any $w \in \F^d$, $\Pi_{A+B}(w) = \Pi_A(w) + \Pi_B(w)$. \\

\end{atheoremproof}

\section{Properties of Projective Dimension} 

In this section, we observe properties about projective dimension as a measure of graphs and Boolean functions.
We start by proving closure properties of projective dimension under Boolean operations $\land$ and $\lor$. The proof  is based on direct sum and tensor product of vector spaces.

\begin{lem}
\label{lem:OrLemma}
Let $\mathbb{F}$ be an arbitrary field. For any two functions $f_{1}:\left\{ 0,1\right\} ^{2n}\to\left\{ 0,1\right\}$, $f_{2}:\left\{ 0,1\right\} ^{2n}\to\left\{ 0,1\right\} $,
$\pdim_{\mathbb{F}}\left(f_{1} \lor f_{2}\right)\leq\pdim_{\mathbb{F}}\left(f_{1}\right) + \pdim_{\mathbb{F}}\left(f_{2}\right)$ and 
$\pdim_{\mathbb{F}}\left(f_{1} \land
f_{2}\right)\leq\pdim_{\mathbb{F}}\left(f_{1}\right)\cdot \pdim_{\mathbb{F}}\left(f_{2}\right)$
\end{lem}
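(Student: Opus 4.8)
The plan is to take optimal projective realizations of the two graphs $G_{f_1}$ and $G_{f_2}$ and combine them vertex-by-vertex, using the direct sum for the $\lor$ bound and the tensor product for the $\land$ bound. The crucial structural observation is that, since $f_1$ and $f_2$ act on the same $2n$ input bits under the same partition, the bipartite realizations $G_{f_1}$, $G_{f_2}$, $G_{f_1 \lor f_2}$ and $G_{f_1 \land f_2}$ all share the identical vertex sets $U = V = \{0,1\}^n$. This lets me build a single new assignment on these shared vertices by operating pointwise on the two given assignments.

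For the $\lor$ bound, let $\phi_1$ realize $G_{f_1}$ inside $\F^{d_1}$ with $d_1 = \pd_\F(f_1)$ and $\phi_2$ realize $G_{f_2}$ inside $\F^{d_2}$ with $d_2 = \pd_\F(f_2)$. I would define $\psi$ on the common vertex set by $\psi(w) = \phi_1(w) \oplus \phi_2(w)$, a subspace of $\F^{d_1} \oplus \F^{d_2} \cong \F^{d_1 + d_2}$. For any $(u,v) \in U \times V$, Proposition~\ref{prop:dir-sum-prop} gives $\psi(u) \cap \psi(v) \ne \{0\}$ iff $\phi_1(u) \cap \phi_1(v) \ne \{0\}$ or $\phi_2(u) \cap \phi_2(v) \ne \{0\}$, which by the realization property is exactly $f_1(u,v) = 1$ or $f_2(u,v) = 1$, i.e. $(f_1 \lor f_2)(u,v) = 1$. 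Hence $\psi$ realizes $G_{f_1 \lor f_2}$ in ambient dimension $d_1 + d_2$, giving the first inequality.

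For the $\land$ bound, with the same $\phi_1, \phi_2$, I would instead define $\psi(w) = \phi_1(w) \otimes \phi_2(w)$, a subspace of $\F^{d_1} \otimes \F^{d_2} \cong \F^{d_1 d_2}$. By Proposition~\ref{prop:dir-prod-prop}, $\psi(u) \cap \psi(v) \ne \{0\}$ iff both $\phi_1(u) \cap \phi_1(v) \ne \{0\}$ and $\phi_2(u) \cap \phi_2(v) \ne \{0\}$, which is precisely $(f_1 \land f_2)(u,v) = 1$. Thus $\psi$ realizes $G_{f_1 \land f_2}$ in ambient dimension $d_1 d_2$, giving the second inequality.

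I expect no genuine obstacle here: the two Propositions do the real linear-algebraic work, and the only care needed is to verify that the pointwise constructions are well defined on the shared vertex sets and that the ambient dimensions are indeed $d_1 + d_2$ and $d_1 d_2$. The one subtlety worth flagging is that the direct sum (resp.\ tensor product) must be applied consistently to both sides of the bipartition, using the same embeddings of $\F^{d_1}$ and $\F^{d_2}$ for $U$-vertices and $V$-vertices, so that the ambient space is fixed and the Propositions apply verbatim.
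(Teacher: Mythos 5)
Your proposal is correct and matches the paper's own proof essentially verbatim: both take the pointwise direct sum of the two realizations (invoking Proposition~\ref{prop:dir-sum-prop}) for the $\lor$ bound and the pointwise tensor product (invoking Proposition~\ref{prop:dir-prod-prop}) for the $\land$ bound, with the ambient dimensions $d_1+d_2$ and $d_1 d_2$ respectively. Your explicit statement of the two-sided equivalence is in fact slightly more careful than the paper's write-up, which only spells out one direction and leaves the converse to the iff in the propositions.
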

\begin{aproof}{Proof of \cref{lem:OrLemma}}{app:OrLemma}
In this proof, for a Boolean $f$ with bipartite representation
$G_f(U,V,E)$ we define the map $\phi$ to be
from $\set{0,1}^n \times \set{0,1}$ where $\phi(u,0)$ denotes
the subspace assigned to $u \in U$ and $\phi(v,1)$ denotes the subspace
assigned to $v \in V$ of $G_f$. Let $f_{1}$ and $f_2$ be of projective dimensions $k_{1}$ and $k_2$ realized by maps 
$\phi_{1} :\left\{ 0,1\right\} ^{n}\times\left\{ 0,1\right\}
\to\mathbb{F}^{k_{1}},\phi_{2}:\left\{ 0,1\right\}^{n}\times\left\{
0,1\right\} \to\mathbb{F}^{k_{2}}$ respectively.
\begin{itemize}
\item From $\phi_1$ and $\phi_2$ we construct a
subspace assignment $\phi:\left\{ 0,1\right\} ^{n}\times\left\{
0,1\right\} \to\mathbb{F}^{k_{1}+k_{2}}$ which realize $f_1 \lor f_2$
thus proving the theorem. \\
The subspace assignment is : for $u \in \{0,1\}^n, \phi(u,0) =
\phi_1(u,0) \oplus \phi_2(u,0)$. Similarly for $v \in \{0,1\}^n,
\phi(v,1) = \phi_1(v,1) \oplus \phi_2(v,1)$.  Now, for $u,v \in
\{0,1\}^n$, if $f(u,v) = 1$ then it must be that $f_1(u,v) = 1$ or
$f_2(u,v) = 1$. Thus either $\phi_1(u,0) \cap \phi_1(v,1) \ne \{0\}$
or $\phi_2(u,0) \cap \phi_2(v,1) \ne \{0\}$. By
Proposition~\ref{prop:dir-sum-prop}, it must be the case that $(\phi_1(u,0)
\oplus \phi_2(u,0))\cap (\phi_1(v,1) \oplus \phi_2(v,1)) \ne
\{0\}$. Hence $\phi(u,0) \cap \phi(v,1) \ne \{0\}$.
The dimension of resultant space is $k_1+k_2$.
\item From $\phi_1$ and $\phi_2$ we construct a
subspace assignment $\phi:\left\{ 0,1\right\} ^{n}\times\left\{
0,1\right\} \to\mathbb{F}^{k_{1}k_{2}}$, realizing $f_1 \land f_2$ thus proving the theorem. 
Consider the following projective dimension assignment $\phi$:
 for $u \in \{0,1\}^n, \phi(u,0) = \phi_1(u,0)
\otimes \phi_2(u,0)$. Similarly for $v \in \{0,1\}^n, \phi(v,1) = \phi_1(v,1)\otimes \phi_2(v,1)$. The proof is similar to the previous case and applying Proposition~\ref{prop:dir-prod-prop}, completes the proof.
\end{itemize}
\end{aproof}
The $\lor$ part of the above lemma was also observed (without proof) in \cite{PR94}. 
 A natural question is whether we can improve any
of the above bounds. In that context, we make the following remarks: (1) the construction for $\lor$ is tight up to constant factors,
(2) we cannot expect a general relation connecting $\pd_\R(f)$ and $\pd_\R(\neg f)$.
\begin{atheoremproof}{Tightness of $\lor$, impossibility of $\neg$ over $\R$}{app:or-tight}

\begin{itemize}
\item We prove that the construction for $\lor$ is tight up to constant
	factors. Assume that $n$ is a multiple of $4$. Consider the functions $f
	(x_1, \dots, x_{\frac{n}{4} }, x_{\frac{n}{2} + 1}, \dots , x_{\frac{3
	n}{4}})$ and $g (x_{\frac{n}{4} + 1}, \dots , x_{\frac{n}{2}},
	x_{\frac{3 n}{4} + 1},
\dots , x_n)$ each of which performs inequality check on the first
$\frac{n}{4}$ and the second $\frac{n}{4}$ variables.
It is easy to see that $f \lor g$ is the inequality function on $\frac{n}{2}$
variables $x_1, \dots , x_{\frac{n}{2}}$ and the next $\frac{n}{2}$
variables $x_{\frac{n}{2} + 1}, \ldots , x_n$.  By the fact that they are
computed by $n$ size branching programs and using Theorem~\ref{thm:pdrl}   
 (\pdrl theorem) we get that 
$\pdim (f) \le n$ and $\pdim (g) \le n$. Hence by Lemma~\ref{lem:OrLemma}, 
$\pdim (f \lor g) \le \pdim(f)+\pdim(g) \le 2 n$. Lower bound on projective
dimension of inequality function comes from~\cite[Theorem~4]{PR92},  
giving $\pdim (f \lor g) \ge \epsilon. \frac{n}{2}$ for an absolute constant
$\epsilon$.  This shows that $\pdim (f \lor g) = \Theta(n)$.
\item A natural idea to improve the upper bound of $\pd(f_1 \land f_2)$ is to prove upper bounds for $\pd(\lnot f)$ in terms of $\pd(f)$. However, we remark that over $\R$, it is known~\cite{PR92} that $\pd_\R(\INEQ_n)$ is $\Omega(n)$ while $\pd_\R(\EQ_n) = 2$. Hence we cannot expect a general relation connecting $\pd_\R(f)$ and $\pd_\R(\neg f)$.
\end{itemize}
\end{atheoremproof}

We now observe a characterization of bipartite graphs having projective dimension at most $d$ over $\F$. 
\begin{atheoremproof}{Distinct Neighborhood Implies Different Subspaces}{app:dis-nbhr}
Let $f:\{0,1\}^n \times \{0,1\}^n \to \{0,1\}$, and $G_f(X,Y,E)$ be its
bipartite realization. Let $\pdim(G_f) = d$. 

\begin{proposition} \label{prop:critical}
For any subspace assignment $\phi$ realizing $G_f$, no two vertices from the same partition whose neighborhoods are different can get the same subspace assignment.
\end{proposition}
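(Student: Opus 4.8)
The plan is to prove the contrapositive: rather than showing that different neighborhoods force different subspaces, I would show that equal subspaces force equal neighborhoods. Concretely, fix a partition, say $X$, and take any two vertices $u_1, u_2 \in X$ with $\phi(u_1) = \phi(u_2)$; the goal is to conclude $N(u_1) = N(u_2)$.

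The argument proceeds directly from the defining property of a realizing assignment. By definition, $\phi$ realizes $G_f$ means that for every $(u,v) \in X \times Y$ we have $(u,v) \in E$ if and only if $\phi(u) \cap \phi(v) \ne \set{0}$. So I would take an arbitrary vertex $v \in Y$ in the opposite partition and observe that, because $\phi(u_1) = \phi(u_2)$, the two intersections $\phi(u_1) \cap \phi(v)$ and $\phi(u_2) \cap \phi(v)$ are literally the same subspace. Consequently one is nonzero exactly when the other is, which by the realization property means $(u_1, v) \in E \iff (u_2, v) \in E$.

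Since $v \in Y$ was arbitrary and the neighborhood of a vertex in a bipartite graph is completely determined by which vertices of the opposite side it is adjacent to, this gives $N(u_1) = N(u_2)$. The case where the two vertices lie in $Y$ instead of $X$ is identical after swapping the roles of the partitions, so the statement holds for both sides. Taking the contrapositive, two vertices in the same partition with distinct neighborhoods cannot receive the same subspace, which is exactly the claim.

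I do not anticipate a genuine obstacle here: the result is an immediate unfolding of the definition of ``realizes,'' and the only point requiring a moment of care is the observation that adjacency in a bipartite realization depends on $\phi(u)$ only through its intersection pattern with the subspaces on the other side, so identical subspaces cannot be distinguished by any neighbor.
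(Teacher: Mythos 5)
Your proof is correct and is essentially the paper's argument in contrapositive form: the paper assumes $\phi(x)=\phi(x')$ with $N(x)\neq N(x')$, picks a witness $z \in N(x)\setminus N(x')$, and derives a contradiction with the realization property, which is exactly your observation that identical subspaces have identical intersection patterns with every subspace on the opposite side. There is no substantive difference between the two routes.
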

\begin{proof}
Suppose there exists $x,x' \in S$ from the same partition, i.e., either $X$ or $Y$,such that $\phi(x) = \phi(x')$. Since $N(x)\neq N(x')$, without loss of generality, there exists $z \in N(x)
\setminus N(x')$. Now since $\phi(x) = \phi(x')$, $x'$ will be made 
adjacent to $z$ by the assignment and hence $\phi$ is no longer a realization
of $G_f$ since $z$ should not have been adjacent to $x'$.
\end{proof}
\end{atheoremproof}
\begin{lem}[Characterization] \label{lem:pdim-char}
	Let $G$ be a bipartite graph with no two vertices having same neighborhood,
$\pdim(G)\le d$ if and only if $G$ is an induced subgraph of $\calP_d$.
\end{lem}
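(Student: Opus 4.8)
The plan is to prove both implications directly from the definition of projective dimension, with Proposition~\ref{prop:critical} supplying the one nontrivial ingredient, namely injectivity of the realizing map.

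For the ``if'' direction I would argue as follows. Suppose $G(X,Y,E)$ is an induced subgraph of $\calP_d$. Then by definition of $\calP_d$ every vertex of $G$ \emph{is} a subspace of $\F^d$, with the $X$-vertices drawn from the $U$-side and the $Y$-vertices from the $V$-side, and adjacency in $G$ coincides with nontrivial intersection. So I take $\phi$ to be the identity assignment, sending each vertex-subspace to itself inside the ambient space $\F^d$. For any $x \in X$, $y \in Y$ the induced-subgraph property gives $(x,y) \in E \iff x \cap y \ne \set{0}$, which is exactly the realization condition, so $\pdim(G) \le d$.

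For the ``only if'' direction, suppose $\pdim(G) \le d$ and let $\phi$ be a realizing subspace assignment into $\F^d$. Each $\phi(x)$ and each $\phi(y)$ is a subspace of $\F^d$, hence a vertex of the $U$-side, respectively $V$-side, of $\calP_d$, and by the definition of projective dimension, for every $x \in X$, $y \in Y$ we have $(x,y) \in E \iff \phi(x) \cap \phi(y) \ne \set{0} \iff (\phi(x),\phi(y)) \in E(\calP_d)$. Thus $\phi$ simultaneously preserves and reflects adjacency between the two sides; all that remains is to check that $\phi$ is injective on each side, so that $G$ is realized as the subgraph of $\calP_d$ induced on $\phi(X) \subseteq U$ and $\phi(Y) \subseteq V$.

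The main (and essentially only) obstacle is this injectivity, and here I would invoke Proposition~\ref{prop:critical}: no two vertices of the same partition with distinct neighborhoods can be assigned the same subspace. Since by hypothesis no two vertices of $G$ share a neighborhood, $\phi$ is injective on $X$ and on $Y$ separately, so $\phi(X)$ and $\phi(Y)$ have sizes $\card{X}$ and $\card{Y}$. A ``cross'' collision, where $\phi(x) = \phi(y)$ as subspaces for some $x \in X$ and $y \in Y$, is harmless, because in $\calP_d$ the sets $U$ and $V$ are two disjoint copies of the subspace lattice, so $x$ and $y$ land on distinct vertices of $\calP_d$ regardless. Consequently $\phi$ is a graph isomorphism from $G$ onto the subgraph of $\calP_d$ induced on $\phi(X) \cup \phi(Y)$, which is precisely what the lemma asserts.
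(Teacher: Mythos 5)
Your proof is correct and takes essentially the same route as the paper's: the identity assignment handles the ``if'' direction, and Proposition~\ref{prop:critical} together with the distinct-neighborhood hypothesis yields injectivity of $\phi$ on each side, so that $G$ maps isomorphically onto the subgraph of $\calP_d$ induced on the image subspaces. Your explicit note that a cross-partition collision $\phi(x)=\phi(y)$ is harmless, since $U$ and $V$ in $\calP_d$ are two disjoint copies of the subspace lattice, is a small point the paper leaves implicit.
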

\begin{aproof}{Proof of \cref{lem:pdim-char}}{app:pdim-char}
Suppose $G$ appears as an induced subgraph of $\calP_d$. To argue,
$\pdim(G) \le d$, simply consider the assignment where the subspaces
corresponding to the vertices in $\calP_d$ are assigned to the vertices of
$G$. 

On the other hand, suppose $\pdim(G) \le d$. Let $U_1,\ldots,U_N$
and $V_1,\ldots,V_N$ be subspaces assigned to the vertices. Since the
neighborhoods of the associated vertices are different, by
Proposition~\ref{prop:critical}, no two subspaces assigned to these vertices
can be the same. Hence corresponding to each vertex in $G$, there
is a unique vertex in $\calP_d$ which corresponds to the assignment. 
Now the subgraph induced by the vertices corresponding to
these subspaces in $\calP_d$ must be isomorphic to $G$ as the subspace
assignment map for $G$ preserves the edge non-edge relations in $G$.
\end{aproof}
It follows that 
$\pdim(\calP_d) \le d$. Observe that, in any projective assignment, the vertices with
different neighborhoods should be assigned different subspaces.
For $\pdim(\calP_d)$, all vertices on either partitions have distinct neighborhoods. The number of subspaces of a vector space of dimension $d-1$ is strictly smaller than the number of vertices in $\calP_d$. Thus, we conclude the following theorem.
\begin{theorem} \label{thm:subspace-graph-pd}
	For any $d \in \N$, $\pdim(\calP_d) = \pd(\SI_d)= d$.
\end{theorem}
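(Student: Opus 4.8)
The plan is to establish both equalities by squeezing each measure between matching upper and lower bounds, reusing the characterization (Lemma~\ref{lem:pdim-char}) and the distinct-neighborhood principle (Proposition~\ref{prop:critical}). For the upper bounds I would use the tautological assignments. For $\calP_d$, assign to the vertex corresponding to a subspace $I \subseteq \F^d$ the subspace $\phi(I) = I$ itself; since the adjacency rule of $\calP_d$ is exactly $I \cap J \ne \set{0}$, this $\phi$ realizes $\calP_d$ in ambient dimension $d$, giving $\pd(\calP_d) \le d$ (this is also immediate from Lemma~\ref{lem:pdim-char} applied to $\calP_d$ as an induced subgraph of itself). For $\SI_d$, assign $\phi(A) = \rowspan(A) \subseteq \F^d$ to each matrix $A$ on either side; then $\phi(A) \cap \phi(B) \ne \set{0}$ iff $\rowspan(A) \cap \rowspan(B) \ne \set{0}$, which is precisely $\SI_d(A,B) = 1$, so $\pd(\SI_d) \le d$.

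Next I would prove $\pd(\calP_d) \ge d$ by counting. The key structural fact is that any two distinct subspaces $I \ne I'$ of $\F^d$ (vertices on the same side of $\calP_d$) have different neighborhoods. Since $I \ne I'$, at least one of $I \setminus I'$, $I' \setminus I$ is nonempty, so WLOG pick $v \in I \setminus I'$. The one-dimensional subspace $\vsp{v}$ is adjacent to $I$ because $v \in \vsp{v} \cap I$, but it is not adjacent to $I'$: if $cv \in I'$ for some $c \ne 0$ then $v = c^{-1}(cv) \in I'$, a contradiction, so $\vsp{v} \cap I' = \set{0}$. Hence all $|U|$ vertices of $\calP_d$ have pairwise distinct neighborhoods. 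Now suppose for contradiction that $\pd(\calP_d) = d' \le d-1$. Then there is a realizing assignment into $\F^{d'}$, and by Proposition~\ref{prop:critical} these $|U|$ vertices must receive pairwise distinct subspaces of $\F^{d'}$. But the number of subspaces of $\F^{d'}$ is strictly smaller than the number of subspaces of $\F^d$ (every subspace of a hyperplane of $\F^d$ is a subspace of $\F^d$, while $\F^d$ itself is an additional one, and this count is strictly monotone in the ambient dimension), and $|U|$ equals the number of subspaces of $\F^d$. This is a contradiction, so $\pd(\calP_d) \ge d$, and with the upper bound $\pd(\calP_d) = d$.

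Finally, for $\SI_d$ I would invoke the fact noted just before the theorem that $\calP_d$ appears as an induced subgraph of the bipartite realization of $\SI_d$, which yields $\pd(\SI_d) \ge \pd(\calP_d) = d$. Combined with the row-span upper bound $\pd(\SI_d) \le d$, this gives $\pd(\SI_d) = d$, completing the proof. I expect the lower-bound step to be the only real obstacle: making the counting argument airtight requires the distinct-neighborhood claim for \emph{all} pairs of subspaces (including nested ones, where the separator $\vsp{v}$ is exactly what does the work) together with the strict monotonicity of the number of subspaces in the ambient dimension. The upper bounds and the reduction from $\SI_d$ to $\calP_d$ are then routine.
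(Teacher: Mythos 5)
Your proposal is correct and follows essentially the same route as the paper: the tautological/rowspan assignments for the upper bounds, the distinct-neighborhood principle (Proposition~\ref{prop:critical}) combined with the strict monotonicity of the subspace count in the ambient dimension for the lower bound $\pd(\calP_d) \ge d$, and the induced-subgraph containment $\calP_d \subseteq G_{\SI_d}$ to transfer the bound to $\SI_d$. Your explicit verification that any two distinct subspaces are separated by a one-dimensional vertex $\vsp{v}$ merely fills in a step the paper asserts without proof.
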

For an $N$ vertex graph $G$, the number of vertices of distinct neighborhood can at most be 
$N$. Thus, the observation that we used to show the lower bound for the 
graph $\pdim(\calP_d)$ cannot be used to obtain more than a $\sqrt{\log 
N}$ lower bound for $\pdim(G)$. Also, for many functions, the number of vertices of distinct neighborhood can be smaller. 

We observe that by incurring an additive factor of $2\log N$, any graph $G$ on $N$ vertices can be transformed into a graph $G'$ on $2N$ vertices such that all the neighborhoods of vertices in one partition are all distinct.
Let $f:\set{0,1}^{2n} \to \set{0,1}$ be such that the neighborhoods of $G_f$
are not necessarily distinct. We consider a new function $f'$ whose bipartite 
realization will have two copies of $G_f$ namely $G_1(A_1,B_1,E_1)$ and 
$G_2(A_2,B_2,E_2)$ where $A_1,A_2,B_1,B_2$ are disjoint and a matching connecting vertices in $A_1$ to $B_2$ and $A_2$ to $B_1$ respectively. Since the matching edges associated with every vertex is unique, the neighborhoods of all 
vertices are bound to be distinct.
Applying Lemma~\ref{lem:OrLemma} and observing that matching (i.e,  
equality function) has projective dimension at most $n$, $\pd(f') \le 2\pd(f) + 2n$. This shows that to show super-linear lower bounds on projective dimension for $f$ where the neighborhoods may not be distinct, it suffices to show a super-linear lower bound for $f'$.

\section{Projective Dimension with Intersection Dimension 1}
\label{sec:updim}
Motivated by the proof of Theorem~\ref{thm:pdrl} (presented in  Appendix~\ref{app:pdrl-proof}) we make the following definition.

\begin{definition}[\textbf{Projective Dimension with Intersection Dimension 1}]
  \label{defn:upd}
  A Boolean function $f : \set{0,1}^n \times \set{0,1}^n \to
  \set{0,1}$ with the corresponding bipartite graph $G(U,V,E)$ is said
  to have projective dimension with intersection dimension $1$ (denoted by $\upd(f)$) $d$ over field $\field$, if $d$ is the smallest possible dimension for which there exists a vector space $K$ of dimension $d$ over $\field$ with a map $\phi$ assigning subspaces of $K$ to $U \cup V$  such that
  \begin{itemize}
  \item for all $(u,v) \in U \times V$, $\phi(u) \cap \phi(v) \neq \set{0}$ if and only if $(u,v) \in E$.
  \item for all $(u,v) \in U \times V$, $\dim\left(\phi(u) \cap \phi(v)\right) \le 1$.
\end{itemize}   
\end{definition}

By the properties observed in Proposition~\ref{prop:pdrl-props},
\begin{theorem}
\label{thm:pdrl-upd}
For a Boolean function $f$ computed by a deterministic branching program
of size $s$, $\upd_\F(f) \le s$ for any field $\F$. 
\end{theorem}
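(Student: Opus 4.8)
The plan is to observe that Theorem~\ref{thm:pdrl-upd} is an immediate consequence of Proposition~\ref{prop:pdrl-props}, so no new construction is required; the entire content lies in matching the conclusions already established there against the definition of $\upd$. Concretely, I would begin with a deterministic branching program of size $s$ computing $f$ and invoke Proposition~\ref{prop:pdrl-props} to obtain the subspace assignment $\phi$ into $\F^s$, together with the three properties it guarantees. Since Definition~\ref{defn:upd} simply asks for the smallest ambient dimension admitting a realization that additionally satisfies the intersection-dimension-$1$ constraint, the work reduces to checking that $\phi$ already meets both bullets of that definition.

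Next I would verify the two defining conditions of Definition~\ref{defn:upd} in turn. Property~(1) of Proposition~\ref{prop:pdrl-props} asserts $\phi(u) \cap \phi(v) \ne \set{0}$ if and only if $f(u,v) = 1$; since the edge set of $G_f$ is defined by $(u,v) \in E \iff f(u,v) = 1$, this is exactly the first bullet of Definition~\ref{defn:upd}. Property~(2) asserts $\dim(\phi(u) \cap \phi(v)) \le 1$ for every pair $(u,v) \in U \times V$, which is verbatim the second bullet. Because $\phi$ takes values in subspaces of $\F^s$, the ambient space $K = \F^s$ has dimension $s$, and as $\upd_\F(f)$ is by definition the least such ambient dimension, we conclude $\upd_\F(f) \le s$. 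The argument is field-independent because Proposition~\ref{prop:pdrl-props} (resting on the proof of Theorem~\ref{thm:pdrl}) holds over any field $\F$.

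I do not anticipate any genuine obstacle here, as the heavy lifting has been done in establishing Proposition~\ref{prop:pdrl-props}. The one point that merits a sentence of care is that the intersection-dimension bound is needed for \emph{all} pairs, not only for edges; but Proposition~\ref{prop:pdrl-props}(2) is stated for all $(u,v) \in U \times V$, so non-edges contribute intersection dimension $0$ and edges contribute dimension exactly $1$, both consistent with the required upper bound of $1$. Thus the proof amounts to quoting Proposition~\ref{prop:pdrl-props} and identifying its clauses with those of Definition~\ref{defn:upd}.
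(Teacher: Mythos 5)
Your proposal is correct and matches the paper exactly: the paper likewise derives Theorem~\ref{thm:pdrl-upd} as an immediate consequence of Proposition~\ref{prop:pdrl-props}, whose properties~(1) and~(2) are precisely the two bullets of Definition~\ref{defn:upd}, with the ambient space $\F^s$ giving the bound. Your added remark that the intersection-dimension bound holds for all pairs, not just edges, is a fair point of care and is already covered by the proposition's quantification over all $(u,v) \in U \times V$.
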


Thus, it suffices to prove lower bounds for $\upd(f)$ in order to obtain 
branching program size lower bounds. 
We now proceed to  show lower bounds on $\upd$.

Our approaches use the fact that the adjacency matrix of $\calP_d$ has high rank.

\begin{lem}
\label{th:subspace-graph-rank}
Let $M$ be the bipartite adjacency matrix of $\calP_d$, then $\rank{M} \ge \gaussian{d}{d/2}{q} \ge q^{\frac{d^2}{4}}$
\end{lem}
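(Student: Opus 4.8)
The plan is to lower bound $\rank{M}$ by the rank of a single principal submatrix and to show that submatrix is nonsingular. Since $\calP_d$ has one vertex per subspace of $\F_q^d$ on each side, I would restrict $M$ to the rows and columns indexed by the $(d/2)$-dimensional subspaces, obtaining an $N \times N$ matrix $M'$ with $N = \gaussian{d}{d/2}{q}$ (the number of middle-dimensional subspaces). Because the rank of a submatrix lower bounds the rank of the whole matrix, $\rank{M} \ge \rank{M'}$, and since $M'$ is $N \times N$ it suffices to prove that $M'$ is nonsingular, i.e. that $\rank{M'} = N$.

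To analyse $M'$ I would work inside the Bose--Mesner algebra of the Grassmann association scheme $J_q(d,d/2)$, whose relations group pairs $(I,J)$ by $\dim(I\cap J)$. Here $M'[I,J]=1$ exactly when $\dim(I\cap J)\ge 1$, so $M' = J_{\mathrm{all}} - A$, where $A$ is the disjointness (complementary-subspace, i.e. $q$-Kneser) adjacency matrix and $J_{\mathrm{all}}$ is the all-ones matrix. All matrices of the scheme are simultaneously diagonalizable and share the all-ones eigenvector, so the eigenvalues of $M'$ are: on the trivial eigenspace, $N - q^{d^2/4}$ (the all-ones eigenvalue of $J_{\mathrm{all}}$ minus the valency of $A$, which equals the number $q^{(d/2)^2}=q^{d^2/4}$ of complements of a fixed $(d/2)$-subspace in $\F_q^d$); and $-\eta_j$ on each nontrivial eigenspace, where $\eta_j$ ranges over the nontrivial $q$-Kneser eigenvalues. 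The trivial eigenvalue is nonzero because $N = \gaussian{d}{d/2}{q} > q^{d^2/4}$.

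The crux, and the step I expect to be the main obstacle, is to show that every nontrivial eigenvalue $\eta_j$ is nonzero, so that $M'$ has no zero eigenvalue and is therefore nonsingular over $\Q$ (hence of rational/real rank $N$). I would invoke the known spectrum of the opposition relation: its eigenvalues are integers, in fact $\pm$ a power of $q$, and in particular nonzero. As sanity checks, for $d/2=1$ the disjointness matrix is $J_{\mathrm{all}}-\mathrm{Id}$ with spectrum $\{q,-1\}$, and for $d/2=2,\,q=2$ the identities $\mathrm{tr}(A)=0$ and $\mathrm{tr}(A^2)=N\cdot q^{d^2/4}$ together with the multiplicities of $J_2(4,2)$ force the spectrum $\{16,-4,2\}$; all entries are nonzero in both cases. (If one prefers to avoid the full $q$-Eberlein spectrum, the same nonsingularity can be sought by triangulating $A$ over the subspace lattice via its Smith form, but the spectral route is cleanest.)

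Finally, the numerical bound is routine. Using the product form of the Gaussian coefficient and the elementary inequality $\frac{q^a-1}{q^b-1}\ge q^{a-b}$ valid for $a\ge b\ge 1$, each of the $d/2$ factors below is at least $q^{d/2}$, so
\[
\gaussian{d}{d/2}{q} \;=\; \prod_{i=0}^{d/2-1}\frac{q^{d-i}-1}{q^{d/2-i}-1} \;\ge\; \left(q^{d/2}\right)^{d/2} \;=\; q^{d^2/4},
\]
which combined with $\rank{M}\ge \rank{M'}=N=\gaussian{d}{d/2}{q}$ yields the claimed bound $\rank{M}\ge \gaussian{d}{d/2}{q}\ge q^{d^2/4}$.
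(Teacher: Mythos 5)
Your proposal is correct and follows essentially the same route as the paper: both restrict $M$ to the submatrix indexed by the $(d/2)$-dimensional subspaces, write it as $J - A$ with $A$ the disjointness (complement) matrix of the Grassmann scheme, invoke the known nonzero eigenvalues of $A$, and finish with the standard estimate $\gaussian{d}{d/2}{q} \ge q^{d^2/4}$. Your simultaneous-diagonalization step is in fact marginally tighter than the paper's rank-subadditivity bound $\rank{M_i} \ge \rank{\overline{W_{ii}}} - 1$, since it shows the submatrix is nonsingular outright and thus avoids the additive loss of $1$ appearing at the end of the paper's argument.
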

\begin{proof}
For $0 \le i \le k \le d$, and subspace $I,K \subseteq_s \F_q^d$ with $dim(I) 
= i, dim(K) = k$, define matrix $ \overline{W_{ik}}$ over $\R$ as	$ 
\overline{W_{ik}}(I,K)  = 1 $ if  $I \cap K = \{0\} $ and $0$ 
otherwise. This matrix has dimension $\gaussian{d}{i}{q} \times
\gaussian{d}{k}{q}$. 

Consider the submatrix
$M_i$ of $M$ with rows and columns indexed by subspaces of dimension exactly
$i$. Observe that $\overline{W_{ii}} = J - M_i$ where $J$ is 
an all ones matrix of appropriate order. These matrices are well-studied 
(see~\cite{FW86}). Closed form expressions for eigenvalues are  computed 
in~\cite{D76,LW12} and the eigenvalues are known to be non-zero. Hence for $0 
\le i \le d/2$ the matrix $\overline{W_{ii}}$ has rank $\gaussian{d}{i}{q}$.
Since $\overline{W_{ii}} = J - M_i$, $\rank{M_i} \ge \rank{\overline{W_{ii}}}- 1$.
This shows that $\rank{M} \ge \rank{M_i} = \gaussian{d}{i}{q}$ for all $i$ such
that $2i \le d$. Choosing $i=d/2$ gives $\rank{M} \ge \gaussian{d}{d/2}{q} -1 \ge
q^{\frac{d^2}{4}}-1$.
\end{proof}

We now present two approaches for showing lower bounds on $\upd(f)$ - one using intersection families of vector spaces and the other using rectangle arguments on $M_f$. 

\noindent
\textbf{Lower Bound for $\upd(\calP_d)$ using intersecting families of vector spaces :} To prove a lower bound on $\upd(\calP_d)$ we define a
matrix $N$ from a projective assignment with
intersection dimension $1$ for  $\calP_d$,
such that it is equal to $(q-1)M$. Let $D = \upd(\calP_d)$. We
first show that $\rank{N}$ is at most $1+\gaussian{D}{1}{q}$. 
Then by Lemma~\ref{th:subspace-graph-rank} we get that $\rank{N}$ is at least
$q^{\frac{d^2}{4}}$. 
Let $\calG=\set{G_1,\dots,G_m}$, $\calH=\set{H_1,\dots,H_m}$ be the
subspace assignment with intersection dimension $1$ realizing $\calP_d$ 
with dimension $D$.

\begin{lem}
\label{cl:matrix-poly-rank-ub}
	For any polynomial $p$ in $q^x$ of degree $s$, with
	matrix $N$ of order $|\calG| \times |\calH|$ defined as $N[G_r,H_t] = p(\dim(G_r \cap H_t))$  with $G_r \in \calG$, $H_t
	\in \calH$, then $\rank{N} \le \sum_{i=0}^s \gaussian{D}{i}{q}$
\end{lem}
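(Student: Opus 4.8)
The plan is to write $N$ as a short linear combination of matrices, each of which is manifestly low rank through a subspace-counting identity. Since $p$ is a polynomial in $q^x$ of degree $s$, write $p(x)=\sum_{i=0}^{s}c_i q^{ix}$. The first step is a change of basis: for each $i$ the Gaussian coefficient $\gaussian{x}{i}{q}=\prod_{j=0}^{i-1}(q^x-q^j)\big/\prod_{j=0}^{i-1}(q^i-q^j)$ is, when viewed as a function of $q^x$, a polynomial of degree exactly $i$ with nonzero leading coefficient. Hence $\set{\gaussian{x}{i}{q}}_{0\le i\le s}$ is a triangular family and spans all polynomials in $q^x$ of degree at most $s$. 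Therefore there are scalars $c'_0,\dots,c'_s$ with $p(\dim W)=\sum_{i=0}^{s}c'_i\gaussian{\dim W}{i}{q}$ for every subspace $W$ of the ambient space; in particular this holds for $W=G_r\cap H_t$.

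Next I would give each Gaussian coefficient a combinatorial reading. For subspaces $G_r,H_t$ of the ambient space $K$ (of dimension $D$ over $\F_q$), the $i$-dimensional subspaces of $G_r\cap H_t$ are exactly the $i$-dimensional subspaces $S$ of $K$ with $S\subseteq G_r$ and $S\subseteq H_t$, since a subspace lies in the intersection iff it lies in both. Counting them gives $\gaussian{\dim(G_r\cap H_t)}{i}{q}=\#\set{S : \dim S=i,\ S\subseteq G_r,\ S\subseteq H_t}$.

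Now I would read this as a rank decomposition. Let $M_i$ be the $|\calG|\times|\calH|$ matrix with entries $\gaussian{\dim(G_r\cap H_t)}{i}{q}$. For each $i$-dimensional subspace $S$ of $K$, let $\alpha_S$ be the column vector indexed by $\calG$ whose $G_r$-entry is $1$ if $S\subseteq G_r$ and $0$ otherwise, and let $\beta_S$ be the row vector indexed by $\calH$ whose $H_t$-entry is $1$ if $S\subseteq H_t$ and $0$ otherwise. The counting identity is precisely $M_i=\sum_{S:\dim S=i}\alpha_S\beta_S^{\top}$, a sum of $\gaussian{D}{i}{q}$ rank-one matrices, so $\rank{M_i}\le\gaussian{D}{i}{q}$. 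Since $N=\sum_{i=0}^{s}c'_i M_i$, subadditivity of rank yields $\rank{N}\le\sum_{i=0}^{s}\rank{M_i}\le\sum_{i=0}^{s}\gaussian{D}{i}{q}$, as claimed.

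The only genuinely delicate point is the change of basis in the first paragraph: one must verify that the Gaussian coefficients really form a spanning triangular family for polynomials in $q^x$ of bounded degree, and that passing from the polynomial identity in the formal variable $q^x$ to its evaluation at $x=\dim(G_r\cap H_t)$ is legitimate — which it is, since the identity holds at every nonnegative integer dimension. The remaining steps are routine bookkeeping with $0/1$ indicator vectors and the basic fact that an $i$-dimensional subspace sits inside $G_r\cap H_t$ iff it sits inside both $G_r$ and $H_t$.
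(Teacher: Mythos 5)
Your proof is correct and follows essentially the same route as the paper: both expand $p$ in the triangular basis of Gaussian coefficients $\gaussian{x}{i}{q}$ and bound each piece via the count of common $i$-dimensional subspaces. Your sum of rank-one matrices $\sum_{S}\alpha_S\beta_S^{\top}$ is exactly the paper's factorization $N_i=\Gamma_i\Delta_i^{T}$ through the subspace-inclusion matrices, written entrywise, so the two arguments coincide.
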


\begin{proof}
This proof is inspired by the proof in \cite{FG85} of a similar claim where a non-bipartite version of this lemma is proved.
To begin with, note that $p$  is a degree $s$ polynomial in $q^x$, and hence can be written as a linear combination of
polynomials $p_i = \gaussian{x}{i}{q}, 0 \leq i \leq s$. Let the linear
combination be given by $p(x) = \sum_{i=0}^s \alpha_i p_i(x)$. For $0 \leq
i \leq s$ define a matrix $N_i$ with rows and columns indexed respectively by
$\calG$, $\calH$ defined as $N_i[G_r,H_s] =
p_i(\dim{G_r \cap H_s})$. By definition of $N_i$, $N = \sum_{i \in [s]}
\alpha_i N_i$.

To bound the rank of $N_i$'s we introduce the following families of
inclusion matrices. For any $j \in [D]$, consider two matrices $\Gamma_j$
corresponding to $\calG$ and $\Delta_j$ corresponding to $\calH$ defined as 
$\Gamma_j(G,I) = 1$ if $\dim(I) = j, G \in \calG, I \subseteq_s G$ and $0$ 
otherwise. $\Delta_j(H,I) = 1$ if $\dim(I) = j, H \in \calH, I \subseteq_s H$ and 
$0$ otherwise.
Note that rank of the these matrices are upper bounded by the number of
columns which is $\gaussian{D}{j}{q}$.
We claim that for any $i \in \{0,1,\ldots,s \}$, $\rank{N_i} \leq \gaussian{D}{i}
{q}$. This completes the proof since $N = \sum_{i \in [s]} \alpha_i N_i$.

To prove the claim, let $\calF_i$ denote the set of all $i$ dimensional subspace of 
$\F_q^D$.  We show that $N_i =
	\Gamma_i\Delta_i^T$. Hence $\rank{N_i} \le
	\min\set{\rank{\Gamma_i },\rank{\Delta_i}}
	\le \gaussian{D}{i}{q}$. For $(G_r, H_t) \in \calG \times \calH$,
\revsay{Changed eqn formatting to improve readability}
	\begin{eqnarray*} 
		\Gamma_i\Delta_i^T(G_r,H_t) 
	&=& \sum_{I  \in \calF_i}  \Gamma_i(G_r,I) \Delta_i^T(I,H_t) \\
	&=& \sum_{I  \in \calF_i}  \Gamma_i(G_r,I) \Delta_i(H_t,I)  \\
	&=& \sum_{I  \in \calF_i}   [I \subseteq_s G_r] \wedge [I \subseteq_s
	H_t] \\
	&=&  \sum_{I \in \calF_i} [I \subseteq_s G_r \cap H_t] \\
	&=& \gaussian{\dim(G_r \cap H_t)}{i}{q} = N_i(G_r,H_t)
\end{eqnarray*}
\end{proof}
We apply Lemma~\ref{cl:matrix-poly-rank-ub} on $N$ defined via $p(x) = q^x-1$
with $s=1$, to get $q^{d^2/4} \le \gaussian{d}{d/2}{q} \le 1+
\gaussian{D}{1}{q} = 1 + (q^D-1)/(q-1)$. 
By definition, $\rank{N} = \rank{M}$. 
This gives that $D = \Omega(d^2)$ and proves Theorem~\ref{thm:pd-lb-main}.

\noindent
\textbf{Lower Bound for $\upd(\calP_d)$ from Rectangle Arguments : }
We now give an alternate proof of for Theorem~\ref{thm:pd-lb-main} using combinatorial rectangle arguments. 
\begin{lem}
\label{cl:int-dim-1-lb}
	For $f:\set{0,1}^n\times \set{0,1}^n \to \set{0,1}$ with $M_f$ 
    denoting the bipartite adjacency matrix of $G_f$, $rank_{\R}(M_f) \le
	q^{O(\upd_\F(f))}$ where $\F$ is a finite field of size $q$.
\end{lem}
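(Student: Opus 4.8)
The plan is to reuse the inclusion-matrix rank bound established in the proof of Lemma~\ref{cl:matrix-poly-rank-ub}, but applied to the subspace families coming from an optimal $\upd$-assignment for $f$ rather than from $\calP_d$. Let $q = |\F|$ and $d = \upd_\F(f)$, and fix an assignment $\phi$ of subspaces of $\F^d$ to $U \cup V$ that realizes $G_f$ with intersection dimension at most $1$. Set $\calG = \set{\phi(u)}_{u \in U}$ and $\calH = \set{\phi(v)}_{v \in V}$, indexed by the vertices themselves (so repeated subspaces are allowed, which affects no rank bound). I would then define the real matrix $N$ of order $|U| \times |V|$ by $N[u,v] = q^{\dim(\phi(u) \cap \phi(v))} - 1$.

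The key observation is what the intersection-dimension-$1$ hypothesis buys: since $\dim(\phi(u) \cap \phi(v)) \in \set{0,1}$ for every pair, the entry $N[u,v]$ equals $0$ exactly when $\phi(u) \cap \phi(v) = \set{0}$ (i.e. $(u,v) \notin E$, so $f(u,v)=0$) and equals $q-1$ exactly when the intersection is one-dimensional (i.e. $(u,v) \in E$, so $f(u,v)=1$). Hence $N = (q-1)\,M_f$. This is the only place the hypothesis enters, and it is precisely what lets me represent $N$ using $p(x) = q^x - 1$, a polynomial of degree $s = 1$ in $q^x$; without the intersection bound the matrix $q^{\dim} - 1$ would require a higher-degree polynomial and the exponent would blow up.

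Next I would invoke the rank machinery. The factorization $N_i = \Gamma_i \Delta_i^{T}$ in the proof of Lemma~\ref{cl:matrix-poly-rank-ub} uses only that $\calG$ and $\calH$ are families of subspaces of a common ambient space (here $\F^d$); it is in no way special to $\calP_d$. Applying that bound with $p(x) = q^x - 1$ and ambient dimension $D = d$ gives
\[
\rank{N} \le \gaussian{d}{0}{q} + \gaussian{d}{1}{q} = 1 + \frac{q^d - 1}{q-1} \le q^d .
\]
Since scaling by the nonzero real $q-1$ preserves rank over $\R$, we have $\rank{M_f} = \rank{N} \le q^d = q^{O(\upd_\F(f))}$, as claimed.

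The only genuine subtlety I anticipate is justifying that the factorization-based rank bound transfers verbatim from the families realizing $\calP_d$ to the arbitrary families $\calG, \calH$ arising from $\phi$; this is immediate once one notes that the proof of Lemma~\ref{cl:matrix-poly-rank-ub} never uses any property of $\calP_d$ beyond the ambient space and the subspace-intersection structure. The rest is the scalar identity $N = (q-1)M_f$, which reduces the whole statement to a single application of that lemma with $s = 1$.
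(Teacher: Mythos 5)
Your proof is correct, but it reaches \cref{cl:int-dim-1-lb} by a genuinely different route than the paper. The paper's own proof is a rectangle decomposition: for each one-dimensional space arising as an intersection, it fixes a basis vector $v$ and sets $S(v) = \set{(a,b) \mid \phi(a)\cap\phi(b) = \vsp{v}}$; the intersection-dimension-$1$ hypothesis makes each $S(v)$ a combinatorial rectangle (if $v \in \phi(a)$ and $v \in \phi(b)$ then $\phi(a)\cap\phi(b)$ contains $\vsp{v}$, hence equals it) and guarantees that each $1$-entry of $M_f$ lies in exactly one $S(v)$, so $M_f = \sum_{v \in T} M_v$ is a sum of at most $(q^d-1)/(q-1)$ rank-one matrices. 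You instead port the inclusion-matrix machinery of \cref{cl:matrix-poly-rank-ub}, correctly observing that its factorization $N_i = \Gamma_i\Delta_i^T$ uses nothing about $\calP_d$ beyond the families living in a common ambient space $\F_q^D$ (and that duplicated subspaces in $\calG,\calH$ do not affect the rank bound), and you invoke the intersection-dimension-$1$ hypothesis only through the scalar identity $N = (q-1)M_f$ for $p(x) = q^x - 1$, yielding $\rank{M_f} \le 1 + \gaussian{d}{1}{q} \le q^d$; the two bounds are essentially identical. As for what each approach buys: the paper's rectangle argument is more elementary (no Gaussian-coefficient machinery) and makes explicit the partition of the $1$s of $M_f$ into at most $(q^d-1)/(q-1)$ monochromatic $1$-rectangles, which the paper reuses immediately in the discussion leading to \cref{prop:upd-cc}; your argument unifies the section's two lower-bound methods, showing that the intersecting-families approach applied to $\calP_d$ subsumes the rectangle approach, and it degrades more gracefully under relaxation -- if the intersection dimension were only bounded by $t$, the sets $S(W)$ with $\dim(W) \ge 2$ are no longer rectangles and the paper's decomposition breaks, whereas a degree-$t$ polynomial in $q^x$ vanishing at $x=0$ and constant on $\set{1,\dots,t}$ still gives $\rank{M_f} \le \sum_{i=0}^{t}\gaussian{d}{i}{q} = q^{O(td)}$ via your route.
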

\begin{proof}
	Let $\phi$ be a subspace assignment realizing $f$ of dimension $d$ with intersection dimension 1. Let
	$S(v)$ for $v \in \F_q^d$ denote $\set{(a,b) \in \set{0,1}^n \times
	\set{0,1}^n ~|~ \phi(a) \cap \phi(b) =
	\vsp{v}}$. Also let $M_v$ denote the matrix representation of $S(v)$.
	That is, $M_v(a,b) = 1 \iff (a,b) \in S(v)$. Consider all $1$
	dimensional subspaces which appear as intersection space for
	some input $(x,y)$. Fix a basis vector for each space and let $T$
	denote the collection of basis vectors of all the intersection spaces. Note
	that for any $(x,y) \in f^{-1}(1)$, there is a unique $v \in \F_q^d$
	(up to scalar multiples) such that $(x,y) \in S(v)$ for otherwise
	intersection dimension exceeds $1$.
	Then $M_f = \sum_{v \in T} M_v$. Now, $rank(M_f) \le \sum_{v \in T} 
	rank(M_v)$. Since $rank(M_v) =	1$, $rank(M_f) \le |T|$.  The fact that 
	the number of $1$ dimensional spaces in $\F^d$ can be at most $\frac{q^d-1}{q-1}$ 
	completes the proof. Note that the rank of $M_f$ can be over any field (we choose $\R$).
\end{proof}
We get an immediate corollary. Any function $f$, such that the adjacency 
matrix of $M_f$ of the bipartite graph $G_f$ is of full rank $2^n$ over some field 
must have $\upd(f) = \Omega(n)$. There are several Boolean functions with this property, well-studied in the context of communication 
complexity (see textbook~\cite{NK97}). Hence, we have
for $f \in \set{\IP_n, \EQ_n, \INEQ_n, \DISJ_n, \PAL^q_n}$,  $\upd_\F(f)$ is  $\Omega(n)$ for any finite field $\F$.

For arguing about $\PAL^q_n$, it can be observed that the graph is strongly regular (as $q \equiv 1 \mod 4$)  and hence the adjacency
matrix has full rank over $\R$~\cite{Bol01}. Except for $\PAL^q_n$, all the above functions have $O(n)$ sized deterministic branching
programs computing them and hence the \pdrl theorem (Theorem~\ref{thm:pdrl}) gives that $\upd$ for these functions (except $\PAL^q_n$) are
$O(n)$ and hence the above lower bound is indeed tight.

From Lemma~\ref{th:subspace-graph-rank}, it follows that the function $\SI_d$ 
also has rank $2^{\Omega(d^2)}$.  To see this, it suffices to observe 
that $\calP_d$ appears as an induced subgraph 
in the bipartite realization of $\SI_d$. Thus, $\upd(\SI_d)$ is 
$\Omega(d^2)$. We proved in Theorem~\ref{thm:subspace-graph-pd} that 
$\pdim(\SI_d) = d$. This establishes a quadratic gap between the two 
parameters.  This completes the proof of Theorem~\ref{thm:pd-lb-main}.

Let $\D(f)$ denote the deterministic communication complexity of the Boolean function $f$.
We observe that the rectangle argument used in the proof of Lemma~\ref{cl:int-dim-1-lb} is similar to the matrix rank based lower bound arguments for communication complexity.
This yields the Proposition~\ref{prop:upd-cc}. If $\upd(f) \le D$, the assignment also gives a partitioning of the $1$s in $M_f$ into at most $\frac{q^D-1}{q-1}$ $1$-rectangles. However, it is unclear whether this immediately gives a similar partition of $0$s into $0$-rectangles as well. Notice that if $\D(f) \le d$, there are at most $2^d$ monochromatic rectangles (counting both $0$-rectangles and $1$-rectangles) that cover the entire matrix. However, our proof does not exploit this difference.

\begin{proposition}
\label{prop:upd-cc}
For a Boolean function $f:\{0,1\}^n \times \{0,1\}^n \to \{0,1\}$ and a finite field $\F$, $\upd_\F(f) \le 2^{\D(f)} \textrm{ and } \D(f)  \le (\pd_\F(f))^2 \log |\F|$
\end{proposition}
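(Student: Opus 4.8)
The plan is to establish the two inequalities separately, each by a direct translation between a protocol/combinatorial object and a subspace assignment. Neither direction needs the rectangle-rank machinery of Lemma~\ref{cl:int-dim-1-lb}; both are essentially encoding arguments.

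For the first inequality $\upd_\F(f) \le 2^{\D(f)}$, I would start from an optimal deterministic protocol for $f$ and use the standard fact that its leaves partition $X \times Y$ into at most $2^{\D(f)}$ monochromatic combinatorial rectangles. Let $R_1, \ldots, R_k$ with $k \le 2^{\D(f)}$ be the $1$-rectangles (the leaves labelled $1$), where $R_i = A_i \times B_i$. Working in $K = \F^k$, assign to each $u \in U$ the coordinate subspace $\phi(u) = \vsp{e_i : u \in A_i}$ and to each $v \in V$ the coordinate subspace $\phi(v) = \vsp{e_i : v \in B_i}$. Since both assignments are spanned by standard basis vectors, their intersection is again a coordinate subspace, namely $\phi(u) \cap \phi(v) = \vsp{e_i : (u,v) \in R_i}$. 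Because the $R_i$ are the $1$-rectangles of a \emph{partition}, the index set $\{i : (u,v) \in R_i\}$ is a singleton when $f(u,v) = 1$ and empty when $f(u,v) = 0$; hence $\phi$ realizes $G_f$ and the intersection, whenever nonzero, is exactly one-dimensional, while the ambient dimension is $k \le 2^{\D(f)}$. (This assignment uses only standard-basis-spanned subspaces, so it simultaneously bounds $\uspd_\F(f)$, consistent with $\uspd(f) = \bp(G_f)$ from Theorem~\ref{th:spdEqbc}.)

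For the second inequality $\D(f) \le (\pd_\F(f))^2 \log|\F|$, I would fix an optimal projective assignment $\phi$ of dimension $d = \pd_\F(f)$ over $\F$, known to both players in advance. On input $(u,v)$, Alice holds the subspace $\phi(u) \subseteq \F^d$, which has dimension at most $d$ and is therefore described by a basis of at most $d$ vectors in $\F^d$, i.e. at most $d^2$ field elements. Alice transmits this basis to Bob using at most $d^2 \lceil \log|\F| \rceil$ bits; Bob, who knows $\phi(v)$, computes $\dim(\phi(u) \cap \phi(v))$ by linear algebra and outputs $1$ precisely when the intersection is nontrivial, which by the definition of a realizing $\phi$ equals $f(u,v)$. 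This is a correct deterministic protocol of cost at most $(\pd_\F(f))^2 \log|\F|$.

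The routine steps (the intersection formula for coordinate subspaces, and the bit-cost of encoding a basis) are immediate. The one point demanding care is in the first bound: obtaining intersection dimension exactly $1$, rather than merely nonzero, relies crucially on the protocol leaves forming a \emph{partition} so that each $1$-entry of $M_f$ lies in a unique rectangle. A mere \emph{cover} by $1$-rectangles (as furnished by nondeterministic communication, and corresponding to $\spd$) would still realize $f$ but could force higher-dimensional intersections, which is exactly why such a cover does not control $\upd$ in the same way. This is the conceptual heart of the first inequality, and I expect it to be the only step where the argument is not purely mechanical.
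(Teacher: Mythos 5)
Your proof is correct and takes essentially the same route as the paper: your explicit coordinate-subspace assignment in $\F^k$ indexed by the $1$-rectangles is exactly what the paper obtains by applying Lemma~\ref{lem:OrLemma} to the rectangle indicators $f = f_1 \lor \cdots \lor f_k$, with intersection dimension $1$ secured in both cases by the fact that each $1$-entry lies in a \emph{unique} rectangle of the protocol partition. Your protocol for the second inequality (Alice sends a basis of $\phi(u)$ as a $\pd_\F(f) \times \pd_\F(f)$ matrix of field elements, Bob tests nontriviality of the intersection) is identical to the paper's.
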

\begin{aproof}{Proof of \cref{prop:upd-cc}}{app:upd-cc}
We give a proof of the first inequality.
Any deterministic communication protocol computing $f$ of cost $\D(f)$,  
partitions $M_f$ into  $k$ rectangles where $k \le 2^{\D(f)}$ 
rectangles. Define $f_i:\{0,1\}^n \times \{0,1\}^n \to \{0,1\}$ for 
each rectangle $R_i$ $i \in [k]$, such that $f_i(x,y) = 1$ iff $(x,y) 
\in R_i$. Note that $\upd_\F(f_i) = 1$ and $f =  \lor_{i=1}^k f_i$. 
For any $(x,y) \in \set{0,1}^n \times \set{0,1}^n$ if $f(x,y)=1$, there 
is exactly one $i\in [k]$ where $f_i(x,y)=1$. Hence for each $j \in 
[k], j \neq i$, the intersection vector corresponding 
to the edge $(x,y)$ in the assignment of $f_j$ is trivial. 
Hence the assignment obtained by applying \cref{lem:OrLemma}, to $f_1,
\lor f_2 \lor \ldots f_k$ will have the property that for any $(x,y)$ 
with $f(x,y) = 1$, the intersection dimension is $1$. Hence $\upd_\F(f) 
\le k \le 2^{\D(f)}$. To prove the second inequality, consider the protocol where Alice sends the 
subspace associated with her input as a $\pd_\F(f) \times \pd_\F(f)$ matrix. Bob then checks if this subspace intersects with his own subspace and sends $1$ if it does so and sends $0$ otherwise.
\end{aproof}
An immediate consequence of \cref{prop:upd-cc} is that all symmetric functions $f$ on $2n$ bits have have projective dimension $O(n)$. 
Note that the first inequality is tight, up to constant factors in the exponent. To see this, consider the function $f:\zon \times \zon \to \zo$ whose
$\pd_\F(f) = \Omega(2^{n/2})$ \cite[Proposition 1]{PR92} and note that $\D(f)$ for any $f$ is at most $n$. Tightness of second inequality is witnessed by $\SI_d$ since by Lemma~\ref{th:subspace-graph-rank}, $\D(\SI_d) = \Omega(d^2)$ while $\pd(\SI_d) = d$.

\section{Bitwise Decomposable Projective Dimension}
The restriction of intersection dimension being 1, although potentially
useful for lower bounds for branching program size, does not 
capture the branching program size exactly. 
We start the section by demonstrating a function where the gap is exponential.
We show the existence of a Boolean function $f$ such that the size of the optimal branching program computing it is very high but has a very small projective assignment with intersection dimension 1 for any balanced partition of the input.

\begin{proposition}\label{thm:upd-failure}(Implicit in Remark 1.30 of \cite{juknatext})
There exist a function $f:\set{0,1}^n \times \set{0,1}^n$ that requires size $\Omega(\frac{2^n}{n})$ for any branching program computing $f$ but the $\upd(f) \le O(n)$ for any balanced partitioning of the input into two parts.
\end{proposition}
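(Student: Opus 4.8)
The plan is to argue by counting, which matches the statement's assertion that the function is not explicit. Fix the field $\F_2$ throughout. The branching-program lower bound is the easy half: a size-$s$ deterministic branching program on $2n$ variables is specified by labelling each of its $s$ nodes with a variable (at most $2n$ choices) and two out-neighbours (at most $(s+2)^2$ choices), so the number of functions on $2n$ bits computable in size $\le s$ is at most $(2n(s+2)^2)^s = 2^{O(s\log s)}$. Taking $s = 2^n/n$ gives $\log s = \Theta(n)$, so this count is $2^{O(2^n)}$; hence all but a $2^{c_1 2^n}$-sized set of functions require $\bpsize = \Omega(2^n/n)$, for an absolute constant $c_1$.

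The other half is to exhibit a family $\mathcal{F}$ of functions, each with $\upd_\pi(f)=O(n)$ for every balanced partition $\pi$, whose size strictly exceeds $2^{c_1 2^n}$; then some $f \in \mathcal{F}$ lies outside the BP-easy set and the gap follows. For the $\upd$ upper bound I would route through \cref{prop:upd-cc}: if the repartitioned function has deterministic communication complexity $\D_\pi(f) \le \log(cn)$, then $\upd_\pi(f) \le 2^{\D_\pi(f)} \le cn = O(n)$. Equivalently, it suffices that under $\pi$ the $1$-inputs partition into $O(n)$ combinatorial rectangles, since such a partition yields $\upd_\pi(f) \le \uspd(f) = \bp(G_f) = O(n)$ via coordinate subspaces with intersection dimension $1$. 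The crucial point, and the reason this does not trivialise the function, is that these rectangles may have arbitrarily complicated sides: the naive intuition ``few rectangles $\Rightarrow$ easy'' fails because recognising a single arbitrary side $\{x : x \in A_i\}$ can itself demand a branching program of size $\Omega(2^n/n)$. Thus a function can simultaneously decompose into few rectangles and be BP-hard.

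To see that the second half is plausible, note that the per-partition version is immediate. Consider the sub-family in which $M_f$ has at most $cn$ distinct rows: grouping equal rows into classes $X_1,\dots,X_{cn}$ with common neighbourhoods $N_1,\dots,N_{cn}$ exhibits the edge set of $G_f$ as the disjoint union of the bicliques $X_k \times N_k$, so $\upd_{\pi_0}(f) \le \bp(G_f) \le cn$ for the natural partition $\pi_0$; and there are at least $(cn)^{2^n} = 2^{\Omega(2^n \log n)} \gg 2^{c_1 2^n}$ such functions (the row-patterns and the assignment of rows to patterns are free and complex), so counting already produces a BP-hard member with small $\upd$ under $\pi_0$. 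The main obstacle is upgrading this to hold \emph{simultaneously under every} balanced partition: the per-partition count is generous, but controlling the rectangle structure of one fixed function across all $\binom{2n}{n}$ partitions at once, while keeping the surviving family larger than $2^{c_1 2^n}$, is delicate, since the clean ``few distinct rows'' families become small when required to be robust over all partitions. This all-partition guarantee is precisely the content I would import (non-explicitly) from Remark~1.30 of \cite{juknatext}: it furnishes a BP-hard function whose communication structure remains simple under every balanced partition, which combined with \cref{prop:upd-cc} yields $\upd_\pi(f)=O(n)$ for all $\pi$ and completes the argument.
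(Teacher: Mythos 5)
Your counting half is fine and matches the paper's, but the crux of the statement --- that $\upd(f)=O(n)$ holds \emph{simultaneously for every balanced partition} --- is exactly the step you leave unproven, and the black box you invoke does not contain it. Remark~1.30 of \cite{juknatext} supplies only the counting argument over relabelled matchings (i.e., the existence of a BP-hard member in a structured family); it does not furnish an all-partition communication guarantee. The paper closes the gap with a specific family you never consider: the $N!$ functions $f_\pi(u,v)=\EQ_n(\pi^{-1}(u),v)$ obtained by relabelling one side of the perfect-matching graph $G_{\EQ_n}$, where $N=2^n$. Since $N!=2^{\Theta(n2^n)}$ exceeds the $2^{O(t\log t)}$ count of size-$t$ branching programs, some $f_\pi$ requires size $\Omega(2^n/n)$; and for the upper bound the paper does not go through communication complexity at all, but instead uses the fact that Theorem~\ref{thm:pdrl-upd} (the \pdrl construction, which yields intersection dimension $1$) applies to the $O(n)$-size branching program for $\EQ_n$ under \emph{any} partition $\Pi$, and then transfers the resulting assignment from $\EQ_n^\Pi$ to $f_\pi^\Pi$ via the value-preserving correspondence between their inputs. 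That transfer is the nontrivial content of the all-partition claim, and nothing in your proposal replaces it.

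This matters because both of your proposed routes to small $\upd$ are provably too strong to support the theorem. You require either $\D_\pi(f)\le\log(cn)$ (so that $\upd_\pi(f)\le 2^{\D_\pi(f)}=O(n)$ via \cref{prop:upd-cc}) or a partition of the $1$-inputs into $O(n)$ rectangles (so that $\upd(f)\le\uspd(f)=\bp(G_f)=O(n)$ via \cref{th:spdEqbc}). But $\upd(f)=O(n)$ is a far weaker property than either: already $\EQ_n$ has $\upd(\EQ_n)=\Theta(n)$ while $\D(\EQ_n)=\Theta(n)\gg\log(cn)$ and $\bp(G_{\EQ_n})=2^n$, since a perfect matching admits only single-edge bicliques (compare the paper's observation that $\spd(G_{\EQ_n})\ge 2^n$). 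The same holds for every permuted matching $f_\pi$, so any family satisfying your sufficient conditions excludes precisely the witnesses that make the argument work; and your ``few distinct rows'' family, as you yourself concede, loses its structure under repartitioning. The missing idea is to take many relabelings of a \emph{single} function whose small $\upd$ under every partition comes from Theorem~\ref{thm:pdrl-upd} applied to a small branching program, so that the $\upd$ upper bound is inherited by all $N!$ members while the counting still beats the number of small branching programs.
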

\begin{aproof}{Proof of \cref{thm:upd-failure}}{app:upd-failure}
Consider the function $\EQ_n$. The graph $G_{\EQ_n}(U,V,E)$ with $U=V=N$ is a perfect matching where $N = \zon$. Relabel the vertices in $U$ of this graph to produce a family of $\mathcal{G}$ of $N!$ different labeled graphs.
Let $\mathcal{F}$ be the set of Boolean functions whose corresponding 
graph is in $\mathcal{G}$ (or equivalently $\mathcal{F}$ of $N!$ different 
functions). Let $t$ be the smallest number such that any function in  
$\mathcal{F}$ can be computed by a branching program of size at most $t$. 
The number of branching programs of size $\le t$ (bounded by $O(t^t)$ \cite{juknatext}) forms 
an upper bound on $|\mathcal{F}|$. Thus, $2^{O(t\log t)} \ge N!$, and 
hence $t$ is  $\Omega \left(\frac{2^n}{n} \right )$. Hence there must 
exist a function $f \in \mathcal{F}$ such that $\upd(f) = \upd(\EQ_n) \le 
n$ but $\bpsize(f)$ is $\Omega \left(\frac{2^n}{n} \right )$ for this 
partition.

We now argue upper bound for $\upd(f)$ for any balanced partition. Consider the function $f_{\pi}$ obtained by a permutation $\pi \in S_N$ on the $U$ part of $\EQ_n$ graph. Consider a partition $\Pi$ of $[2 n]$. Let $G_{\EQ_n}^{\Pi}, G_{f_{\pi}}^{\Pi}$ be the corresponding bipartite graphs (and $\EQ_n^\Pi$ and $f_\pi^{\Pi}$ be the corresponding functions) with respect to the partition $\Pi$, of $\EQ_n$ and $f_{\pi}$ respectively.

We claim that  $\upd(G_{\EQ_n}^{\Pi}) = \upd(G^{\Pi}_{f_{\pi}})$. 
  By definition for any $(u,v) \in \set{0,1}^n \times \set{0,1}^n$, $f_{\pi} (u,v) = \EQ_n (\pi^{-1}(u),v)$. Also, let $(u', v')$ be the corresponding
  inputs according to the partition $\Pi$ of $[2 n]$. That is 
  $f^{\Pi}_{\pi}(u',v') = f_{\pi}(u,v) = \EQ_n (\pi^{-1}(u),v)$. Let $x = \pi^{-1}(u)$ and $y = v$. Observe that, for $(x, y)
  \in \{ 0, 1 \}^n \times \{ 0, 1 \}^n$ there is unique $(x', y')$
  corresponding to it. Hence $f^{\Pi}_{\pi} (u', v') = \EQ_n (\pi^{-1}(u),v) = \EQ_n^\Pi(x', y')$. Thus for any input $(u', v')$ of $f^{\Pi}_{\pi}$
  there is unique input $(x',y')$ of $\EQ_n^\Pi$ obtained via the above
  procedure. Thus, from the $\upd$ assignment for $\EQ^{\Pi}_n$ we can get a
  $\upd$ assignment for $f^{\Pi}_{\pi}$. Observing that Theorem~\ref{thm:pdrl-upd} holds for any partition $\Pi$ of the input, we get a $\upd$ assignment for $\EQ^{\Pi}_n$.
\end{aproof}

The above proposition can be shown by adapting the 
counting argument presented in 
Remark 1.30 of \cite{juknatext}. 

\subsection{A Characterization for Branching Program Size}
\label{sec:bpsize-char}
Motivated by strong properties observed in Proposition~\ref{prop:pdrl-props},  we make the following definition.

\begin{definition}[\textbf{Bitwise Decomposable Projective Dimension}]\label{def:bitpdim}
	Let $f$ be a Boolean function on $2n$ bits and $G_f$ be its bipartite
	realization. The bipartite graph $G_f(X,Y,E)$ is said to have \emph{bit
	projective dimension}, $\bpdim(G) \le d$, if there exists a 
	collection of subspaces of $\F_2^d$ denoted $\calC = \{U_i^a\}_{i \in
	[n], a \in \{0,1\}}$ and $\calD = \{V_j^b\}_{j \in [n], b \in
	\{0,1\}}$ where a projective assignment $\phi$ is obtained by
	associating  subspace $U_i^a$ with a bit assignment $x_i = a$ and
	$V_j^b$ with $y_j = b$ satisfying the conditions listed below.
	\begin{enumerate}
	\item \label{prop:bit-assgn}
		 for all $(x,y) \in \set{0,1}^n \times \set{0,1}^n$, $\phi(x) =
		\vspan{1\le i \le n}{U_i^{x_i}}$, $\phi(y)= \vspan{1\le j \le
		n}{V_j^{y_j}}$ and $f$ is realized by $\phi$.
	\item \label{prop:diff-std-basis}
	Let $S = \{e_i - e_j \mid i,j \in [d], i \ne j\}$. For any $W \in \calC \cup \calD$, $\exists S' \subseteq S$ such that $W = \vsp{S'}$. 
	 \item\label{def:prop-disj} 
	 		for any $S_1,S_2 \subseteq ([n] \times \set{0,1})$ such that $S_1 \cap S_2 = \phi$,
            $\vspan{(i,a) \in S_1}{U_i^a}\cap \vspan{(j,b) \in S_2}{U_j^b}=\set{0}$. Same property must hold for subspaces in $\calD$.
\end{enumerate}

\end{definition}

We show that the new parameter bitwise decomposable projective dimension ($\bpdim$) tightly characterizes the branching program size, up to constants in the exponent. 

\begin{lem}\label{bpdim-pdrl}
Suppose $f:\set{0,1}^n \times \set{0,1}^n \to \set{0,1}$ has deterministic
branching program of size $s$ then $\bpdim(f) \le 6s$
\end{lem}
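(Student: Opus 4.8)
The plan is to observe that the Pudl\'ak--R\"odl construction already does almost all the work, and that only the disjointness condition (Property~\ref{def:prop-disj}) has to be engineered in. Indeed, \cref{prop:pdrl-props} starts from a size-$s$ branching program for $f$ and produces families $\calC=\{U_i^a\}$, $\calD=\{V_j^b\}$ of subspaces of $\F_2^s$ that already satisfy Property~\ref{prop:bit-assgn} (it realizes $f$ via $\phi(x)=\vspan{1\le i\le n}{U_i^{x_i}}$) and Property~\ref{prop:diff-std-basis} (each class is a span of vectors $e_k-e_\ell$). So the entire content of the lemma is to re-encode this assignment in a space of dimension at most $6s$ so that, in addition, the $2n$ subspaces on each side become independent. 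First I would record what Property~\ref{def:prop-disj} means here: taking $S_1$ a singleton and $S_2$ its complement shows it is equivalent to requiring $\{U_i^a\}_{i,a}$ to form an internal direct sum, and under Property~\ref{prop:diff-std-basis} this says exactly that the edge-sets labelling the various $x$-classes carry no common $\F_2$-dependency, i.e.\ every cycle in the underlying graph of the $x$-edges is confined to a single class (and similarly for $y$).

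Concretely, I would first normalise the branching program: give it a single source and a single accepting sink and, by inserting trivial reads, arrange that the two sides are read alternately along every path, at the cost of at most doubling the number of nodes. Keeping the edge encoding $e=(u\to w)\mapsto e_u-e_w$ of \cref{thm:pdrl}, I would then split each node into an out-copy together with one in-copy per incoming edge, reroute every edge from an out-copy to the matching in-copy, and add a connector edge from each in-copy to its out-copy; the rerouted edge keeps its original label and each connector is placed in the same class as the edge feeding its in-copy. This keeps every subspace a span of vectors $e_k-e_\ell$ (Property~\ref{prop:diff-std-basis}), and because each accepting path still telescopes to the same vector (the connectors merely subdivide it), the intersection witnessing $f(x,y)=1$ is preserved and no spurious intersection is created, so Property~\ref{prop:bit-assgn} survives. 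Counting copies --- $s$ out-copies, at most $2s$ in-copies, with the alternation overhead absorbed --- gives an ambient dimension of at most $6s$.

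The step I expect to be the real obstacle is proving that this (or any constant-blowup) re-encoding actually secures Property~\ref{def:prop-disj}. The difficulty is that independence and realizability pull in opposite directions: directness wants distinct classes to occupy disjoint coordinates, whereas the telescoping that realizes $f$ forces the edges along an accepting path to share coordinates. In particular, \emph{reconvergence} in the branching program --- two partial computations reaching a common state --- produces four edges in four different classes with $\xi_{e_1}-\xi_{e_2}-\xi_{e_3}+\xi_{e_4}=0$, a cross-class dependency that violates directness and that na\"ive node-splitting does not remove, since the copies re-merge at the reconverging state. The heart of the argument is therefore a gadget that localises every such dependency inside a single class without unfolding the DAG (which would be exponential) and without disturbing the accepting-path intersection; verifying that the gadget simultaneously delivers directness on both sides, preserves Properties~\ref{prop:bit-assgn}--\ref{prop:diff-std-basis}, and keeps the copy count at $6s$ is where essentially all the work lies.
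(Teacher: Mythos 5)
Your framing is right --- \cref{prop:pdrl-props} already delivers Properties~\ref{prop:bit-assgn} and~\ref{prop:diff-std-basis}, so the whole lemma is about securing Property~\ref{def:prop-disj} --- and your diagnosis that your own out-copy/in-copy splitting fails at reconvergent nodes is accurate. But the proposal then stops: the step you yourself call ``the heart of the argument,'' a gadget localising every cross-class dependency, is never supplied, so the attempt is incomplete at exactly the one property that separates $\bpdim$ from the plain Pudl\'ak--R\"odl bound. The deeper issue is that the obstacle you anticipate is illusory, because of an observation your attempt misses: Property~\ref{def:prop-disj} constrains only spans of classes from the \emph{same} partition ($\calC$ alone, and $\calD$ alone); it says nothing about dependencies mixing $x$-labelled with $y$-labelled edge vectors --- indeed it cannot, since such mixed dependencies are precisely the intersections $\phi(x)\cap\phi(y)$ that realize $f$. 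So there is no need to confine every cycle to a single class, as you demand; it suffices to force every cycle to use edges from \emph{both} partitions.

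The paper achieves this with a move simpler than your connector scheme: subdivide every edge $(u,v)$ labelled $x_i=b$ into $(u,V_{uv})$ labelled $x_i=b$ followed by two parallel edges $(V_{uv},v)$ labelled $y_1=0$ and $y_1=1$, so the computed function is unchanged whatever $y_1$ is. The crucial effect is that each $x$-labelled edge now owns a private middle vertex $V_{uv}$ touched by no other $x$-labelled edge; hence the vectors $e_u-e_{V_{uv}}$ over all $x$-edges are \emph{globally} linearly independent (any vanishing combination must kill each coordinate $e_{V_{uv}}$ separately), which is even stronger than the internal-direct-sum reformulation you correctly extracted from Property~\ref{def:prop-disj}. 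Your reconvergent diamond still produces a telescoping dependency after subdivision, but it now necessarily runs through the $y_1$-labelled connectors, mixes the two sides, and is therefore harmless; your construction fails precisely because you place each connector ``in the same class as the edge feeding its in-copy,'' which keeps the dependency inside one partition. Your worry about exponential unfolding is likewise unfounded: the fix is local, adding one vertex and two edges per edge, and the count $|E(B'')| \le 3|E(B')| \le 6|V(B)|$ gives the stated bound of $6s$ directly, with no alternation preprocessing needed.
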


\begin{aproof}{Proof of \cref{bpdim-pdrl}}{app:bpdim-pdrl}
The subspace assignment obtained by applying 
(\cref{thm:pdrl-appendix}) on an arbitrary
branching program need not satisfy Property~\ref{def:prop-disj} because there
can be a vertex $z$ that has two edges incident on it reading different
variables from the same partition. To avoid this, we subdivide every edge. We
show that this transformation is sufficient to get a $\bpdim$ assignment. We
now give a full proof.
\begin{figure}[htp!]
	\centering
	\includegraphics[scale=0.9]{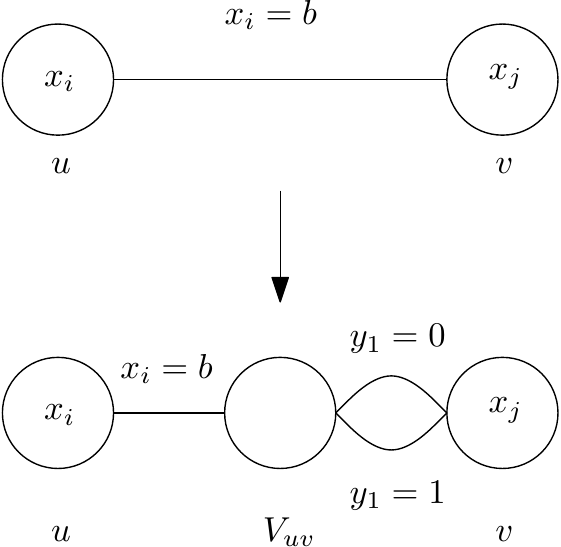}
	\caption{Edge modification}
	\label{fig:split}
\end{figure}
Let $B$ be a deterministic branching program computing $f$. Denote the first
$n$ variables of $f$ as $x$ and the rest as $y$. We first
apply \pdrl transformation on $B$ to obtain a branching program $B'$ computing
$f$. We note that $|V(B')|=|V(B)|$.  Obtain $B''$ from $B'$ by subdividing
every edge $(u,v)$ checking a variable $x_i=b$ from partition $x$ to get three
edges $(u,V_{uv})$ checking $x_i=b$ and add two edges between 
$(V_{uv},v)$ one which checks $y_1=0$  and another which checks $y_1=1$ 
(see Figure~\ref{fig:split}).

Clearly the transformation
does not change the function computed by the branching program. Since we are
taking every edge of the branching program $B'$ and introducing two more
edges, the total number of edges in $B'$ is $3|E(B')|$. Since $B'$ is a
deterministic
branching program, every vertex $v \in B'$ has out degree at most $2$
and at least $1$ for every node except sink node. Hence $|E(B')|\le 2(|V(B')|)$. 
Along with $|E(B'')|=3|E(B')|$, we get $|E(B'')|\le 6(|V(B')|)= 6(|V(B)|)$. Now label
every vertex of $B''$ with standard basis vectors as it is done in \pdrl Theorem
(\cref{thm:pdrl-appendix}). Let
$\phi$ be projective assignment obtained from $B''$ via
\pdrl theorem. We claim that $\phi$ satisfies all the requirements of
$\bpdim(f)$. 
\begin{enumerate}
  \item Since $\phi$ is obtained via \pdrl it captures adjacencies of $G_f$.
	  Hence property~\ref{prop:bit-assgn} holds.
	  Property~\ref{prop:diff-std-basis} is satisfied by \pdrl
	  assignment. (See appendix~\ref{app:pdrl-proof})
  \item The standard basis vector $e_u$ corresponding to vertex $u$ appears
	  only in edges incident on $u$ in \pdrl assignment. For any edge
	  $(u,v)$ querying a variable $x_i=b$ the other edges incident to $v$
	  must query variables from $y$. All the edges incident on
	  $u$, except $(u,v)$ must also query variables from $y$.
	  Otherwise, there is an edge $(w,u)$ which queries a variable $x_j$
	  and our transformation would have subdivided the edge. Hence
	  $e_u,e_v$ belongs only to $H_{x_i=b}$ amongst  $\set{H_{x_i=b}}_{i
		  \in [n], b \in \set{0,1}}$. This implies
		  Property~\ref{def:prop-disj}.
\end{enumerate}
\end{aproof}

We show that given a $\bpdim$ assignment for a function $f$, we can construct a branching program computing $f$.
\begin{theorem}[Theorem~\ref{thm:bitpdim-converse} restated]
For a Boolean function $f:\set{0,1}^n \times \set{0,1}^n \to \set{0,1}$ with $\bpdim(f) \le d$, there exists a deterministic branching program computing $f$ of size $d^c$ for some absolute constant $c$.
\end{theorem}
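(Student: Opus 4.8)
The plan is to read a $\bpdim$ assignment of dimension $d$ as a combinatorial cycle-detection problem on a graph with $O(d)$ vertices and edges, and then to compute that problem with a small deterministic branching program by appealing to the fact that undirected connectivity lies in deterministic logarithmic space. First I would translate the algebra into graph language. By Property~\ref{prop:diff-std-basis}, each $U_i^a \in \calC$ is spanned by a set of difference vectors $e_k - e_\ell$, so it is the \emph{difference space} $\mathsf{diff}(H_i^a) := \vsp{e_k - e_\ell : \{k,\ell\} \in H_i^a}$ of a graph $H_i^a$ on the vertex set $[d]$, which I may take to be a forest with $|E(H_i^a)| = \dim U_i^a$. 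Since $\mathsf{diff}(A) + \mathsf{diff}(B) = \mathsf{diff}(A \cup B)$, we obtain $\phi(x) = \vspan{i}{U_i^{x_i}} = \mathsf{diff}(H_x)$ where $H_x := \bigcup_i H_i^{x_i}$, and similarly $\phi(y) = \mathsf{diff}(K_y)$ with $K_y := \bigcup_j K_j^{y_j}$ for the $\calD$-side forests $K_j^b$ (those with $\mathsf{diff}(K_j^b) = V_j^b$). Property~\ref{def:prop-disj}, applied with a singleton $S_1$ against its complement, forces $\{U_i^a\}_{i,a}$ to be in direct sum; comparing dimensions via $\dim \mathsf{diff}(G) = d - c(G)$, where $c(\cdot)$ counts connected components, then shows that the $H_i^a$ are edge-disjoint and that their union $F_X := \bigcup_{i,a} H_i^a$ is a forest on $[d]$. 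Consequently every $H_x$ is a subforest of $F_X$, and likewise $K_y \subseteq F_Y$.

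Next I would characterize $f$ as cycle detection. Using $\dim(A \cap B) = \dim A + \dim B - \dim(A+B)$ together with $\dim \mathsf{diff}(G) = d - c(G)$, the intersection $\phi(x) \cap \phi(y) = \mathsf{diff}(H_x) \cap \mathsf{diff}(K_y)$ is nonzero exactly when $|E(H_x)| + |E(K_y)| > d - c(H_x \cup K_y)$, i.e.\ exactly when the combined edge multiset of $H_x$ and $K_y$ is dependent in the graphic matroid of $[d]$. Since $H_x$ and $K_y$ are each forests, this happens iff the multigraph $H_x \cup K_y$ contains a cycle, and hence by Property~\ref{prop:bit-assgn}, $f(x,y) = 1$ iff $H_x \cup K_y$ has a cycle. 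I would also record that only $O(d)$ variables are relevant: a variable $x_i$ with $U_i^0 = U_i^1$ leaves $\phi(x)$ unchanged and so cannot affect $f$, while each remaining variable contributes at least one edge to the forest $F_X$, so there are at most $|E(F_X)| + |E(F_Y)| \le 2(d-1)$ relevant variables, and $H_x \cup K_y$ always has $d$ vertices and $O(d)$ edges.

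Finally I would build the branching program. Treating the fixed forests $F_X, F_Y$ and the edge-labels $(i,a)$, $(j,b)$ as non-uniform advice, the presence of each edge of $H_x \cup K_y$ is the value of a single input bit, so a deterministic machine with random access to $(x,y)$ can enumerate the present edges and decide whether $H_x \cup K_y$ contains a cycle. This is an instance of undirected connectivity on an $O(d)$-vertex graph: counting the present edges and, using Reingold's theorem that undirected $s$-$t$ connectivity is in $\mathsf{L}$, counting the connected components, then comparing the two quantities, all runs in space $O(\log d)$. As deterministic logarithmic space (with advice) is computed by deterministic branching programs of size polynomial in the number of configurations, this yields a branching program of size $d^c$ for an absolute constant $c$, proving the theorem. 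The main obstacle is precisely this last step: the naive branching program that tracks the evolving connected components would need a state for every partition of $[d]$ and hence have size $2^{\Theta(d \log d)}$. The crux is to recognize the underlying problem as \emph{undirected} cycle detection, so that directed reachability and the attendant $\mathsf{NL}$ barrier are avoided, and to keep the size polynomial by invoking logarithmic-space connectivity.
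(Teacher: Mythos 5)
Your proposal is correct and follows essentially the same route as the paper: the paper likewise turns the $\bpdim$ assignment into an undirected graph $G^*$ on the $d$ standard basis vectors (with an edge $\{u,v\}$ whenever $e_u-e_v$ lies in one of the subspaces selected by the input bits), proves that $f(x,y)=1$ iff $G^*$ has a cycle --- using Property~\ref{def:prop-disj} to rule out one-sided cycles and a telescoping-sum argument for the two-sided case --- and then invokes Reingold's logspace algorithm to detect the cycle with a polynomial-size deterministic branching program, further tuning the exponent to $3+\epsilon$ via a random-walk $\RL$ algorithm with fixed random bits. Your spanning-forest normalization, the matroid-rank accounting, and the explicit observation that only $O(d)$ input variables are relevant (which is what justifies size $d^c$ rather than $\mathrm{poly}(n)$) are somewhat more careful renderings of steps the paper treats briskly, but they do not constitute a different approach.
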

\begin{proof}
Consider the subspace associated with the variables $\calC, \calD$ 
of the $\bpdim$ assignment as the advice string. These can be specified by 
a list of $n$ basis matrices each of size $d^2$. Note that for any any $f$ which has a polynomial sized branching program, $d = \bpdim(f)$ is at most $poly(n)$, and hence the advice string is $poly(n)$ sized and depends only
on $n$.

We construct a deterministic branching program computing $f$ as follows. On 
input $x,y$, from the basis matrices in $\calC, \calD$, construct an undirected
graph\footnote{Note that this is not a deterministic branching program.} $G^*$ with all standard basis vectors in $\calC, \calD$ as vertices and add 
an edge between two vertices $u$, $v$ if $e_u-e_v \in U_i^{x_i}$ or $e_u-e_v \in V_j^{y_j}$ for $i,j \in [n]$. 
For input $x,y$, $f(x,y) = 1$ iff $G^*$ 
has a cycle. 
To see this, let $C = C_1 \cup C_2$ be a cycle in $G^*$ where $C_1$ consists of edges from 
basis matrices in $\calC$ and $C_2$ contain edges from basis matrices in 
$\calD$. Note that if one of $C_1$ or $C_2$ is empty then there is a cycle 
consisting only of vectors from $\calC$ which implies a linear dependence 
among vectors in $\calC$. But this contradicts Property~\ref{def:prop-disj} of 
$\bpdim$ assignment. Hence both $C_1$ and $C_2$ are non-empty.

Then, it must be that $\sum_{(u,v) \in C_1} e_u - e_v + \sum_{(w,z) \in C_2} 
e_w-e_z = 0$. Hence $\sum_{(u,v) \in C_1} e_u - e_v = -\sum_{(w,z) \in C_2} 
e_w-e_z $. Hence we get a vector in the intersection which gives $f(x,y) = 1$. Note that if $f(x,y) = 1$, then clearly there is a non-zero intersection 
vector. If we express this vector in terms of basis, we get a cycle in $G^*$. 

Hence, to check if $f$ evaluates to $1$, it suffices check if there is a cycle in $G^*$ which is solvable in $\L$ using Reingold's algorithm~\cite{Rei08}. The log-space algorithm can also be converted to an equivalent branching program of size $n^c$ for a constant
$c$.

 We can improve the constant $c$ to $3+\epsilon$. We achieve this using the well known random walk based $\RL$ algorithm for reachability~\cite{AKLLR79}, amplifying the error and suitably fixing the random bits to achieve a non-uniform branching program of size $d^{3+ \epsilon}$.

The $\RL$ algorithm requires to store $\log d$ bits to remember the current vertex while doing the random walk and 
another $\log d$ bits to store the next vertex in the walk. It performs a walk of length $4d^3$ and answers correctly with probability of $1/2$ \cite{MU05}. Amplifying the error does not incur any extra space as the algorithm has a one-sided error and it never errs when it accepts. This gives a probabilistic Turing machine using $2\log d + 1$ work space. By amplifying the success probability, we can obtain a choice of a random bits which works for all inputs of a fixed length. 
The conversion of this machine to a branching program will incur storing of the head index position of the work tape and input tape position which incur an additional $\log\log d + \log d$ space.  Hence overall space is $3\log d + \log \log d = (3+\epsilon) \log d$ for small fixed $\epsilon > 0$, thus proving that $c \leq 3 + \epsilon$.

\end{proof}

Assuming $\CeL \not\subseteq 
\L/\poly$, the function $\SI_d$ (a language which is hard for $\CeL$ under Turing reductions) cannot be 
computed by deterministic branching programs of polynomial size.
\begin{atheoremproof}{Proof of $\CeL$ Hardness of $\calP_d$}{app:pd-hardness}
\begin{proposition}\label{th:pd-hardness}
	The function family $\{\SI_d\}_{d \ge 0}$ is hard for $\CeL$ via
	logspace Turing reductions. Moreover, the negation of $\{\SI_d\}_{d \ge 0}$ is in $\L^{\CeL}$ (and hence in $\NC^2$).
\end{proposition}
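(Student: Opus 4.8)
The plan is to route both claims through one linear-algebraic identity connecting subspace intersection to rank. For $A,B \in \F^{d\times d}$, let $C$ be the $2d \times d$ matrix obtained by stacking $A$ above $B$, so that $\rowspan(C) = \rowspan(A) + \rowspan(B)$. The Grassmann dimension formula then gives $\dim(\rowspan(A) \cap \rowspan(B)) = \rank{A} + \rank{B} - \rank{C}$. Consequently $\neg\SI_d(A,B) = 1$, i.e.\ $\rowspan(A) \cap \rowspan(B) = \set{0}$, holds exactly when $\rank{C} = \rank{A} + \rank{B}$, reducing $\neg\SI_d$ to a comparison of three matrix ranks over $\F$.

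For the upper bound I would use the fact that matrix rank is computable in the counting-logspace hierarchy: verifying and comparing ranks lies in $\L^{\CeL}$. A deterministic logspace machine forms $A$, $B$, and $C$, obtains $\rank{A}$, $\rank{B}$, $\rank{C}$ through oracle calls to $\CeL$, and accepts iff $\rank{C} = \rank{A} + \rank{B}$; hence $\neg\SI_d \in \L^{\CeL}$. For the parenthetical containment, recall that $\CeL$ is computable by $\NC^2$ circuits (it sits below the determinant, which is in $\NC^2$) and that $\NC^2$ is closed under logspace-Turing reductions, so $\L^{\CeL} \subseteq \NC^2$. Thus $\neg\SI_d \in \NC^2$, and since $\NC^2$ is closed under complement, $\SI_d \in \NC^2$ as well.

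For the hardness I would give a logspace reduction to $\SI_d$ (equivalently, since logspace Turing reductions are closed under complement, to $\neg\SI_d$) from a problem already known to be $\CeL$-hard, namely feasibility of a linear system: given $M$ and $b$, decide whether $Mx = b$ is solvable; this problem is $\CeL$-hard (indeed complete for $\L^{\CeL}$). The key observation is that, for $b \ne 0$, solvability is equivalent to $b$ lying in the column span of $M$, and because $\vsp{b}$ is one-dimensional, $\vsp{b}$ meets the column span nontrivially iff $b$ belongs to it. Therefore, letting $B_b$ be the $d\times d$ matrix whose first row is $b^{\top}$ and whose remaining rows are zero, we have that $\rowspan(M^{\top})$ equals the column span of $M$ and $\rowspan(B_b) = \vsp{b}$, so $\SI_d(M^{\top}, B_b) = 1$ iff $Mx=b$ is solvable. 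The map $(M,b) \mapsto (M^{\top}, B_b)$ is logspace computable (a transpose together with a placement of $b$), so this is in fact a many-one reduction, from which $\CeL$-hardness of $\SI_d$ follows.

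I expect the main obstacle to be matching levels and fields rather than the constructions themselves: one must pin down that rank verification is in $\L^{\CeL}$ and that feasibility is $\CeL$-hard at exactly the counting-logspace level named by $\CeL$, and ensure that the field used by $\SI_d$ (here $\F_2$) agrees with the field over which the $\CeL$-complete linear-algebra problem is posed. Choosing feasibility (a membership question) rather than exact rank as the source of hardness is what keeps the reduction a single clean $\SI_d$ query; reducing from exact rank instead would force an adaptive rank computation out of intersection queries, which is not evidently logspace-bounded.
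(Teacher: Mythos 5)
Your proposal is correct, and the two halves relate to the paper differently. The upper bound is essentially the paper's own argument: the paper also obtains $\rank{M_1}$, $\rank{M_2}$ and the rank of the adjoined matrix via $\CeL$ oracle calls and tests whether the third rank drops below the sum of the first two; your version via the Grassmann formula on the vertically stacked $2d\times d$ matrix $C$ is in fact slightly more careful, since the paper writes the adjoined matrix as $d\times 2d$ (side-by-side), which computes column-span rather than row-span intersection and needs a transpose to be literally right. The hardness proof, however, takes a genuinely different route. The paper reduces from singularity (the canonical $\CeL$-complete problem): it applies the rank-shifting function $g$ of Allender--Beals--Ogihara (their Corollary~2.3) to get $N$ with $\rank{g(M)}=n$ when $\det(M)=0$ and $n-1$ otherwise, and then makes $d$ oracle queries, the $i$-th asking whether the $i$-th row of $N$ lies in the span of the remaining rows (encoded as a pair of matrices), accepting iff some query succeeds --- an inherently Turing reduction, with complement-closure of Turing reductions absorbing the polarity flip. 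You instead reduce many-one from feasibility of linear systems, using that $\vsp{b}$ is one-dimensional so $\SI_d(M^{\top},B_b)=1$ iff $b\in\mathrm{colspan}(M)$. Your route buys a single-query, many-one reduction and in fact hardness for the larger class $\L^{\CeL}$ (for which FSLE is complete), at the cost of invoking that substantially stronger completeness theorem of the same paper rather than just the elementary Corollary~2.3. Two small patches you should make explicit: map $b=0$ instances to a fixed yes-instance (your equivalence needs $b\neq 0$, since $B_0$ yields the zero row-space while $Mx=0$ is always solvable), and pad non-square FSLE instances to $d\times d$. The field-mismatch worry you flag is real --- the ABO results are over the integers while the paper fixes $\F=\F_2$, where the analogous linear-algebra problems sit at the $\oplus\L$ level --- but note that the paper's own proof has exactly the same gap, so this is a caveat against the proposition as stated, not against your argument relative to the paper's.
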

\begin{proof}
We start with the following claim.
\begin{claim}[Corollary 2.3 of \cite{ABO99}]
\label{prop:det-manip}
	Fix an $n \in \N$. 
	There exists a logspace computable function $g:\F^{n \times n} \to
	\F^{n \times n}$ such that for any
	matrix $M$ over $\F^{n \times n}$,
		$det(M) = 0 \implies rank(g(M)) = n$ and 
		$det(M) \ne 0 \implies rank(g(M)) = n - 1$
\end{claim}

Consider the language $L = \set{(M_1, M_2) ~|~  \rowspan(M_1) \cap \rowspan(M_2)
\ne \{0\}, M_1, M_2 \in \F^{d \times d} }$. The reduction is as follows. 
Given an $M \in \F^{d \times d}$, apply $g$ (defined in Claim~\ref{prop:det-manip}) on $M$ to get $N$, and define for
$1 \le i \le d$, $H^i = (M_1^i, M_2^i)$ where $M_1^i$ is the matrix consisting
of $i^{th}$ row of $N$ repeated $n$ times and $M_2^i$ as same as 
$N$ with $i^{th}$ row replaced by all $0$ vectors. For each $1 \le i \le
d$, we make oracle query to $L$ checking if $H^i \in L$ and if all answers are
no, reject otherwise accept.

We now argue the correctness of the reduction.  Suppose $det(M)$ is $0$, then
$N = g(M)$ (by~\cref{prop:det-manip}) must have full rank. Hence for all $1
\le i \le d$, $\rowspan(M_1^i)$ and $\rowspan(M_2^i)$ does not intersect. If
$det(M) \ne 0$, then $N = g(M)$ (by~\cref{prop:det-manip}) must have a
linearly dependent column and hence there is some $i$ for which
$\rowspan(M_1^i)$ and $\rowspan(M_2^i)$ is non-zero. Also the overall
reduction runs in logspace as $g$ is logspace computable.

The upper bound follows by observing that given two $d \times d$ matrices $M_1$ and $M_2$, their individual ranks $r_1$ and $r_2$ can be computed in $\L^{\CeL}$~\cite{ABO99}. Consider the matrix $M$ of size $d \times 2d$ by adjoining $M_1$ and $M_2$. It follows that the $rowspace(M_1) \cap rowspace(M_2) \ne \phi$ if and only if $\rank{M} < r_1+r_2$. The latter condition can also be tested using a query to $\CeL$ oracle.
\end{proof}
\end{atheoremproof}

\subsection{Lower Bounds for Bitwise Decomposable Projective dimension}

From the results of the previous section, it follows that size lower bounds 
for branching programs do imply lower bounds for bitwise decomposable 
projective dimension as well. 
As mentioned in the introduction, the lower bounds 
that Theorem~\ref{thm:bitpdim-converse} can give for bitwise decomposable 
projective dimension are only known to be sub-linear.

To prove super-linear lower bounds for bitwise decomposable projective dimension, we show that Nechiporuk's method~\cite{Nec66} can be adapted to our linear algebraic framework (thus proving Theorem~\ref{thm:bitpdim-lb} from the introduction). The overall idea is the 
following: given a function $f$ and a $\bpdim$ assignment $\phi$,
consider the restriction of $f$ denoted $f_\rho$ where $\rho$ fixes
all variables except the ones in $T_i$ to $0$ or $1$ where $T_i$ is some subset of
variables in the left partition. For different restrictions $\rho$,
we are guaranteed to get at least $c_i(f)$ different functions. We show that 
for each restriction $\rho$, we can obtain an assignment from $\phi$ realizing $f_\rho$.
Hence the number of different $\bpdim$ assignments for $\rho$ 
restricted to $T_i$ is at least the number of sub functions of $f$ which is at 
least $c_i(f)$. Let $d_i$ be the ambient dimension of the assignment when 
restricted to $T_i$. By using the structure of $\bpdim$ assignment, we count 
the number of assignments possible and use this relation to get a lower 
bound on $d_i$. Now repeating the argument with disjoint $T_i$, and by 
observing that the subspaces associated with $T_i$s are disjoint, we 
get a lower bound on $d$ as $d = \sum_i d_i$.

\begin{theorem}
\label{th:pdrl-count}
	For a Boolean function $f:\set{0,1}^n \times \set{0,1}^n \to \set{0,1}$
	on $2n$ variables, let $T_1, \ldots, T_{m}$ are
	partition of variables to $m$ blocks of size
	$r_i$ on the first $n$ variables. Let $c_i(f)$ be the number of
	distinct sub functions of $f$ when restricted to $T_i$, then
	$\bpdim(f) \ge \sum_{i=1}^m \frac{\log c_i(f)}{\log (\log c_i(f))}$
\end{theorem}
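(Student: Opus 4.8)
The proof is an adaptation of Nechiporuk's method to the $\bpdim$ framework. The strategy is to fix a partition $T_1,\dots,T_m$ of the first $n$ variables into blocks and, for each block $T_i$, count the number of distinct $\bpdim$ assignments that can arise from restrictions $f_\rho$ obtained by fixing all variables outside $T_i$. Since restricting a function can only simplify its $\bpdim$ assignment, I would first argue that from a global $\bpdim$ assignment $\phi$ of dimension $d$ one can extract, for each restriction $\rho$ supported on $T_i$, an assignment realizing $f_\rho$ whose ambient dimension is some $d_i$. The crucial combinatorial point is that the number of distinct sub-functions on $T_i$, which is $c_i(f)$, must be at most the number of distinct assignments realizable within dimension $d_i$; bounding the latter by a function of $d_i$ and inverting gives a lower bound on $d_i$. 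Finally, because the blocks $T_i$ are disjoint and Property~\ref{def:prop-disj} of the $\bpdim$ definition guarantees that the subspaces associated with variables in different blocks span independently, the ambient dimensions add up, yielding $d = \sum_i d_i$ and hence the stated bound.

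\textbf{Key steps in order.} First I would set up, for a fixed block $T_i$, the notion of the \emph{local} assignment: the subspaces $U_j^a$ for $j \in T_i$, together with the fixed contribution $\vspan{j \notin T_i}{U_j^{\rho_j}}$ coming from the restriction $\rho$. Second, I would observe that each distinct sub-function $f_\rho|_{T_i}$ forces a distinct local assignment (up to the fixed part), so that $c_i(f)$ is at most the number of tuples of subspaces $(U_j^a)_{j \in T_i, a \in \{0,1\}}$ realizable in dimension $d_i$. Third, using Property~\ref{prop:diff-std-basis}, each such subspace is spanned by a subset of the $\binom{d_i}{2}$ difference vectors $e_k - e_\ell$, so the number of possible choices for all $2 r_i$ subspaces is crudely at most $2^{\binom{d_i}{2} \cdot 2 r_i}$, or more usefully $(d_i)^{O(d_i)}$-type bounds coming from counting subspaces. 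This gives $c_i(f) \le 2^{O(d_i \log d_i)}$, which inverts to $d_i \ge \Omega\!\left(\frac{\log c_i(f)}{\log\log c_i(f)}\right)$. Fourth, I would sum over $i$, invoking the disjointness of the $T_i$ and Property~\ref{def:prop-disj} to conclude that the total dimension $d$ is at least $\sum_i d_i$, which completes the bound.

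\textbf{Main obstacle.} The delicate step is the counting in the third stage combined with the additivity in the fourth. For the counting, one must count assignments \emph{realizing distinct functions} rather than all assignments, and the bound must be tight enough that inverting yields the $\log c_i / \log\log c_i$ form rather than merely $\log c_i / \log d_i$; this requires the number of $\bpdim$ assignments of dimension $d_i$ to be roughly $2^{O(d_i \log d_i)}$, which follows from the fact that each variable-subspace is determined by choosing among at most $q^{d_i^2}$ subspaces but the structural constraints (standard-basis-difference spanning, and the disjointness Property~\ref{def:prop-disj}) cut this down. For the additivity, I expect the genuine subtlety to be verifying that the restriction of the global assignment to block $T_i$ really uses an ambient dimension $d_i$ that is \emph{disjoint} across blocks: this is precisely where Property~\ref{def:prop-disj} of Definition~\ref{def:bitpdim} is essential, since it guarantees $\vspan{(i,a) \in S_1}{U_i^a} \cap \vspan{(j,b) \in S_2}{U_j^b} = \set{0}$ for disjoint index sets, so that the coordinates supporting different blocks do not overlap and $d \ge \sum_i d_i$ holds. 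Making the relationship between $d_i$ and the local counting rigorous, while respecting that the fixed part of the restriction may share coordinates, is the part I would treat most carefully.
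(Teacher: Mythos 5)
There is a genuine gap in your counting step, and it sits exactly where the theorem's strength lies. You claim that ``$c_i(f)$ is at most the number of tuples of subspaces $(U_j^a)_{j \in T_i, a \in \{0,1\}}$ realizable in dimension $d_i$.'' But these local subspaces come from the single global assignment $\phi$ and are \emph{the same for every restriction} $\rho$ supported on $T_i$; the tuple does not vary with the sub-function, so there is no injection from sub-functions to local tuples. What varies with $\rho$ is the contribution of the \emph{fixed} variables, namely $L=\vspan{j:\rho(j)\neq *}{U_j^{\rho(j)}}$ and $R=\vspan{j\in[n]}{V_j^{\rho(n+j)}}$, and the paper's proof hinges on compressing this variation into a single object inside the block space $Z=\vspan{j \in T_i}{U_j^0+U_j^1}$: it sets $\psi_\rho(x)=Z^x$, assigns the unique right-partition vertex the span of the projections $\Pi_{Z^x}\left(R\cap(L+Z^x)\right)$, proves via Property~\ref{property:splittingOfIntVector} (which uses the non-constancy of $f_\rho$) that $\psi_\rho$ still realizes $f_\rho$, and shows in \cref{cl:proj-z} that the right-vertex subspace is exactly $\Pi_Z(R)$. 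Then Property~\ref{prop:BasisSetOfStandardAndDiff} shows $\Pi_Z(R)$ is spanned by at most $d_i$ linearly independent vectors of the form $e_u-e_v$ lying in $Z$, of which there are at most $\binom{d_i}{2}$, so the number of distinct $\Pi_Z(R)$ --- and hence of distinct sub-functions --- is at most $\sum_{k=0}^{d_i}\binom{d_i^2}{k}=2^{O(d_i\log d_i)}$, crucially independent of the block size $r_i$.

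If you instead repair your step in the natural way, by counting \emph{all} assignments of the restricted graph (one subspace per local literal plus one right-vertex subspace), you get $c_i(f)\le 2^{O(r_i d_i\log d_i)}$, which inverts only to $d_i=\Omega\left(\frac{\log c_i(f)}{r_i\log\log c_i(f)}\right)$; the spurious factor $r_i$ destroys the claimed bound and, in the paper's $\SI_d$ application where $r_i=d$ and $\log c_i=\Omega(d^2)$, would degrade the main result from $\Omega(d^3/\log d)$ to $\Omega(d^2/\log d)$. Relatedly, the extraction of the local assignment is not the routine ``restricting simplifies the assignment'' step you suggest: a naive restriction of $\phi$ leaves intersection vectors $w=u+v$ with a component $u\in L$ outside the block space, and one must prove that dropping $u$ (i.e., projecting onto $Z^x$) neither destroys nor creates adjacencies --- that is precisely the content of the splitting property and the realization claim in the paper. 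Your remaining steps are sound: the inversion $d_i=\Omega(\log c_i/\log\log c_i)$ is the paper's, and your use of Property~\ref{def:prop-disj} to get additivity is correct, since it makes the spaces $Z$ of distinct blocks mutually independent, so $\sum_i d_i\le d$.
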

\begin{proof}
	Let $(x,y)$ denote the $2n$ input variables of $f$ and
	$\rho:\set{x_1,\dots,x_n,y_1,\dots,y_n} \to \set{0,1,*}$ be a map that
	leaves only variables in $T_i$ unfixed. Let $\phi$ be a
	$\bpdim$ assignment realizing $f$ and let $G_f(X,Y,Z)$ denote the
	bipartite realization of $f$. Let $\calC=\set{U_i^a}_{i\in [n], a
	\in \set{0,1}}, \calD = \{V_j^b\}_{j \in [n], b \in \{0,1\}}$ be the
	associated collection of subspaces.  Let $\rho$ be a restriction that
	does not make $f_\rho$ a constant and $(x,y) \in \set{0,1}^n \times
	\set{0,1}^n$ which agrees with $\rho$. We use $x,y$ to denote both
	variables as well as assignment.
	 From now on, we fix an $i$ and a partition $T_i$.

	Define 
		$L = \vspan{i\in [n], \rho(i)\neq *}{U_i^{\rho(i)}}$ and
		$R = \vspan{j\in [n]}{V_j^{\rho(n+j)}}$.
	For any $x \in \set{0,1}^n$ that agrees with $\rho$ on the first $n$
	bits, define 
	$Z^x = \vspan{j \in T_i}{U_j^{x_j}}$
	Note that for any $(x,y)$, which agrees with $\rho$, has $\phi(x) =
	L+Z^x$ and $\phi(y)=R$.
 For any $f_{\rho_1} \not\equiv f_{\rho_2}$, $G_{f_{\rho_1}}
	\neq G_{f_{\rho_2}}$. Hence the number of $\bpdim$ assignments is at
	least the number of different sub functions. 
	We need to give a $\bpdim$ assignment for 
	$G_{f_\rho}(V_1,V_2,E)$ where $V_1=\set{x \mid x \text{ agrees with }
	\rho}$, $V_2=\set{y}$ where $y = \rho_{[n+1,\dots,2n]}$ and $E = \set{
	(x,y) | x \in V_1, y \in V_2, f(x,y)=1}$. We use the following property to come up with such an assignment.
\begin{property}
         \label{property:splittingOfIntVector}
	 Let $\rho$ be a restriction which does not make the function
         $f$ constant and which fixes all the variables $y_1,\ldots,y_n$.
         For all such $\rho$ and  $\forall x,y \in \set{0,1}^n$ which
	 agrees with $\rho$, any non-zero $w \in \phi(x) \cap \phi(y)$, 
	 where $w=u+v$ with $u \in L$ and $v \in Z^x$ must satisfy $v \neq
	 \vec{0}$.
       \end{property}
\begin{proof}
       Let there exists an intersection vector $w \in (L+Z^x) \cap R$ with
       $w=u+v$, $u\in L$ and $v \in Z^x$ and $v=\vec{0}$. Since $\vec{0} \in
       Z^{\hat{x}}$ for any $\hat{x}$, $w=u+\vec{0}$ is in $L + Z^{\hat{x}}$
       and $R$. Thus the function after restriction $\rho$ is a constant. This
       contradicts the choice of $\rho$.
       \end{proof}
The assignment $\psi_{\rho}$	for $G_{f_\rho}$ defined as :
  $\psi_{\rho}(x) = Z^x$ and $\psi_{\rho}(y) = \vspan{x \in V_1}{\Pi_{Z^x}\left( R \cap \left( L + Z^x \right)\right)}$
Note that for $(x,y) \in V_1 \times V_2$, $f_\rho(x) = f(x,y)$. Following
claim shows that $\psi_{\rho}$ realize $f_\rho$.  
\begin{claim}
For any $(x,y) \in V_1 \times V_2$,
$f(x,y)=1$ if and only if $\psi_{\rho}(x) \cap \psi_{\rho}(y) \neq \set{0}$.
\end{claim}

\begin{proof}
For any $(x,y) \in X \times Y$, $\phi(x) \cap \phi(y) \neq \set{0}$ if
and only if $f(x,y)=1$. Since $V_1 \subseteq X$ and $V_2 \subseteq Y$, it suffices to prove : $\forall (x,y) \in V_1 \times V_2$, $\psi_{\rho}(x) \cap \psi_{\rho}(y)
\neq \set{0} \iff \phi(x) \cap \phi(y) \ne \set{0}$.

We first prove that $\psi_{\rho}(x) \cap \psi_{\rho}(y) \neq \set{0}$ implies $\phi(x) \cap
\phi(y) \neq \set{0}$. Let $v$ be a non-zero vector in $\psi_{\rho}(x) \cap \psi_{\rho}(y)$.
By definition of $\psi_{\rho}(x)$, $v \in Z^x$.  By definition of $\psi_{\rho}(y)$, there
exists a non-empty $J \subseteq V_1$ such that $v =\sum_{\hat{x} \in
J}v_{\hat{x}}$ where $v_{\hat{x}} \in Z^{\hat{x}}$. Also for every $\hat{x}
\in J$, there exists a $u_{\hat{x}} \in L$ such that $w_{\hat{x}} =
u_{\hat{x}}+ v_{\hat{x}}$ and $w_{\hat{x}} \in R$. Define $u$ to be
$\sum_{\hat{x} \in J} u_{\hat{x}}$. Since each $u_{\hat{x}}$ is in $L$, $u$ is
also in $L$. Hence $w = u+v$ is in $L+Z^x$.
Substituting $u$ with $\sum_{\hat{x} \in J} u_{\hat{x}}$ and $v$ with 
$\sum_{\hat{x} \in J} v_{\hat{x}}$ we get that $w = \sum_{\hat{x} \in J}
u_{\hat{x}}+v_{\hat{x}}=\sum_{\hat{x} \in J} w_{\hat{x}}$. Since each
$w_{\hat{x}} \in R$, $w  \in R$. Hence $w \in R \cap (L + Z^x)$ and $w$ is
non-zero as $J$ is non-empty.

Now we prove that $\phi(x) \cap \phi(y) \neq \set{0}$ implies $\psi_{\rho}(x)
\cap \psi_{\rho}(y) \neq \set{0}$. Let $w$ be non zero vector in $\phi(x)
\cap \phi(y)$ with $w=u+v$ where $u \in L$ and $v \in Z^x$. By
Property~\ref{property:splittingOfIntVector} we have $v \ne
\vec{0}$. By definition $v \in \psi_{\rho}(y)$. Along with $v \in Z^x$,
we get $\psi_{\rho}(x) \cap \psi_{\rho}(y) \ne \set{0}$.
\end{proof}
Let $Z=\vspan{j \in T_i}{U_j^0 + U_j^1}$. We now prove that subspace 
assignment on the only vertex in the right partition of $G_{\rho}$ which 
is $\vspan{x \in V_1}{\Pi_{Z^x}(R)}$ is indeed $\Pi_Z(R)$. 
\begin{claim}
	$ \Pi_Z(R) = \vspan{x \in V_1}{\Pi_{Z^x}(R)}$ \label{cl:proj-z}
\end{claim}
\begin{proof}
We show $ \vspan{x \in V_1}{\Pi_{Z^x}(R)} \subseteq \Pi_Z(R)$.
Note that $\vspan{x \in V_1}{\Pi_{Z^x}(R)} = \vspan{x \in V_1,w
\in R}{\Pi_{Z^x}(w)}$. For an arbitrary $x \in V_1$ and $w \in R$, let
$v = \Pi_{Z^x}(w)$. By definition of $Z^x$ and the fact that
$\set{U_i^b}_{i \in [n],b \in \set{0,1}}$ are disjoint, $\Pi_{Z^x}(w) =
+_{i \in [n], \rho(i)=*} \Pi_{U_i^{x_i}}(w)$. As $Z=\vspan{j \in T_i}
{U_j^0 + U_j^1}$, every $\Pi_{U_i^{x_i}}(w) \in \Pi_Z(R)$. Hence the
span is also in $\Pi_Z(R)$.

Now we show that $\Pi_Z(R) \subseteq \vspan{x \in V_1} {\Pi_{Z^x}(R)}$. Let
$T_i=\set{i_1,\dots,i_k}$. For $1\leq j \leq k$ define $x^{j}$ to be $x+e^{j}$
where $x \in \set{0,1}^n$ agrees with $\rho$ and for any index $i \in [n]$
with $\rho(i)=*$, $x_i=0$ and $e^j \in \set{0,1}^n$ is $0$ at every index
other than $i_j$. Note that for any $j_1 \neq j_2, j_1,j_2 \in T_i$,
$Z^{x^{j_1}} \cap Z^{x^{j_2}} = \set{0}$ by Property~\ref{def:prop-disj} of Definition~ \ref{def:bitpdim}) Also
note that $\vspan{j \in T_i}{Z^{x^j}}= \vspan{j \in T_i}{U_j^{x_j}} =Z$. Hence, $\Pi_Z(R)  = \vspan{j \in
T_i}{\Pi_{Z^{x^j}}(R)}$. But $\vspan{j \in T_i}{\Pi_{Z^{x^j}}(R)} \subseteq
\vspan{x \in V_1}{\Pi_{Z^x}(R)}$. 
\end{proof}

For any $\rho$, which fixes all variables outside $T_i$, $Z$ is the
same. And since there is only one vertex on the right partition, for 
different $\rho, \rho'$, $\Pi_Z(R_\rho) = \Pi_Z(R_{\rho'})$ implies 
$\psi_{\rho} = \psi_{\rho'}$. Hence to count the number of different 
$\psi_{\rho}$'s  for different $f_\rho$'s it is enough to count the number of 
different $\Pi_Z(R)$. To do so, we claim the following property on $\Pi_Z(R)$. 
\begin{property} \label{prop:BasisSetOfStandardAndDiff}
Let $S=\set{e_u-e_v| e_u-e_v \in Z}$. Then there exists a subset $S'$ of $S$ 
such that all the vectors in $S'$ are linearly independent and
$\Pi_Z(R) = \vsp{S'}$.
\end{property}
\begin{proof}
By the property of the $\bpdim$ assignment, $\forall i\in [n]$ and $\forall b \in
\set{0,1}$, $V_i^b=\vsp{F_i^b}$ where $F_i^b$ is a collection of difference of
standard basis vectors. Recall that $R = \vspan{j\in
  [n]}{V_j^{\rho(n+j)}}$.
Let $F =\set{ (e_u-e_v) \mid e_u-e_v \in F_j^{\rho(n+j)}, j\in [n]}$.
 Since projections are linear maps and the fact that
$F_j^{\rho(n+j)}$ spans $V_j^{\rho(n+j)}$ we get that,
  $\Pi_Z(R) = \vspan{w \in F} {\Pi_Z(w)}$.
Since $Z$ is also a span of difference of standard basis vectors,
$\Pi_Z(e_u-e_v)$ is one of $\vec{0}$, $e_u-e_w$ or $e_w-e_v$ where $e_w$
is some standard basis vector in $Z$. Let $S'' = \cup_{e_u-e_v \in F}
\Pi_Z(e_u-e_v)$. Hence $S'' \subseteq S$. Clearly, $\vspan{e_u-e_v \in
  S''}{e_u-e_v} = \Pi_Z(R)$.  Choose $S'$ as a linear independent subset of
$S''$.
\end{proof}

Property~\ref{prop:BasisSetOfStandardAndDiff} along with the fact that
$\Pi_Z(R)$ is a subspace of $Z$, gives us that the number of different
$\Pi_Z(R)$ is upper bounded by number of different subsets $S'$ of $S$
such that $|S'|=d_i$ where $d_i=\dim(Z)$. As $S'$ is a set of
difference of standard basis vectors from $Z$,
$|S'| \leq \binom{d_i}{2}$. Thus the number of different such $S'$ are
at most $\sum_{k=0}^{d_i}\binom{d_i^2}{k}=2^{O(d_i \log d_i)}$.

Hence the number of restrictions $\rho$ (that leaves $T_i$ unfixed)
and leading to different $f_\rho$ is at most $2^{O(d_i \log
d_i)}$. But the number of such restrictions $\rho$ is at least
$c_i(f)$. Hence $2^{O(d_i \log d_i)} \ge c_i(f)$ giving $d_i =
\Omega \left (\frac{\log c_i(f)}{\log (\log c_i(f))}\right)$. Using
$d = \sum_i d_i$ completes the proof.
\end{proof}
Theorem~\ref{th:pdrl-count} gives a super linear lower bound for Element Distinctness function.
From a manuscript by Beame et.al, (\cite{BGMS16}, see also \cite{juknatext}, Chapter 1), we have $c_i(ED_n) \ge 2^{n/2}/n$. Hence applying this count to Theorem~\ref{th:pdrl-count}, we get that $d \ge \Omega\left(\frac{n}{\log n} \cdot \frac{n}{\log n}\right) = \Omega\left (\frac{n^2}{(\log n)^2} \right )$. 

Now we apply this to our context. 
To get a lower bound using framework described above it is enough to
count the number of sub-functions of $\SI_d$.

\begin{lem}\label{Pdcount}
For any $i \in [d]$, there are $2^{\Omega(d^2)}$ different restrictions
$\rho$ of $\SI_d$ which fixes all entries other than $i$th row of the
$d \times d$ matrix in the left partition.
\end{lem}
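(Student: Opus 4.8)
The plan is to exhibit an explicit family of $2^{\Omega(d^2)}$ restrictions whose induced sub-functions are pairwise distinct; since Theorem~\ref{th:pdrl-count} only needs a lower bound on the number of \emph{distinct} sub-functions $c_i(\SI_d)$, this suffices. First I would pin down what a sub-function looks like. Writing $A$ for the left matrix and $B$ for the right one, the block $T_i$ consists of the $d$ entries of the $i$-th row of $A$; a restriction $\rho$ of the type considered fixes every other entry of $A$ (call the resulting matrix, with its $i$-th row zeroed, $A'$) together with all of $B$. Hence the sub-function $f_\rho$ depends only on the free row $a \in \F_2^d$, namely $a \mapsto \SI_d(A,B)$ where $A$ is $A'$ with its $i$-th row replaced by $a$.

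The key simplification is to restrict attention to those $\rho$ for which $A' = 0$, i.e.\ every row of $A$ other than the $i$-th is set to zero. Then $\rowspan(A) = \vsp{a}$, so $\SI_d(A,B) = 1$ exactly when $\vsp{a} \cap \rowspan(B) \neq \set{0}$. Working over $\F_2$, the one-dimensional space $\vsp{a}$ equals $\set{0, a}$ for $a \neq 0$, so this intersection is non-trivial precisely when $a \in \rowspan(B)$, while for $a = 0$ it is trivial. Thus, writing $W := \rowspan(B)$, the sub-function is simply the indicator of nonzero membership in $W$, that is $f_\rho(a) = 1 \iff a \in W \setminus \set{0}$.

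It remains to count. As $B$ ranges over matrices with prescribed row span, $W$ ranges over all subspaces of $\F_2^d$, since any subspace is the row span of some $d \times d$ matrix. Two distinct subspaces $W_1 \neq W_2$ give distinct sub-functions: a vector in their symmetric difference is necessarily nonzero (as $0$ lies in both), and it is accepted by exactly one of the two indicators. Hence the number of distinct sub-functions is at least the number of subspaces of $\F_2^d$, which is at least $\gaussian{d}{d/2}{2} \ge 2^{d^2/4}$ by the bound established in Lemma~\ref{th:subspace-graph-rank}. This yields $2^{\Omega(d^2)}$ distinct sub-functions, as required. The only points needing care are the passage from ``intersects $\vsp{a}$'' to ``contains $a$'', which is precisely where the restriction to $\F_2$ together with the choice $A' = 0$ is used, and the injectivity of the map $W \mapsto f_\rho$; the counting itself merely reuses the Gaussian-coefficient estimate already available. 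I do not expect a serious obstacle, since zeroing the remaining rows collapses the subspace-intersection predicate into a clean membership test whose fibres are exactly the subspaces of $\F_2^d$.
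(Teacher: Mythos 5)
Your proposal is correct and takes essentially the same route as the paper's proof: both zero out all rows of $A$ other than the $i$-th so that $\rowspan(A)=\vsp{a}$ and the intersection predicate collapses to membership of the free row in $\rowspan(B)$, and both separate distinct subspaces $S \neq S'$ via a vector lying in exactly one of them (necessarily nonzero, as you note explicitly and the paper leaves implicit). The only cosmetic difference is that you justify the subspace count via the Gaussian-coefficient bound $\gaussian{d}{d/2}{2} \ge 2^{d^2/4}$, where the paper simply asserts that the number of subspaces of $\F_2^d$ is $2^{\Omega(d^2)}$.
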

\begin{proof}
Fix any $i \in [d]$. 
Let $S$ be a subspace of $\field_2^d$. Define $\rho_S$ to be
$\SI_d(\mathbf{A},B)$ where $B$ is a matrix whose rowspace is
$S$. And $\mathbf{A}$ is the matrix whose all but $i$th row is $0$'s
and $i$th row consists of variables $(x_{i_1},\dots,x_{i_n})$. Thus
for any $v \in \set{0,1}^d$, rowspace of $\mathbf{A}(x)$ is 
$\vsp{v}$.

We claim that for any $S,S' \subseteq_S \field_2^d$ where $S \neq S'$,
$\left( \SI_d  \right)_{\rho_S} \not\equiv \left( \SI_d
\right)_{\rho_S'}$. By definition $\left( \SI_d  \right)_{\rho_S}
\equiv \SI_d(\mathbf{A},B)$ and $\left( \SI_d  \right)_{\rho_S'}
\equiv \SI_d(\mathbf{A},B')$ where $B$ and $B'$ are matrices whose
rowspaces are $S$ and $S'$ respectively. Since $S \neq S'$ there is at
least one vector $v \in \field_2^d$ such that it belongs to only one
of $S,S'$. Without loss of generality let that subspace be $S$. Then
$\SI_d(\mathbf{A}(v),B)=1$ as $v \in S$ where as
$\SI_d(\mathbf{A}(v),B')=0$ as $v \not\in S'$. Hence the number of
different restrictions is at least number of different subspaces of
$\field_2^d$ which is $2^{\Omega(d^2)}$. Hence the proof.
\end{proof}
This completes the proof of Theorem~\ref{thm:bitpdim-lb} from the introduction. 
	This implies that for $\SI_d$, the branching program size
	lower bound is $\Omega \left( \frac{d^2}{\log d} \times d \right) =
	\Omega \left( \frac{d^3}{\log d}\right) = \Omega \left
	(\frac{n^{1.5}}{\log n }\right)$ where $n=2d^2$ is the number of input bits of $\SI_d$.

\section{Standard Variants of Projective Dimension}
In this section, we study two stringent variants of projective dimension for which exponential lower bounds and exact characterizations can be derived. Although these measure do not correspond to restrictions on branching programs, they illuminate essential nature of the general measure. We define the measures and show their characterizations in terms of well-studied graph theoretic parameters. 

\begin{definition}[\textbf{Standard Projective Dimension}]
  \label{defn:spd}
  A Boolean function $f : \set{0,1}^n \times \set{0,1}^n \to
  \set{0,1}$ with the corresponding bipartite graph $G(U,V,E)$ is said
  to have standard projective dimension (denoted by $\spd(f)$) $d$ over field $\field$, if $d$ is the smallest possible dimension for which there exists a vector space $K$ of dimension $d$ over $\field$ with a map $\phi$ assigning subspaces of $K$ to $U \cup V$  such that
  \begin{itemize}
  \item for all $(u,v) \in U \times V$, $\phi(u) \cap \phi(v) \neq \set{0}$ if and only if $(u,v) \in E$.
  \item $u \in U \cup V$, $\phi(u)$ is spanned by a subset of standard basis vectors in $K$.
\end{itemize}     

\end{definition}

In addition to the above constraints, if the assignment satisfies the property that for all $(u,v) \in U \times V$, $\dim\left(\phi(u) \cap \phi(v)\right) \le 1$, we say that the \textit{standard projective dimension is with intersection dimension $1$}, denoted by $\uspd(f)$.
We make some easy observations about the definition itself.

For $N \times N$ bipartite graph $G$ with $m$ edges, consider the 
assignment of standard basis vectors to each of the edges and for any $u \in 
U \cup V$, $\phi(u)$ is the span of the basis vectors assigned to the edges 
incident on $u$. Moreover, the intersection dimension in this case is $1$. 
Hence for any $G$, $\spd(G) \le \uspd(G) \le m$.

Even though 
$\pd(G) \le \spd(G)$, there are graphs for which the gap is exponential. For example, consider the bipartite realization $G$ of $\EQ_n$ with $N = 2^n$. We know $\pd(G) = \Theta(\log N)$ but $\spd(G) 
\ge N$ since each of the vertices associated with the matched edges cannot share any basis vector with vertices in other matched edges.  Hence dimension must be at least $N$.
We show that standard projective dimension of bipartite $G$ is same as that of biclique cover number.
\begin{definition}[Biclique cover number]
For a graph $G$, a collection of complete bipartite graphs defined on $V(G)$
is said to cover $G$ if every edge in $G$ is present in some complete
bipartite graph of the collection.  The size of the smallest collection of
bipartite graph which covers $G$ is its biclique cover number (denoted by
$\bc(G)$). If in addition, we insist that bicliques must be {edge-disjoint}, the parameter is known as \textit{biclique partition number} denoted by $\bp(G)$.
\end{definition}

\begin{theorem}[Restatement of Theorem~\ref{th:spdEqbc}]
For any Boolean function $f$,  $\bc(G_f) = \spd(G_f)$ and $\uspd(G_f) = \bp(G_f)$.
\end{theorem}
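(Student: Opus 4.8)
The plan is to set up a tight correspondence between the standard basis vectors used in a standard projective assignment and the bicliques in a cover, and to read both equalities off from a single elementary fact about coordinate subspaces. The fact is this: if $\phi(u) = \vsp{e_k : k \in I}$ and $\phi(v) = \vsp{e_k : k \in J}$ are subspaces of $K = \F^d$ spanned by subsets of the standard basis, then $\phi(u) \cap \phi(v) = \vsp{e_k : k \in I \cap J}$, so $\phi(u) \cap \phi(v) \neq \set{0}$ precisely when $I \cap J \neq \emptyset$, and moreover $\dim(\phi(u) \cap \phi(v)) = |I \cap J|$. Thus in a \emph{standard} assignment the event ``two subspaces intersect nontrivially'' is purely combinatorial (a shared coordinate), and the dimension of the intersection literally counts the shared coordinates.

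For $\spd(G_f) \le \bc(G_f)$, I would take an optimal cover $B_1, \dots, B_d$ with $d = \bc(G_f)$, where $B_k$ is the complete bipartite subgraph with left set $A_k \subseteq U$ and right set $C_k \subseteq V$, and in $K = \F^d$ set $\phi(u) = \vsp{e_k : u \in A_k}$ and $\phi(v) = \vsp{e_k : v \in C_k}$. By the fact above, $\phi(u) \cap \phi(v) \neq \set{0}$ iff some $B_k$ contains the pair $(u,v)$; since each $B_k$ is a subgraph of $G_f$ this forces $(u,v)\in E$, and since the cover catches every edge the converse holds, so $\phi$ realizes $G_f$ in dimension $d$. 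For the reverse inequality $\bc(G_f) \le \spd(G_f)$, I would start from an optimal standard assignment $\phi$ in dimension $d = \spd(G_f)$ and, for each basis vector $e_k$, define a biclique $B_k$ with $A_k = \set{u \in U : e_k \in \phi(u)}$ and $C_k = \set{v \in V : e_k \in \phi(v)}$, where $e_k \in \phi(w)$ means $e_k$ is among the basis vectors spanning $\phi(w)$. Each $B_k$ is a subgraph of $G_f$ because $u \in A_k, v \in C_k$ yields $e_k \in \phi(u)\cap\phi(v)$ and hence $(u,v)\in E$; and the $B_k$ cover all edges because an edge forces a nonzero intersection, which by the fact supplies a shared coordinate $k$, i.e. $(u,v) \in B_k$.

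The equality $\uspd(G_f) = \bp(G_f)$ uses exactly the same two constructions, with the single extra observation that the number of bicliques containing an edge $(u,v)$ equals the number of indices $k$ with $e_k \in \phi(u)$ and $e_k \in \phi(v)$, which is precisely $\dim(\phi(u)\cap\phi(v))$. Hence the intersection-dimension-$\le 1$ requirement of $\uspd$ is equivalent to each edge being covered at most once; combined with the covering property already argued, it is equivalent to an edge-disjoint partition. Running the cover-to-assignment direction on a $\bp$-optimal partition therefore produces an assignment with intersection dimension $\le 1$ (giving $\uspd(G_f) \le \bp(G_f)$), while running the assignment-to-cover direction on a $\uspd$-optimal assignment produces an edge-disjoint cover (giving $\bp(G_f) \le \uspd(G_f)$).

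I do not anticipate a serious obstacle: the entire argument rests on the remark that coordinate subspaces meet in the coordinate subspace indexed by their common coordinates, after which basis vectors and bicliques are interchangeable bookkeeping. The one point demanding care is the validity of the bicliques in the assignment-to-cover direction, namely that $A_k \times C_k$ contains no non-edge; this is exactly where the \emph{only if} half of the realization condition (non-edges correspond to trivially intersecting subspaces) is needed, and without the standard-basis restriction it would fail. Everything else is the direct translation between a shared coordinate and membership of an edge in a biclique.
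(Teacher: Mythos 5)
Your proposal is correct and takes essentially the same route as the paper's proof: both directions rest on the identical dictionary, sending a vertex to the span of the standard basis vectors indexed by the bicliques containing it, and a basis index $k$ to the biclique of all vertices whose subspace contains $e_k$. The only difference is presentational: where the paper's argument for $\uspd(G_f)=\bp(G_f)$ invokes the rectangle property of bicliques, you derive the intersection-dimension-$1$ condition from the single counting fact that for standard subspaces $\dim\left(\phi(u)\cap\phi(v)\right)$ equals the number of bicliques covering the edge $(u,v)$, a slightly more unified bookkeeping of the same idea.
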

\begin{aproof}{Proof of \cref{th:spdEqbc}}{app:spd-uspd}
($\spd(f) \le \bc(G_f)$) Let $G = G_f$, $t = \bc(G)$ and $A_1.\ldots,
A_t$ be a bipartite cover for $G$. For a vertex $v \in V(G)$, let
$I_v = \left\{ e_i ~|~ v \in A_i \right\}$. We claim that $\{I_v\}_{v \in
V(G)}$ is a valid standard projective assignment. Suppose $I_u \cap I_v \ne
\emptyset$, then there exists an $i$ such that $u,v \in A_i$ and $(u,v) \in
E(A_i)$. Hence $(u,v) \in E(G)$. Also if $(u,v) \in E(G)$, then $\exists~i$
s.t. $(u,v) \in E(A_i)$. By definition of $I_u,I_v$,  $e_i \in I_u \cap I_v$
giving $I_u \cap I_v \ne \emptyset$.

($\bc(G_f) \le \spd(G_f)$) Let $G = G_f, t = \spd(G)$ and $\{I_u\}_{u \in V(G)}$
be the subsets assigned. Consider $G_i = \left\{(u,v) ~|~ i \in I_u
\text{ and } i \in I_v  \right\}$ for $i \in \{1,\ldots,t\}$. 
We claim that the collection of $G_i$ forms a valid bipartite cover of $G$. If
$(u,v) \in E(G)$, we have $I_u \cap I_v \ne \emptyset$. Hence there exists an
$i \in I_u \cap I_v$ and $(u,v) \in E(G_i)$. If $(u,v) \in E(G_i)$ for some
$i$, then $i \in I_u$ and $i \in I_v$ implying $I_u \cap I_v \ne \emptyset$.
This gives that $(u,v) \in E(G)$ from the definition of standard assignment.

($\bp(G_f) \leq \uspd(G_f)$) Let $\phi$ be the
intersection dimension one standard assignment of ambient dimension
$d$ of $f$. For every $e_i \in \F^d$, define the set $C_i = \set{(x,y)
  \mid \phi(x,y)=e_i}$. We claim that $\family{C} = \set{C_i}_{i \in
  [d]}$ is a bipartite partition of $G_f$. Every $C_i$ thus defined is
a biclique, because if $\phi(x,y) = e_i$ then that implies $e_i \in
\phi(x)$ and $e_i \in \phi(y)$.
 Note that for every $(x,y)
\in G_f$, there exists a unique $i \in [d]$ such that
$\phi(x,y)=e_i$. Hence any $(x,y) \in G_f$ belongs to exactly one of
the sets $C_i$ thus implying that $C_i$'s are edge disjoint biclique
covers. Note that any $(x,y) \not\in G_f$ do not belong to any of
$C_i$'s as $\phi(x,y) = \set{0}$.

($\uspd(G_f) \leq \bp(G_f)$) Let
$\family{C}=\set{C_i}_{i \in [d]}$ where $d=\bp(G_f)$ be a biclique
partition cover. We give a standard assignment $\phi$ for $G_f$
defined as follows. For any $x$, $\phi(x) = \vsp{e_i \mid \exists y,
  (x,y) \in C_i}$. By definition $\phi$ is a standard assignment. We
just need to prove that $(x,y) \in G_f$ if and only $\phi(x,y) \neq
\set{0}$ and $\dim{\phi(x,y)}=1$. To prove this we would once again
employ the rectangle property of bicliques, that is if $(x,y') \in
C_i$ and $(x',y) \in C_i$ then so is $(x,y)$.
First we will argue that if there an intersection then it is dimension
$1$. Recall that intersection of two standard subspaces is a standard
subspace. Suppose there is exists $(x,y)$ with $\dim{\phi(x,y)}
>1$. Let $e_j,e_k$ be any two standard intersection vectors in
$\phi(x,y)$. By construction and rectangle property of bicliques, we
get that $(x,y) \in C_j$ and $(x,y) \in C_k$ contradicting the
disjoint cover property. Hence for any $(x,y)$, $\dim{\phi(x,y)} \leq 1$. If $(x,y) \not\in G_f$, then there does not exist an $i$, $(x,y) \in C_i$. But if $\phi(x,y) = e_i$ for some $i \in [d]$, then that implies by rectangle property of bicliques that $(x,y) \in C_i$, a contradiction.
\end{aproof}

\section{Discussion \& Conclusion}

In this paper we studied variants of projective dimension of graphs with improved connection to branching programs. We showed lower bounds for these measures indicating the weakness and of each of the variants. 
A pictorial representation of all parameters is shown in \cref{fig:params}.
\begin{figure} 
\centering
\includegraphics[scale=0.9]{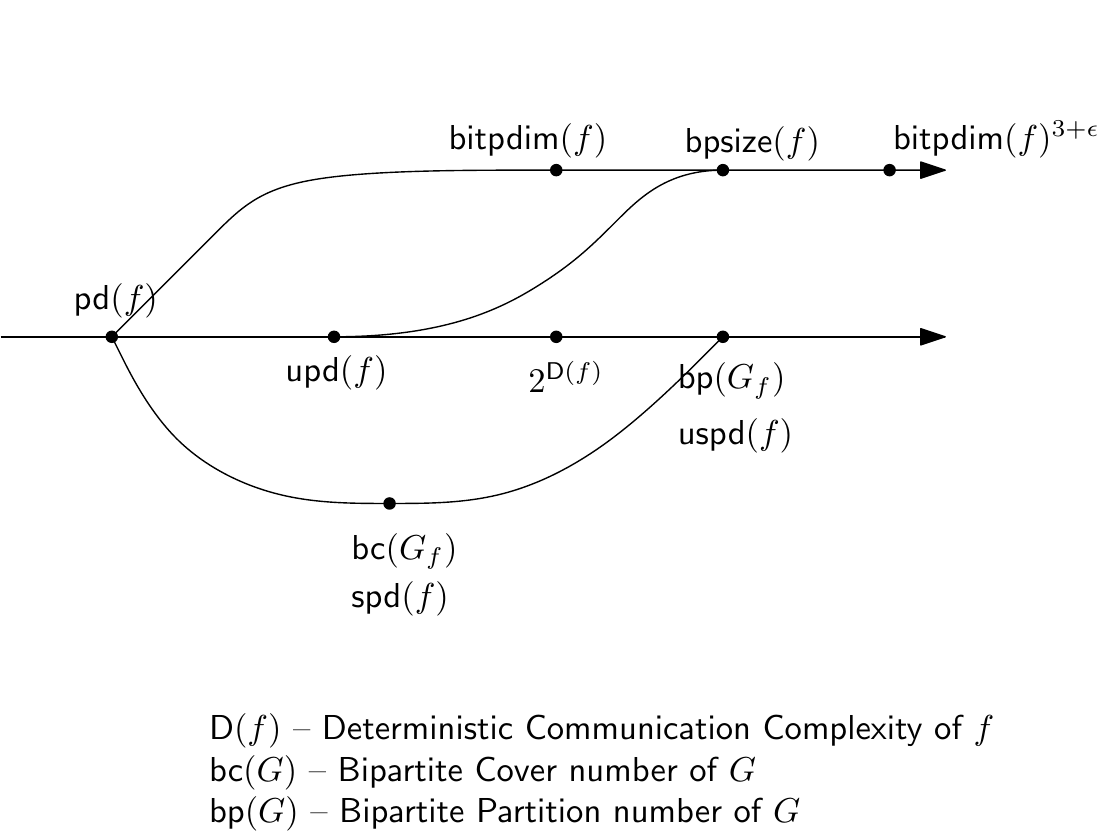}
\caption{Parameters considered in this work and their relations}
\label{fig:params}
\end{figure}

An immediate question that arises from our work is whether $\Omega(d^2)$ lower bound on $\upd(\calP_d)$ is tight. In this direction, since we have established a gap between $\upd(\calP_d)$ and $\pd(\calP_d)$, it is natural to study how $\pd$ and $\upd$ behave under composition of functions, in order to amplify this gap. 

In another direction, we believe that the $\Omega(d^2)$ lower bound on $\upd(\calP_d)$ is not tight. It is natural to study composition of functions to improve this gap. 

The subspace counting based lower bounds for $\bpdim$ that we proved are tight for functions like $\ED_n$. However, observe that under standard complexity theoretic assumptions the $\bpdim$ assignment for $\calP_d$ is not tight. Hence it might be possible to use the specific linear algebraic properties of $\calP_d$ to improve the $\bpdim$ lower bound we obtained for $\calP_d$.
\revsay{We have commented some of the future directions here. Should we uncomment any of them ?}
\vspace{2mm}

\noindent{\bf Acknowledgements:} The authors would like to thank the anonymous reviewers for several suggestions which improved the readability of the paper and specifically for pointing out that the proof of Proposition~\ref{thm:upd-failure} follows from Remark~1.3 in \cite{juknatext}. The authors would also like to thank Noam Nisan for pointing out that the a random walk based algorithm for detecting cycles can improve the the constant in Theorem~\ref{thm:bitpdim-converse} to $3+\epsilon$.\revsay{just "to $3+\epsilon$". 7 is not needed I think}

\bibliographystyle{plain}
\bibliography{references}
\appendix
\section{Proof of \pdrl theorem}
\label{app:pdrl-proof}
In this section, we reproduce the proof of the projective dimension upper bound in terms of branching program size. The proof is originally due to \cite{PR92}, but we supply the details which are essential for the observations that we make.

A deterministic branching program is a directed acyclic graph $G$ with distinct start ($V_0$), accept ($V_+$) and reject ($V_-$) nodes. Accept and reject nodes
 have fan-out zero and are called \emph{sink} nodes. Vertices of the DAG,
 except sink nodes are labeled by variables and have two
outgoing edges, one labeled $0$ and the other labeled $1$. For a vertex
labeled $x_i$, if input gives it a value $b\in \{0,1\}$, then the edge 
labeled $b$ incident to $x_i$ is said to be \emph{closed} and the other edge 
is \emph{open}.
A branching program is said to accept an input $x$ if and only if there is a
path from $V_0$ to $V_+$ along the closed edges in the DAG. A branching
program is said to compute an $f:\zon \to \zo$, if for all $x \in \zon$, $f(x) =
1$ iff branching program accepts $x$.

\begin{theorem}
\label{thm:pdrl-appendix}
Let $f:\{0,1\}^{2n} \to \{0,1\}$ be computed by a branching program $\calB$ 
of size $s$. Let $G_f$ be the bipartite realization of $f$, with respect to any partition of $[2n]$ into two parts and $\F$ be any 
arbitrary field. Then, $\pd_\F(G_F) \le s$
\end{theorem}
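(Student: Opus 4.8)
The plan is to build an explicit subspace assignment $\phi$ realizing $G_f$ in the ambient space $\F^s$, using one standard coordinate per node of the given branching program $\mathcal{B}$. Label the nodes $1,\dots,s$, with $V_0,V_+,V_-$ the start, accept and reject nodes, and to each directed edge $(u,v)$ of $\mathcal{B}$ associate the difference vector $e_u-e_v\in\F^s$. The whole argument rests on one elementary \emph{forest fact}: for a forest, the edge vectors $\{e_u-e_v\}$ are linearly independent, and $e_a-e_b$ lies in their span if and only if $a$ and $b$ lie in the same tree (telescoping along the connecting path). Note every such edge vector has entries summing to $0$, so the same holds for every vector in their span; this coordinate-sum functional is the bookkeeping device I would use throughout.

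First I would record the combinatorial structure of the \emph{closed} edges. Because $\mathcal{B}$ is deterministic, a full input $(x,y)$ closes exactly one outgoing edge at each non-sink node, so the closed edges form an out-degree-$\le 1$ subgraph of a DAG; such a subgraph has no undirected cycle (an undirected cycle would force every vertex on it to have an outgoing cycle-edge, producing a directed cycle), hence is a forest. Moreover, whether an edge querying a first-part variable is closed depends only on $x$, and whether an edge querying a second-part variable is closed depends only on $y$; this is exactly what will let the two subspaces be defined one per side, for the given (arbitrary) partition of $[2n]$. By the forest fact applied to the closed subgraph, $f(x,y)=1$ (i.e. $V_0$ reaches $V_+$ along closed edges) if and only if $e_{V_0}-e_{V_+}$ lies in the span of the closed-edge vectors.

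Next I would set up the split. Calling an edge an $X$-edge or $Y$-edge according to which part of the partition its queried variable lies in, define
$\phi(x)=\mathsf{span}\big(\{e_u-e_v:(u,v)\text{ a closed }X\text{-edge}\}\cup\{e_{V_0}\}\big)$ and
$\phi(y)=\mathsf{span}\big(\{e_u-e_v:(u,v)\text{ a closed }Y\text{-edge}\}\cup\{e_{V_+}\}\big)$.
These depend only on $x$ and only on $y$ respectively, and the ambient dimension stays $s$ since $e_{V_0},e_{V_+}$ are already among the node coordinates. For the forward direction, if $f(x,y)=1$, telescoping the closed accepting path gives $P_X+P_Y=e_{V_0}-e_{V_+}$, where $P_X$ and $P_Y$ collect the closed $X$- and $Y$-edge contributions; then $e_{V_0}-P_X=e_{V_+}+P_Y$ lies in $\phi(x)\cap\phi(y)$, and it is nonzero because its entries sum to $1$ (those of $P_Y$ sum to $0$).

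For the converse I would take a nonzero $w\in\phi(x)\cap\phi(y)$ and write $w=a+\alpha e_{V_0}=b+\beta e_{V_+}$ with $a$, $b$ in the spans of the closed $X$- and $Y$-edges. Comparing coordinate-sums forces $\alpha=\beta$; if $\alpha=\beta=0$ then $a=b$ is a linear relation among all closed-edge vectors, which are independent by the forest fact, forcing $w=0$, a contradiction. Hence $\alpha=\beta\ne 0$, and $e_{V_0}-e_{V_+}=\alpha^{-1}(b-a)$ lies in the span of the closed-edge vectors, so $V_0$ reaches $V_+$ and $f(x,y)=1$. The step to get exactly right is this converse: linear independence of the closed-edge vectors is precisely what blocks spurious intersections on rejecting inputs, while the asymmetric placement of $e_{V_0}$ in $\phi(x)$ and $e_{V_+}$ in $\phi(y)$ is what manufactures a genuine shared vector on accepting inputs. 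Since $\phi$ realizes $G_f$ inside $\F^s$ for an arbitrary field $\F$, this gives $\pd_\F(G_f)\le s$.
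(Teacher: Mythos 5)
Your proof is correct, but it takes a genuinely different route from the paper's. The paper's proof (Appendix~\ref{app:pdrl-proof}) performs surgery on the branching program: it adds a new start node labeled by a variable from the \emph{opposite} partition and merges the accept node with it, so that acceptance becomes equivalent to the existence of a cycle in the closed subgraph; it then detects cycles linear-algebraically, since the closed-edge difference vectors are linearly dependent iff an undirected cycle exists (the merge is exactly what converts the accepting path into a cycle, and the opposite-partition label guarantees both sides contribute to it). You avoid any modification of the program: you keep the closed subgraph a forest, exploit linear independence of forest edge vectors over any field, and instead break the symmetry by adjoining the anchors $e_{V_0}$ to the left subspaces and $e_{V_+}$ to the right ones; the coordinate-sum functional then cleanly separates the accepting case ($\alpha=\beta\ne 0$, forcing $e_{V_0}-e_{V_+}$ into the closed-edge span) from spurious intersections ($\alpha=\beta=0$, killed by independence). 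Both constructions stay in ambient dimension $s$ and work over arbitrary fields. Two remarks. First, make one step explicit: your forest fact yields only that $V_0$ and $V_+$ lie in the same tree, i.e.\ \emph{undirected} connectivity, whereas $f(x,y)=1$ requires a \emph{directed} closed path; this follows from the out-degree-$\le 1$ structure you already recorded, by inducting backwards from $V_+$ --- since $V_+$ is a sink the last path edge enters it, and then each earlier vertex's unique closed out-edge must be its path edge --- but as written you attribute the equivalence entirely to the forest fact, which does not give it. Second, the paper's more elaborate construction is engineered to also yield the extra structure of Proposition~\ref{prop:pdrl-props}: every assigned subspace is spanned by difference vectors $e_i-e_j$ (its Property~3), which later drives the $\upd$ and $\bpdim$ developments. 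Your assignment violates that property (the anchors $e_{V_0}, e_{V_+}$ are standard basis vectors, not differences), though it does retain intersection dimension at most $1$: given two intersection vectors with coordinate sums $\alpha_1,\alpha_2$, the combination $\alpha_2 w_1 - \alpha_1 w_2$ has sum zero and is annihilated by your $\alpha=0$ case, so the intersection is a line. Thus your argument fully proves the stated theorem, while the paper's version is the one that extends to the refined measures.
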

\begin{proof}
It suffices to come up with a subspace assignment $\phi$ such that $G_f(P,Q,E)$ has a
projective representation in $\F$. Associate $u, v$ to be vertices in $P, Q$ respectively. In other words, $u$
corresponds to input variables $\{x_1,x_2,\ldots,x_n\}$ and $v$ corresponds to
$\{x_{n+1}, \ldots, x_{2n}\}$ (corresponding to the given partition). By the acceptance property of branching program
$\calB$,
$f(u\circ v) = 1 \iff \exists \text{ a path from $V_0$ to accept in $\calB$}$.
Since vertices in $G_f$ corresponds to strings in $\{0,1\}^n$, it suffices to
give an assignment $\phi$ such that
\begin{equation}
\text{$\exists$ a path from start to accept in $\calB$} \iff \text{ Basis of }
\phi(u), \phi(v) \text{ are linearly dependent}
\label{eq:pdrl-proof}
\end{equation}
We first assign vectors to vertices of the branching program and 
then use it to come up with a subspace assignment.

Suppose there is a path from $v_0$ to accept in $\calB$. A simple possible 
way to have dependence is to have sum of the vectors assigned to the edges 
of the path telescoping to zero. This can be achieved in the following way.
\begin{enumerate}
\item Modify $\calB$ by adding a new start vertex labeled with a variable
from the other partition from which $v_0$ got its label. For example, if 
$V_0$ is labeled with any of $x_1,x_2,\ldots,x_n$, the new vertex gets its 
label from $\{x_{n+1}, \ldots,x_{2n}\}$ and vice-versa.  
\label{mod:new-vert} Connect both outgoing edges labeled $0,1$ to $V_0$. 
\item Merge the accept node with the new start node. Let $\calC$ be the
resultant graph which is no longer acyclic. Assign standard basis vectors to
each vertex in $\calC$.
\item Assign to each edge $(u,v)$ the vector $e_u-e_v$.
\end{enumerate}
Now, the subspace assignment to a vertex $v \in V(G_f)$ is to take span of all
vectors assigned to closed edges on the input $v$. If there are no closed edges, we
assign the zero subspace. With the above modification, cycles in the graph 
would lead to telescoping of difference vectors (along the cycle edges) to sum 
to zero.

\begin{figure}[htp!]
\includegraphics[scale=0.8]{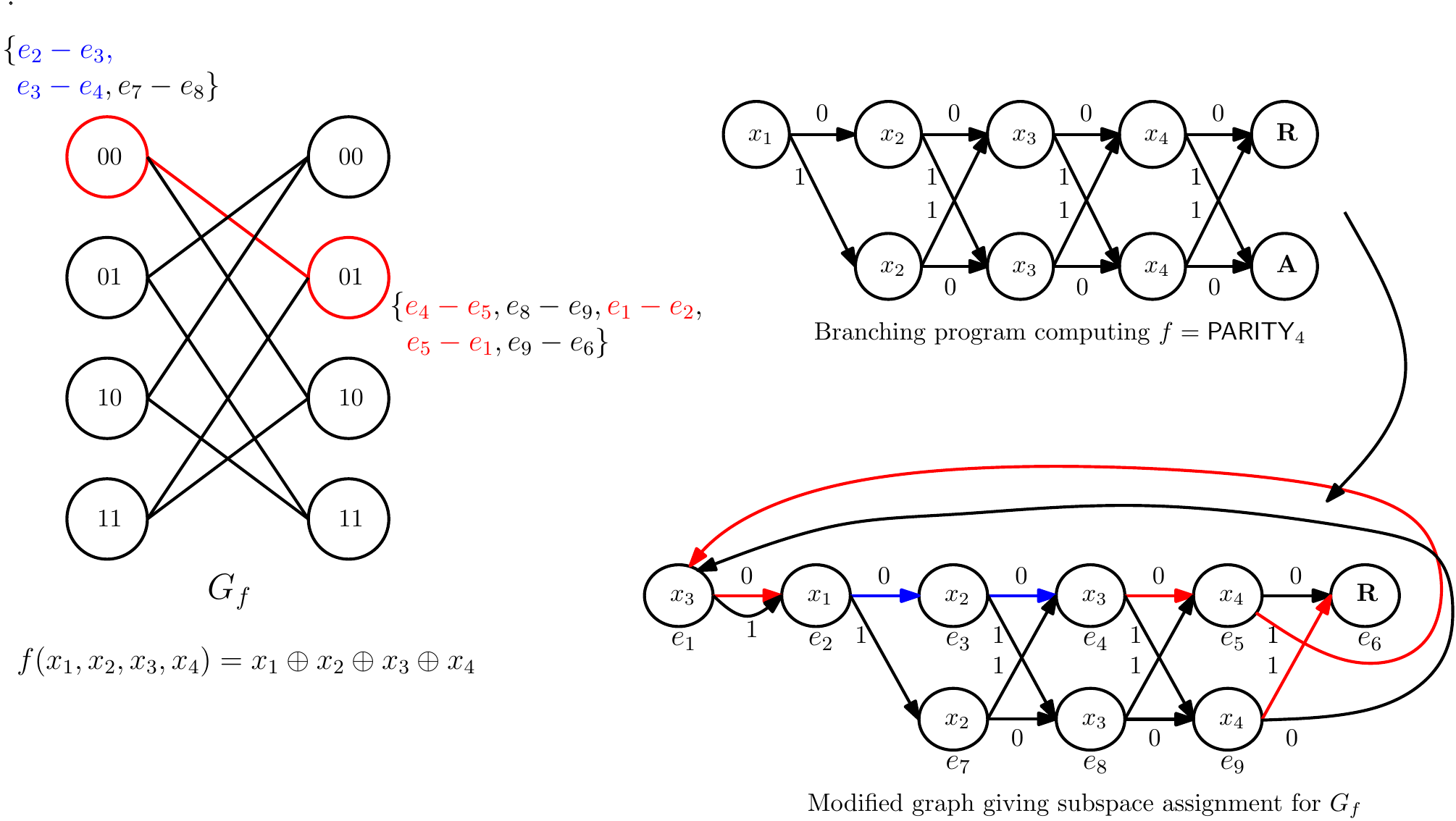}
\caption{\pdrl Theorem applied to a branching program computing $\mathsf{PARITY}_4$}
\label{fig:pdrl-proof}
\end{figure}

Modification~(\ref{mod:new-vert}) is necessary as it is possible to have a 
cycle that does not contain any vertex labeled with $\{x_{n+1}, \ldots,
x_{2n}\}$. Then $\phi(v)$ will be just zero subspace and $\phi(u) \cap \phi(v)$
will be trivial when there is a cycle. It is to avoid this that we add a
vertex labeled with variable from the other partition.

To show that $\phi$ is a valid subspace assignment, it remains to show that
reverse implication of statement~\ref{eq:pdrl-proof} holds. Suppose for $(u,v)
\in E(G_f)$, $\phi(u), \phi(v)$ are linearly dependent. Hence there exists a non
trivial combination giving a zero sum. \[ \sum_{\substack{e \in E(\calC) \\ e =
(u,v)}} \lambda_e (e_u-e_v) = 0, ~~~\lambda_e \in \F ~\forall e \in
E(C)\]
Let $S$ be the non-empty set of edges such that $\lambda_e \ne 0$ and 
$V(S)$ be its set of vertices. Now for any vertex $u \in V(S)$ there must be 
at least two edges containing $u$ because with just a single edge
$\epsilon_u$, which being a basis vector and summing up to zero, must have a zero coefficient which
contradicts that fact that $ e\in S$. This shows that every vertex in $S$ has
a degree $\ge 2$ (in the undirected sense). Hence it must have an undirected
cycle.
\end{proof}
\cref{fig:pdrl-proof} shows the transformations done to the branching program as per the proof of \pdrl Theorem and subspace assignment obtained for $00$ and $01$. The intersection vector for $00$ and $01$ is highlighted in blue on the left partition and in red on the right partition. Notice that this intersection vector corresponds to two halves of a cycle starting from start vertex of the modified BP.
The subspace assignment for each of the vertices is listed in the
table below.
\begin{table}[htp!]
\centering
\begin{tabular}{|c|c||c|c|} \hline
$x_1x_2$ & Assignment & $x_3x_4$ & Assignment \\
\hline \hline
$00$ & $e_2-e_3$, $e_3-e_4$, $e_7-e_8$ &
$00$ & $e_4-e_5$, $e_8-e_9$, $e_1-e_2$, $e_5-e_6$, $e_9-e_1$ \\ \hline
$01$ & $e_2-e_3$, $e_3-e_8$, $e_7-e_4$ &
$01$ & $e_4-e_5$, $e_8-e_9$, $e_1-e_2$, $e_5-e_1$, $e_9-e_6$ \\\hline
$10$ & $e_2-e_7$, $e_3-e_4$, $e_7-e_8$ &
$10$ & $e_4-e_9$, $e_8-e_5$, $e_1-e_2$, $e_5-e_6$, $e_9-e_1$ \\\hline
$11$ & $e_2-e_7$, $e_3-e_8$, $e_7-e_4$ &
$11$ & $e_4-e_9$, $e_8-e_5$, $e_1-e_2$, $e_5-e_1$, $e_9-e_6$ \\ \hline
\end{tabular}
\caption{Subspace assignment for $\mathsf{PARITY_4}$ given by proof of \pdrl theorem}
\end{table}

\section{Bounds on the Gaussian Coefficients}
\label{app:subspace-count}
\begin{proposition}[Lemma 1 of \cite{KK08}]
	For integers, $k \ge 0, n \ge k$.
	\begin{align}
	q^{k(n-k)} \le \gaussian{n}{k}{q} < c_q q^{k(n-k)}
	\end{align}
	where $c_q = \prod_{j=1}^\infty \frac{1}{1-q^{-j}}$. Note that for all
	$q \ge 2$, $c_q \le c_2 = 3.462\ldots$
\end{proposition}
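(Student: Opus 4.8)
The plan is to reduce both inequalities to a factor-by-factor estimate after rewriting the Gaussian coefficient in a convenient telescoped form. First I would start from the definition and cancel the common powers of $q$ from numerator and denominator: since $\gaussian{n}{k}{q} = \prod_{i=0}^{k-1} \frac{q^n - q^i}{q^k - q^i}$, factoring $q^i$ out of both $q^n - q^i = q^i(q^{n-i}-1)$ and $q^k - q^i = q^i(q^{k-i}-1)$ gives $\gaussian{n}{k}{q} = \prod_{i=0}^{k-1} \frac{q^{n-i}-1}{q^{k-i}-1}$. Reindexing by $j = k-i$ turns this into the cleaner product $\gaussian{n}{k}{q} = \prod_{j=1}^{k} \frac{q^{n-k+j}-1}{q^{j}-1}$, in which the ``bulk'' exponent $n-k$ is isolated in every factor. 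The boundary case $k=0$ is handled separately by the definitional conventions, where the lower bound compares $1$ with $1$ and the upper bound compares $1$ with $c_q > 1$.

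For the lower bound I would rewrite each factor as $\frac{q^{n-k+j}-1}{q^j-1} = q^{n-k}\cdot \frac{q^{n-k+j}-1}{q^{n-k+j}-q^{n-k}}$ and observe that, because $n \ge k$ forces $q^{n-k} \ge 1$, the trailing fraction has numerator at least its denominator and so is at least $1$. Hence every factor is at least $q^{n-k}$, and multiplying the $k$ factors together yields $\gaussian{n}{k}{q} \ge (q^{n-k})^k = q^{k(n-k)}$.

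For the upper bound I would instead bound each factor from above by discarding the $-1$ in the numerator: $\frac{q^{n-k+j}-1}{q^j-1} < \frac{q^{n-k+j}}{q^j-1} = q^{n-k}\cdot \frac{1}{1-q^{-j}}$. Taking the product over $j=1,\dots,k$ gives $\gaussian{n}{k}{q} < q^{k(n-k)} \prod_{j=1}^k \frac{1}{1-q^{-j}}$, and since each factor $\frac{1}{1-q^{-j}}$ exceeds $1$, extending the finite product to the convergent infinite product $c_q = \prod_{j=1}^\infty \frac{1}{1-q^{-j}}$ only increases it. This yields $\gaussian{n}{k}{q} < c_q\, q^{k(n-k)}$, with strictness guaranteed by the dropped $-1$.

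There is no serious obstacle here; the only points requiring care are the bookkeeping in the reindexing step and the degenerate $k=0$ case. The numerical assertion $c_2 = 3.462\ldots$ plays no role in the inequalities themselves: I would simply identify $c_q$ as the value of the Euler-type infinite product $\prod_{j=1}^\infty (1-q^{-j})^{-1}$, which converges since $\sum_j q^{-j}$ does, and note that $c_q$ is decreasing in $q$, so that $c_q \le c_2$ for every $q \ge 2$.
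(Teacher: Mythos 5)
Your proof is correct and takes essentially the same route as the paper's: both arguments bound the Gaussian coefficient factor by factor, showing each factor is at least $q^{n-k}$ for the lower bound, and for the upper bound discarding the $-1$'s in the numerator and bounding the resulting product $\prod_{j=1}^{k}(1-q^{-j})^{-1}$ by the infinite product $c_q$. Your telescoped reindexing to $\prod_{j=1}^{k}\frac{q^{n-k+j}-1}{q^{j}-1}$ and the explicit handling of $k=0$, strictness, and monotonicity of $c_q$ in $q$ are just cleaner bookkeeping around the identical core estimates.
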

\begin{proof} 
	Note that since $n \ge k$, $q^n \ge q^k$, we have $\frac{q^n-t}{q^k-t}
	\ge \frac{q^n}{q^k}$ for any $0 \le t < q^k$. Hence the lower bound
	follows.

	For the upper bound, 
	\begin{align*}
	 \gaussian{n}{k}{q} & = \frac{(q^n-1)(q^n-q) \ldots
		 (q^n-q^{k-1})}{(q^k-1) (q^k-q) \ldots (q^k-q^{k-1})} \\
		 & = \frac{q^{nk}}{q^{k^2}} \left [
		 \frac{(1-q^{-n})(1-q^{-(n-1)}) \ldots (1-q^{-(n-k+1)})}
	 	{(1-q^{-1})(1-q^{-2})\ldots (1-q^{-(k-1)})}\right ]
	\end{align*}
	Numerator of the previous expression can be upper bounded by $q^{nk}$
	while denominator can be lower bounded by $q^{k^2}(c_q)^{-1}$. This completes the
	proof.
\end{proof}
\begin{remark}
	This shows that the total number of subspaces of an $n$ dimensional
	space is upper bounded by $2 c_q \sum_{i=0}^{n/2} q^{in} \le 2c_2
	q^{(n^2+n)/2}$.
\end{remark}

\section{Proof of Proposition~\ref{prop:dir-prod-prop}}
\label{app:andlemma}
\begin{proof}
For the reverse direction, suppose there is a non zero vector $w_1$ in $U_1
\cap V_1$ and a non zero vector $w_2$ in $U_2 \cap
V_2$, then $w_1^Tw_2 \in U_1 \otimes U_2$ and $w_1^Tw_2 \in V_1
\otimes V_2$. Hence $w=w_1^Tw_2 \in (U_1 \otimes U_2) \cap (V_1 \otimes
V_2)$.

For the forward direction, let $w$ be a non zero vector in $(U_1
\otimes U_2) \cap (V_1 \otimes V_2)$. Let $\{e_i\}_{i\in [k_1]}$ be the
set of basis vectors for $F^{k_1}$ and $\{\tilde e_j\}_{j\in [k_2]}$
be the set of basis vectors for $F^{k_2}$. Hence for some
$\lambda_{ij}, \mu_{ij} \in \F$, $w$ can be written as, $w = \sum_{i,j}
\lambda_{ij}e_i^T \tilde e_j = \sum_{i,j} \mu_{ij} e_i^T \tilde e_j $. Hence, 
$\sum_{i,j}(\lambda_{ij} - \mu_{ij}) e_i^T \tilde e_j = 0$.
By linear independence of tensor basis,  
\begin{equation}
\lambda_{ij} = \mu_{ij} ~\forall~(i,j) \in [k_1] \times [k_2] 
\label{eq:proof}
\end{equation}

Since $w$ is non-zero, there exists $i_1,j_1$ with $(i_1,j_1) \in [k_1] \times
[k_2]$ such that $\lambda_{i_1j_1} \ne 0$.
Applying equation~\ref{eq:proof}, we get $\mu_{i_1j_1} \ne 0$. Hence for
$(i_1,j_1)$, $\lambda_{i_1j_1}, \mu_{i_1j_1}$ are both non-zero. 

Hence it must be that $(U_1 \otimes U_2)$ and $(V_1 \otimes V_2)$ has the 
vector $e_{i_1}^T\tilde e_{j_1}$. So $e_{i_1}$ must be present in $U_1$ and
$V_1$ and $e_{j_1}$ must be present in $U_2$ and $V_2$ (if not,
$e_{i_1}^T\tilde e_{j_1}$ would not have appeared in the intersection). Hence
$U_1 \cap V_1 \ne \{0\}$ and $U_2 \cap V_2 \ne \{0\}$.
\end{proof}

\end{document}